\theoremstyle{plain}
\newtheorem{thm}{\protect\theoremname}
  \theoremstyle{definition}
  \newtheorem{defn}[thm]{\protect\definitionname}
  \theoremstyle{plain}
  \newtheorem{lem}[thm]{\protect\lemmaname}
   \theoremstyle{plain}
  \newtheorem{prop}[thm]{\protect\propositionname}
  \theoremstyle{remark}
  \newtheorem{rem}[thm]{\protect\remarkname}
  \theoremstyle{remark}
  \newtheorem*{rem*}{\protect\remarkname}
 \theoremstyle{plain}
  \newtheorem{cor}[thm]{\protect\corollaryname}
  \providecommand{\definitionname}{Definition}
  \providecommand{\lemmaname}{Lemma}
  \providecommand{\remarkname}{Remark}
\providecommand{\theoremname}{Theorem}
\providecommand{\propositionname}{Proposition}
\providecommand{\corollaryname}{Corollary}
 \renewcommand{\(}{\left(}
\renewcommand{\)}{\right)}
\newcommand{\rr}{\mathop{{\rm I}\mskip -4.0mu{\rm R}}\nolimits}
\newcommand{\eps}{\varepsilon}
\begin{document}
\title[The role of the scalar curvature in    systems on
Riemannian manifolds]{The role of the scalar curvature in some singularly perturbed coupled elliptic systems on
Riemannian manifolds}
\author{Marco Ghimenti}
\address[Marco Ghimenti] {Dipartimento di Matematica,
  Universit\`{a} di Pisa, via F. Buonarroti 1/c, 56127 Pisa, Italy}
\email{marco.ghimenti@dma.unipi.it.}
\author{Anna Maria Micheletti}
\address[Anna Maria Micheletti] {Dipartimento di Matematica,
  Universit\`{a} di Pisa, via F. Buonarroti 1/c, 56127 Pisa, Italy}
\email{a.micheletti@dma.unipi.it.}
\author{Angela Pistoia}
\address[Angela Pistoia] {Dipartimento SBAI, Universit\`{a} di Roma ``La Sapienza", via Antonio Scarpa 16, 00161 Roma, Italy}
\email{pistoia@dmmm.uniroma1.it}

\begin{abstract}
Given a 3-dimensional Riemannian manifold $(M,g)$, we investigate the existence of positive solutions of the  Klein-Gordon-Maxwell system
$$
\left\{ \begin{array}{cc}
-\varepsilon^{2}\Delta_{g}u+au=u^{p-1}+\omega^{2}(qv-1)^{2}u & \text{in }M\\
-\Delta_{g}v+(1+q^{2}u^{2})v=qu^{2} & \text{in }M 
\end{array}\right.
$$
and  Schr\"odinger-Maxwell system 
$$
\left\{ \begin{array}{cc}
-\varepsilon^{2}\Delta_{g}u+u+\omega uv=u^{p-1} & \text{in }M\\
-\Delta_{g}v+v=qu^{2} & \text{in }M
\end{array}\right.
$$
when $p\in(2,6). $  We prove that if $\eps$ is small enough,   any  {\it stable} critical point  $\xi_0$ of the scalar curvature of $g$  generates a positive solution
$(u_\eps,v_\eps)$ 
to both the systems such that $u_\eps$ concentrates  at $\xi_0$  as $\eps$ goes to zero.
\end{abstract}

\subjclass[2010]{35J60, 35J20, 35B40,58E30,81V10}
\date{\today}
\keywords{Riemannian manifolds, scalar curvature, Klein-Gordon-Maxwell systems, Scrh\"odinger-Maxwell systems, 
Lyapunov-Schmidt reduction}
 \maketitle

\section{Introduction}
Let $(M,g)$ be a smooth compact, boundaryless $3-$dimensional Riemannian
manifold.

Given real numbers $\varepsilon>0,$ $a>0$, $q>0$, $\omega\in(-\sqrt{a},\sqrt{a})$
and $2<p<6$, we consider the following singularly perturbed electrostatic
Klein-Gordon-Maxwell system
\begin{equation}
\left\{ \begin{array}{cc}
-\varepsilon^{2}\Delta_{g}u+au=u^{p-1}+\omega^{2}(qv-1)^{2}u & \text{in }M\\
-\Delta_{g}v+(1+q^{2}u^{2})v=qu^{2} & \text{in }M\\
u,v>0
\end{array}\right.\label{eq:P}
\end{equation}
and the
Schr\"odinger-Maxwell system

\begin{equation}
\left\{ \begin{array}{cc}
-\varepsilon^{2}\Delta_{g}u+u+\omega uv=u^{p-1} & \text{in }M\\
-\Delta_{g}v+v=qu^{2} & \text{in }M\\
u,v>0
\end{array}\right.\label{eq:P-SM}
\end{equation}

 KGM systems and SM systems provide a model for the
description of the interaction between a charged particle of matter
$u$ constrained to move on $M$ and its own electrostatic field $v$.

The Schr\"odinger-Maxwell and the Klein-Gordon-Maxwell systems have
been object of interest for many authors.

\medskip
In the pioneering paper  \cite{BF4} Benci-Fortunato studied the following
 Schr\"odinger-Maxwell  system
\[
\left\{ \begin{array}{cc}
-\Delta u+u+\omega u v=0& \text{in }\Omega\subset\mathbb{R}^{3}\text{ or in }\mathbb{R}^{3}\\
-\Delta v=\gamma u^{2}& u=v=0\text{ on }\partial\Omega
\end{array}\right.
\]
Regarding the system in a semiclassical regime
\begin{equation} 
\left\{ \begin{array}{cc}
-\varepsilon^{2}\Delta u+u+\omega u v=f(u) & \text{in }\Omega\subset\mathbb{R}^{3}\text{ or in }\mathbb{R}^{3}\\
-\Delta  v=\gamma u^{2} & u=v=0\text{ on }\partial\Omega,
\end{array}\right.\label{eq:DW}
\end{equation}
(here $\eps$ is a positive parameter small enough) Ruiz \cite{R} and D'Aprile-Wei \cite{DW1} showed the existence of a family
of radially symmetric solutions 
respectively for $\Omega=\mathbb{R}^{3}$
  or a ball. D'Aprile-Wei \cite{DW2} also proved the existence
of clustered solutions in the case of a bounded domain $\Omega$ 
in $\mathbb{R}^{3}$. Ghimenti-Micheletti  \cite{GMsm} give an estimate on the number of solutions of \eqref{eq:DW}. 

Moreover, when $\eps=1$ we have results of existence and nonexistence of solutions for 
pure power nonlinearities $f(v)=|v|^{p-2}v$, $2<p<6$ or in presence of a
more general nonlinearity  (see \cite{AR,ADP,AP1,BJL,DM1,IV,Ki,PS,WZ}).
In particular, Siciliano \cite{S} proves an estimate on the number of solution for a pure power nonlinearity when $p$ is close to the 
critical exponent.

\medskip
Klein-Gordon-Maxwell systems are widely studied in physics and in mathematical physics (see
for example  \cite{CB,D,KM,MN,M}). In this setting, there are results of existence
and non existence of solutions for subcritical nonlinear terms in
a bounded domain $\Omega$ (see \cite{AP2,BF1,BF3,C,DM2,DP,DPS1,DPS2,Mu}). 

 As far as we know, the first result concerning the Klein-Gordon systems on manifold is due to Druet-Hebey \cite{DH}. 
They  prove uniform bounds and the existence of a solution for    the system (\ref{eq:P}) when $\eps=1,$  $a$ is positive function 
and the exponent $p$ is either  subcritical or critical, i.e. $p\in(2,6].$ In particular, the existence of a solution  in the critical case, i.e. $p=6$,
is obtained provided the function $a$ is suitable   small with respect to the scalar curvature of the metric $g.$
 Recently, Ghimenti-Micheletti \cite{GMkg} give an estimate on the number of low energy solution for the system (\ref{eq:P}) 
in terms of the topology of the manifold. 

\medskip
In this paper, we show that the existence and the multiplicity of solutions of both     systems (\ref{eq:P}) and   (\ref{eq:P-SM})
 in the subcritical case when $\eps$ is small enough
is strictly related to   the geometry of   the manifold $(M,g).$ More precisely, we prove that
 the number  of solutions  to (\ref{eq:P}) or   (\ref{eq:P-SM})
is affected by the number of {\it stable} critical points of the scalar curvature  $S_{g}$  of the metric $g$. 
Indeed, our result reads as follows.
\begin{thm}
\label{main} Assume $K$ is a $C^{1}$-stable critical set of
  $S_{g}.$  
Then there exists $\bar{\varepsilon}>0$ such that for any $\varepsilon\in(0,\bar{\varepsilon})$
the KGM system (\ref{eq:P}) and the SM system (\ref{eq:P-SM}) have a solution $(u_{\varepsilon},v_{\varepsilon})$
such that $u_{\varepsilon}$   concentrates at a point $\xi_0\in K$
as $\varepsilon$ goes to $0$.  More precisely, there exists a point $\xi_\eps\in M$ such that if $\eps$ goes to zero $\xi_\eps\to\xi_0\in K$ 
$$u_\eps- W_{\eps,\xi_\eps}\to 0 \ \text{in}\ H^1_g(M),\ v_\eps \to 0 \ \text{in}\ H^1_g(M)$$
where the function $W_{\eps,\xi_\eps}$ is defined in \eqref{w1}.
  
\end{thm}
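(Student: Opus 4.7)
\medskip
\noindent\textbf{Plan of proof.}

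The natural route is a Lyapunov--Schmidt reduction on a single nonlinear scalar equation obtained by eliminating $v$. For the SM system, observe that for every $u\in H^1_g(M)$ the second equation $-\Delta_g v+v=qu^2$ has a unique solution $v=\Psi(u)$; substituting, one gets a single nonlinear equation for $u$, and the associated energy functional $J_\varepsilon:H^1_g(M)\to\mathbb{R}$ has a Euler--Lagrange equation equivalent to the full system. An analogous reduction works for the KGM system, using the unique $v=\Psi(u)$ solving $-\Delta_g v+(1+q^2u^2)v=qu^2$; standard maximum principle arguments show $0\le q\Psi(u)\le 1$, so the coupling term is well-defined and smooth in $u$. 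In both cases $\Psi(u)$ is quadratic to leading order in $u$, hence its contribution to $J_\varepsilon$ is of lower order than the main bubble energy.

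Next, I would build the ansatz. Let $U$ denote the positive radial ground state of the limit problem $-\Delta U+aU=U^{p-1}$ in $\mathbb{R}^3$ (for the KGM case; for SM the corresponding limit is $-\Delta U+U=U^{p-1}$). Define
\[
W_{\varepsilon,\xi}(x)=\chi_r(d_g(x,\xi))\,U\!\left(\frac{\exp_\xi^{-1}(x)}{\varepsilon}\right),
\]
where $\chi_r$ is a smooth cutoff supported in the injectivity radius. Let $K_{\varepsilon,\xi}=\mathrm{span}\{\partial_{\xi_i}W_{\varepsilon,\xi}\}_{i=1,2,3}$ in $H^1_g(M)$, which is an approximate kernel for the linearization of the reduced equation at $W_{\varepsilon,\xi}$. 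The auxiliary step is then the usual one: for each $\xi\in M$, find a remainder $\phi_{\varepsilon,\xi}\in K_{\varepsilon,\xi}^{\perp}$ so that
\[
\Pi^\perp_{\varepsilon,\xi}\bigl(J_\varepsilon'(W_{\varepsilon,\xi}+\phi_{\varepsilon,\xi})\bigr)=0.
\]
Invertibility of the linearized operator on $K_{\varepsilon,\xi}^\perp$ (uniformly in $\xi$ and $\varepsilon$) follows from the nondegeneracy of $U$ combined with the fact that the Maxwell contribution is a compact perturbation of lower order; the contraction mapping principle then yields $\phi_{\varepsilon,\xi}$, $C^1$ in $\xi$, with $\|\phi_{\varepsilon,\xi}\|=o(\varepsilon^{3/2})$ in a suitable norm.

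It remains to solve the reduced problem. Define $\widetilde{J}_\varepsilon(\xi)=J_\varepsilon(W_{\varepsilon,\xi}+\phi_{\varepsilon,\xi})$; critical points of $\widetilde{J}_\varepsilon$ correspond to solutions of the full system. The core computation is the $C^1$-expansion
\[
\widetilde{J}_\varepsilon(\xi)=\varepsilon^3\bigl(A+\varepsilon^2 B\,S_g(\xi)+o(\varepsilon^2)\bigr),
\]
uniformly in $\xi\in M$, where $A,B$ are explicit nonzero constants depending only on $U$, $p$, $a$, $q$, $\omega$. This expansion is obtained from the metric expansion $\sqrt{|g|}(y)=1-\tfrac{1}{6}\mathrm{Ric}_{ij}(\xi)y^iy^j+O(|y|^3)$ and $g^{ij}(y)=\delta^{ij}+\tfrac{1}{3}R^{\,i}{}_{\,kjl}(\xi)y^ky^l+O(|y|^3)$ in normal coordinates at $\xi$, after the change of variables $y=\varepsilon z$. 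The Maxwell part and the nonlinearity both contribute only to the constant $A$ and to higher-order error; the scalar curvature coefficient $B$ comes entirely from the quadratic kinetic+mass terms, and it is the same as in the single-equation case studied e.g.\ for the scalar-curvature Yamabe-type problems.

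Once this expansion is established, the $C^1$-stability of the critical set $K$ of $S_g$ implies (by a standard degree/local-linking argument) that $\widetilde{J}_\varepsilon$ admits a critical point $\xi_\varepsilon$ with $d_g(\xi_\varepsilon,K)\to 0$. Then $u_\varepsilon=W_{\varepsilon,\xi_\varepsilon}+\phi_{\varepsilon,\xi_\varepsilon}$, $v_\varepsilon=\Psi(u_\varepsilon)$ is the desired solution, $u_\varepsilon-W_{\varepsilon,\xi_\varepsilon}\to 0$ in $H^1_g(M)$ by construction, and $v_\varepsilon\to 0$ in $H^1_g(M)$ because $\Psi(u_\varepsilon)$ is of order $\varepsilon^{3/2}$ in $H^1_g$.

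The main technical obstacle is the $C^1$-expansion of $\widetilde{J}_\varepsilon$ with the correct coefficient for $S_g$ and sharp remainder: one must (i) carry the normal-coordinate Taylor expansion to sufficiently high order, (ii) exploit the radial symmetry and decay of $U$ so that all odd-order terms integrate to zero against $U^2$, $|\nabla U|^2$, $U^p$, leaving the scalar-curvature trace, and (iii) control the derivatives of $\phi_{\varepsilon,\xi}$ in $\xi$ to upgrade the pointwise expansion to a $C^1$-expansion, which is what allows one to locate a critical point from $C^1$-stability alone rather than from Morse-theoretic or topological nondegeneracy.
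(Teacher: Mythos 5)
Your proposal follows the same Lyapunov--Schmidt scheme as the paper: eliminate $v$ via the map $\Psi$, reduce to a single scalar equation, build the ansatz $W_{\varepsilon,\xi}$, solve the auxiliary equation on $K_{\varepsilon,\xi}^\perp$ by contraction, and locate a critical point of the reduced functional using the $C^1$-expansion in terms of $S_g$ together with $C^1$-stability of $K$. The structure, the decomposition, and the role of the Maxwell term (contributing only to the $\xi$-independent constant at order $\varepsilon^2$, so that the $\xi$-dependence at order $\varepsilon^2$ comes from the metric expansion in $J_\varepsilon$) all match.

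One point needs correction: for the KGM system the limit profile is not the ground state of $-\Delta U+aU=U^{p-1}$ but of $-\Delta U+\lambda U=U^{p-1}$ with $\lambda=a-\omega^2$. After substituting $v=\Psi(u)$ the scalar equation reads $-\varepsilon^2\Delta_g u+(a-\omega^2)u+\omega^2 q\Psi(u)(2-q\Psi(u))u=(u^+)^{p-1}$, and since $\Psi(u)\to 0$ the effective mass in the blown-up limit is $a-\omega^2$, not $a$. This is also why the hypothesis $\omega\in(-\sqrt a,\sqrt a)$ enters: it guarantees $\lambda>0$, hence a positive decaying ground state. Your SM limit $-\Delta U+U=U^{p-1}$ is correct, since there the coupling $\omega uv$ vanishes in the limit without shifting the mass.
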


We recall the  the definition
of $C^{1}$-stable critical set.
\begin{defn}\label{c1stable}
Let $f\in C^{1}(M,\mathbb{R})$. We say that $K\subset M$ is a $C^{1}$-stable
critical set of $f$ if $K\subset\left\{ x\in M\ :\ \nabla_{g}f(x)=0\right\} $
and for any $\mu>0$ there exists $\delta>0$ such that, if $h\in C^{1}(M,\mathbb{R})$
with
\[
\max_{d_{g}(x,K)\le\mu}|f(x)-h(x)|+|\nabla_{g}f(x)-\nabla_{g}h(x)|\le\delta,
\]
then $h$ has a critical point $\xi$ with $d_{g}(\xi,K)\le\mu$.
Here $d_{g}$ denotes the geodesic distance associated to the Riemannian
metric $g$.\end{defn}

It is easy to see that if $K$ is the set of the strict local minimum
(or maximum) points of $f$, then $K$ is a $C^{1}$-stable critical
set of $f$. Moreover, if $K$ consists of nondegenerate critical
points, then $K$ is a $C^{1}$-stable critical set of $f$.

By Theorem \ref{main} we deduce that multiplicity of solutions of (\ref{eq:P})
and (\ref{eq:P-SM}) is strictly related to stable critical points
of the scalar curvature. At this aim, it is useful to point out that Micheletti-Pistoia \cite{MP2} proved that, 
generically with respect to the metric $g$, the scalar curvature $S_g$ is a Morse function on the manifold $M.$ More precisely, they proved  
\begin{thm} Let $\mathscr{M}^{k}$ be the set
of all $C^{k}$ Riemannian metrics on $M$ with $k\ge3$. 
The set
\[
\mathscr{A}=\left\{ g\in\mathscr{M}^{k}\ :\ \text{all the critical points of }S_{g}\text{ are non degenerate}\right\}
\]
is a open dense subset of $\mathscr{M}^{k}$.  
\end{thm}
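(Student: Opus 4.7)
The plan is to prove that $\mathscr{A}$ is open and dense in $\mathscr{M}^k$ by treating the two properties separately. Openness follows from a standard compactness plus implicit function theorem argument, while density relies on an infinite-dimensional transversality theorem together with an explicit computation of the linearization of the scalar curvature as a function of the metric.

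For the openness part, fix $g_0 \in \mathscr{A}$. Since $S_{g_0}$ is a Morse function on the compact manifold $M$, its critical set $\{p_1, \ldots, p_N\}$ is finite. The maps $g \mapsto \nabla_g S_g$ and $g \mapsto \mathrm{Hess}_g S_g$ are continuous from $\mathscr{M}^k$ (with the $C^k$ topology, $k \ge 3$) into the appropriate $C^{k-3}$ sections. Pick small disjoint coordinate balls $U_i$ centered at the $p_i$; on the compact set $M \setminus \bigcup U_i$ one has $|\nabla_{g_0} S_{g_0}| \ge c > 0$, and by continuity the same lower bound (up to a factor) holds for all $g$ in a sufficiently small $C^k$-neighborhood, so no new critical points appear there. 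Inside each $U_i$, since $\mathrm{Hess}_{g_0} S_{g_0}(p_i)$ is invertible, the implicit function theorem yields a unique critical point $p_i(g)$ near $p_i$ whose Hessian remains invertible by continuity. Hence every nearby $g$ lies in $\mathscr{A}$.

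For density I would apply the Abraham transversality theorem in Banach-manifold form. Introduce the evaluation map
\begin{equation*}
F : \mathscr{M}^k \times M \to T^*M, \qquad F(g, x) := dS_g(x),
\end{equation*}
and observe that the slice $F_g := F(g,\cdot)$ is transverse to the zero section of $T^*M$ precisely when every critical point of $S_g$ is non-degenerate, that is, when $g \in \mathscr{A}$. Abraham's theorem then reduces density of $\mathscr{A}$ to checking that $F$ itself is transverse to the zero section. At a point $(g_0, x_0)$ with $dS_{g_0}(x_0) = 0$, this amounts to a local surjectivity statement: for every covector $\alpha \in T^*_{x_0} M$ there must exist a compactly supported symmetric $2$-tensor $h$ (with support arbitrarily close to $x_0$) such that the derivative at $t = 0$ of the map $t \mapsto dS_{g_0 + th}(x_0)$ equals $\alpha$.

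The main obstacle is verifying this local surjectivity explicitly. The linearization of $g \mapsto S_g$ at $g_0$ in direction $h$ is the classical second-order operator
\begin{equation*}
\dot{S}(h) = -\Delta_g (\mathrm{tr}_g h) + \mathrm{div}_g \mathrm{div}_g h - \langle h, \mathrm{Ric}_g \rangle_g ,
\end{equation*}
and one needs to show that by suitably prescribing the $2$-jet of $h$ at $x_0$ (while keeping $\|h\|_{C^k}$ small) one can realize any desired values of the first-order derivatives of $\dot{S}(h)$ at $x_0$. This is a finite-dimensional linear-algebra check exploiting the freedom in the higher-order jet of $h$. Once it is carried out, Abraham's theorem yields that the set of $g$ for which $F_g$ is transverse to the zero section is residual, and hence dense in the Baire space $\mathscr{M}^k$. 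Combined with the already-established openness, this proves that $\mathscr{A}$ is open and dense.
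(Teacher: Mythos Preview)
The paper does not prove this theorem at all: it is merely quoted as a result of Micheletti--Pistoia \cite{MP2}, so there is no ``paper's own proof'' to compare against. Your task here was therefore vacuous as far as comparison goes.

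That said, your outline is the standard route for such genericity statements and is, in broad strokes, exactly the strategy carried out in \cite{MP2}: openness by compactness plus the implicit function theorem, and density by an Abraham--type parametric transversality argument applied to $(g,x)\mapsto dS_g(x)$, with the crux being the surjectivity of the linearized map. The one substantive gap in your sketch is the step you yourself flag: the explicit verification that, by prescribing the $2$-jet of a symmetric $2$-tensor $h$ at $x_0$, one can realize an arbitrary value of $d\dot S(h)(x_0)$. This is the entire analytic content of the argument, and ``a finite-dimensional linear-algebra check'' is a description, not a proof; in \cite{MP2} this is done by working in normal coordinates and exhibiting concrete local perturbations $h$ that move $\partial_i S_g(x_0)$ independently. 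Without that computation your argument is a plausible plan but not a proof. A minor regularity caveat: with $g\in C^k$ one has $S_g\in C^{k-2}$, so for the Hessian of $S_g$ to be continuous (needed both to speak of non-degeneracy and for your openness argument) one really wants $k\ge 4$; the statement as quoted with $k\ge 3$ is a bit delicate and you should be explicit about how you handle it.
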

Then generically with respect to the metric $g$, the  critical points of the scalar curvature $S_g$ 
 are nondegenerate, in a finite number and at least $  P_{1}(M)$
where $P_{t}(M)$ is the Poincar\'e polynomial of $M$ in the $t$ variable. Therefore, we can conclude as follows.

\begin{cor} Generically with respect to the metric $g,$ if $\eps$ is small enough the KGM system (\ref{eq:P}) and the SM system (\ref{eq:P-SM})  
 have at
least $P_{1}(M)$  positive solutions $(u_{\varepsilon},v_{\varepsilon})$
such that $u_{\varepsilon}$
concentrates at one nondegenerate critical point of the scalar curvature $S_{g}$ as $\varepsilon$
goes to $0$.\end{cor}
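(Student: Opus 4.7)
The plan is to combine the genericity result stated just above with the classical Morse inequalities and Theorem \ref{main}. First, I would fix any metric $g$ in the open dense set $\mathscr{A}$; by definition of $\mathscr{A}$ every critical point of the scalar curvature $S_g$ is nondegenerate. Since $M$ is compact, nondegenerate critical points are isolated, so $S_g$ has only finitely many critical points $\xi_1,\dots,\xi_N\in M$.

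Next, I would invoke the Morse inequalities: for a Morse function on a compact boundaryless manifold $M$ the number of critical points of index $k$ is at least $b_k(M)$, and in particular the total number of critical points satisfies $N\ge\sum_{k\ge 0}b_k(M)=P_1(M)$. Each singleton $K_j:=\{\xi_j\}$ is a $C^1$-stable critical set of $S_g$: this is the remark made just after Definition \ref{c1stable} that nondegenerate critical points are $C^1$-stable, which can be verified directly by applying the implicit function theorem to $\nabla_g S_g$ at $\xi_j$, whose differential is invertible by nondegeneracy.

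Applying Theorem \ref{main} to each $K_j$ would then yield a threshold $\bar\eps_j>0$ and, for every $\eps\in(0,\bar\eps_j)$, a positive solution $(u^j_\eps,v^j_\eps)$ of (\ref{eq:P}) (resp.\ of (\ref{eq:P-SM})) such that $u^j_\eps$ concentrates at $\xi_j$ as $\eps\to 0$. Setting $\bar\eps:=\min_{j=1,\dots,N}\bar\eps_j>0$, all $N\ge P_1(M)$ such families exist simultaneously for $\eps\in(0,\bar\eps)$. The remaining step is to check that these solutions are pairwise distinct for $\eps$ small enough: the approximation $u^j_\eps-W_{\eps,\xi^j_\eps}\to 0$ in $H^1_g(M)$ with $\xi^j_\eps\to\xi_j$ provided by Theorem \ref{main} implies that for $\eps$ sufficiently small the mass of $u^j_\eps$ is concentrated in a small geodesic ball around $\xi_j$, and since the points $\xi_1,\dots,\xi_N$ are distinct these disjoint concentration sites force the solutions to be distinct. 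The only mildly delicate point is this distinctness argument, but it is a routine consequence of the concentration statement in Theorem \ref{main}.
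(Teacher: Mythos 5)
Your proof is correct and follows essentially the same route the paper intends (the paper leaves the corollary's proof implicit after noting that generically $S_g$ is a Morse function): you combine the Micheletti--Pistoia genericity theorem with the Morse inequalities $N\ge\sum_k b_k(M)=P_1(M)$, the remark that nondegenerate critical points are $C^1$-stable, and Theorem \ref{main} applied to each such point. Your added discussion of pairwise distinctness of the resulting families, justified by disjoint concentration sites, is a correct and useful detail that the paper omits.
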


\medskip
The proof of our results relies on a very well known Ljapunov-Schmidt reduction.
 In Section \ref{var-set} we recall some known results, we write the approximate solution, we sketch  the proof of the Ljapunov Schmidt procedure and we prove Theorem \ref{main}. In Section \ref{redux} we reduce the problem to a finite dimensional one, while in  Section \ref{red-exp} we study the reduced problem.
In  Appendix A we give some important estimates.
All the proofs are given for the system (\ref{eq:P}), but it is clear that up some minor modifications they also hold true  for the system
  (\ref{eq:P-SM}).

\section{Preliminaries and scheme of the proof of Theorem \ref{main}} \label{var-set}
\subsection{The function $\Psi$}
First of all, we reduce the system to a single equation.
In order to overcome the problems given by the competition between
$u$ and $v$, using an idea of Benci and Fortunato \cite{BF1}, we introduce
the map $\Psi:H_{g}^{1}(M)\rightarrow H_{g}^{1}(M)$ defined by the
equation
\begin{equation}
-\Delta_{g}\Psi(u)+\Psi(u)+q^{2}u^{2}\Psi(u)=qu^{2}.\label{eq:e1}
\end{equation}
It follows from standard variational arguments that $\Psi$ is well-defined in $H^1_g(M)$ as soon
as $\lambda:=  a-\omega^2>0,$ i.e.  $\omega\in]-\sqrt a,+\sqrt a[.$

By the maximum principle and by regularity theory is not difficult
to prove that
\begin{equation}\label{psipos}
0<\Psi(u)<1/q\ \hbox{for all $u$ in $H_{g}^{1}(M)$}.
\end{equation}
Moreover, it holds true that
\begin{lem}
\label{lem:e1}The map $\Psi:H_{g}^{1}(M)\rightarrow H_{g}^{1}(M)$
is $C^{1}$ and its differential $\Psi'(u)[h]=V_{u}[h]$ at $u$ is
the map defined by
\begin{equation}
-\Delta_{g}V_{u}[h]+V_{h}[h]+q^{2}u^{2}V_{u}[h]=2qu(1-q\Psi(u))h\text{ for all }h\in H_{g}^{1}(M).\label{eq:e2}
\end{equation}
Also, we have
\[
0\le\Psi'(u)[u]\le\frac{2}{q}
\]

\end{lem}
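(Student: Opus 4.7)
The plan is to apply the implicit function theorem to the auxiliary map $F\colon H^1_g(M)\times H^1_g(M)\to (H^1_g(M))^{\prime}$ defined weakly by
$$
\langle F(u,w),\varphi\rangle=\int_M\bigl(\nabla_g w\cdot\nabla_g\varphi+w\varphi+q^2u^2w\varphi-qu^2\varphi\bigr)\,d\mu_g,
$$
so that $\Psi(u)$ is characterized by $F(u,\Psi(u))=0$. First I would verify that $F$ is of class $C^1$ (indeed smooth): the pieces linear in $w$ are obviously regular, while the trilinear form $(u,w,\varphi)\mapsto\int u^2w\varphi$ and the quadratic term $\int u^2\varphi$ are continuous thanks to the Sobolev embedding $H^1_g(M)\hookrightarrow L^6_g(M)$ available in dimension three.

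Next I would compute the partial derivative in $w$, which equals the bilinear form of $-\Delta_g+1+q^2u^2$. Since $1+q^2u^2\ge 1$, this operator is coercive on $H^1_g(M)$, hence an isomorphism onto $(H^1_g(M))^{\prime}$ for every $(u,w)$. The implicit function theorem therefore produces a $C^1$ map $\Psi$ and its differential $V_u[h]=\Psi^\prime(u)[h]$ satisfies
$$
\partial_uF(u,\Psi(u))[h]+\partial_wF(u,\Psi(u))[V_u[h]]=0.
$$
A direct calculation gives $\partial_uF(u,w)[h]=2q^2uhw-2quh=-2qu(1-qw)h$, so at $w=\Psi(u)$ the displayed identity is exactly the equation \eqref{eq:e2} stated in the lemma.

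For the two-sided bound, I would test the equation with $h=u$, namely
$$
-\Delta_gV_u[u]+(1+q^2u^2)V_u[u]=2qu^2\bigl(1-q\Psi(u)\bigr).
$$
Because $0<\Psi(u)<1/q$ by \eqref{psipos}, the right-hand side is nonnegative, and testing against $(V_u[u])_-$ (together with the coercivity of the left-hand operator) yields $V_u[u]\ge 0$. For the upper estimate, I would exploit the comparison function $2\Psi(u)$: multiplying \eqref{eq:e1} by $2$ gives $-\Delta_g(2\Psi(u))+(1+q^2u^2)(2\Psi(u))=2qu^2$, so the difference $w_*:=2\Psi(u)-V_u[u]$ solves
$$
-\Delta_gw_*+(1+q^2u^2)w_*=2q^2u^2\Psi(u)\ge 0.
$$
The same maximum principle argument gives $w_*\ge 0$, hence $V_u[u]\le 2\Psi(u)\le 2/q$.

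I do not expect genuine obstacles: the only point that requires a moment of care is the $C^1$ dependence of the nonlinear term $u^2w$ on $(u,w)\in H^1_g(M)\times H^1_g(M)$, which is handled uniformly through the 3-dimensional Sobolev embedding into $L^6$, and the sign analysis, which relies solely on the pointwise bound \eqref{psipos} already established by the maximum principle.
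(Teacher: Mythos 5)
The paper does not present its own proof of this lemma; it defers to Druet--Hebey \cite{DH}, so there is no in-text argument to compare against. That said, your proposal is a complete and correct proof, and it follows the route one would expect: formulate $\Psi$ implicitly via $F(u,w)=0$, check $C^1$ regularity of $F$ using the three-dimensional embedding $H^1_g(M)\hookrightarrow L^6_g(M)$, verify that $\partial_wF=-\Delta_g+1+q^2u^2$ is an isomorphism by coercivity, and read off equation \eqref{eq:e2} from the implicit-function identity $\partial_uF+\partial_wF[V_u[h]]=0$ (your computation also corrects the obvious typo $V_h[h]\to V_u[h]$ in the statement). The bound $0\le\Psi'(u)[u]\le 2/q$ via the weak maximum principle is handled cleanly: nonnegativity comes from testing against $(V_u[u])_-$ together with $0<\Psi(u)<1/q$, and the upper bound from the comparison function $2\Psi(u)$, whose equation subtracted from that of $V_u[u]$ leaves the nonnegative source $2q^2u^2\Psi(u)$. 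No gaps.
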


\begin{lem}
\label{lem:e2}The map $\Theta:H_{g}^{1}(M)\rightarrow\mathbb{R}$
given by
\[
\Theta(u)=\frac{1}{2}\int_{M}(1-q\Psi(u))u^{2}d\mu_{g}
\]
is $C^{1}$ and
\[
\Theta'(u)[h]=\int_{M}(1-q\Psi(u))^{2}uhd\mu_{g}\ \hbox{for any $u,h\in H_{g}^{1}(M)$}
\]

\end{lem}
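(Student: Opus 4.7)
The plan is a direct chain-rule computation paired with a self-adjointness trick applied to the PDEs \eqref{eq:e1} and \eqref{eq:e2}. Once the derivative of $\Theta$ is written down naively, the stated formula comes from eliminating the term containing $V_u[h]=\Psi'(u)[h]$ using the symmetry between the two defining PDEs.

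For the $C^1$ regularity, I would argue as follows. By Lemma \ref{lem:e1}, $\Psi$ is $C^1$ from $H^1_g(M)$ to $H^1_g(M)$, and in dimension three the Sobolev embedding $H^1_g(M)\hookrightarrow L^6_g(M)$ makes the Nemytskii map $u\mapsto u^2$ a $C^1$ map into $L^3_g(M)$. Combining these, the integrand $(1-q\Psi(u))u^2$ depends in a $C^1$ fashion on $u\in H^1_g(M)$ as an element of $L^1_g(M)$, and integrating preserves $C^1$ regularity.

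The main step is the derivative formula. The chain rule gives
\[
\Theta'(u)[h] \;=\; \int_M (1-q\Psi(u))\, uh\, d\mu_g \;-\; \frac{q}{2}\int_M u^2\, V_u[h]\, d\mu_g .
\]
The main obstacle is eliminating $V_u[h]$ from the second integral; this is really the whole content of the lemma. The trick is that \eqref{eq:e1} and \eqref{eq:e2} share the same self-adjoint operator $-\Delta_g+1+q^2 u^2$. Testing \eqref{eq:e1} against $V_u[h]$ and testing \eqref{eq:e2} against $\Psi(u)$ produces, after integration by parts, the identical bilinear expression $\int_M\bigl(\nabla_g\Psi(u)\cdot\nabla_g V_u[h]+\Psi(u)V_u[h]+q^2u^2\Psi(u)V_u[h]\bigr)\,d\mu_g$ on the left-hand side. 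Equating the right-hand sides yields the key identity
\[
\int_M qu^2\, V_u[h]\, d\mu_g \;=\; \int_M 2qu(1-q\Psi(u))\Psi(u)\, h\, d\mu_g .
\]

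Substituting this back into the chain-rule expression gives
\[
\Theta'(u)[h] \;=\; \int_M (1-q\Psi(u))\,uh\, d\mu_g \;-\; \int_M q\Psi(u)(1-q\Psi(u))\, uh\, d\mu_g \;=\; \int_M (1-q\Psi(u))^2\, uh\, d\mu_g,
\]
which is precisely the claim. I do not expect serious difficulty beyond recognizing the self-adjoint duality between \eqref{eq:e1} and \eqref{eq:e2}; once that is seen, the rest is a one-line manipulation.
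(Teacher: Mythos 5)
Your argument is correct, and it is exactly the standard self-adjointness (or ``Benci--Fortunato'') duality that underlies this lemma in the literature. The paper itself does not reprove Lemma~\ref{lem:e2}; it simply refers to Druet--Hebey \cite{DH}, and the mechanism there is the same one you identify: test \eqref{eq:e1} against $V_u[h]=\Psi'(u)[h]$ and \eqref{eq:e2} against $\Psi(u)$, use that both equations share the self-adjoint operator $-\Delta_g+1+q^2u^2$, equate the resulting right-hand sides to get $\int_M qu^2\,V_u[h]\,d\mu_g=\int_M 2qu(1-q\Psi(u))\Psi(u)\,h\,d\mu_g$, and substitute into the naive chain-rule expression for $\Theta'$. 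Your $C^1$ justification (Nemytskii map $u\mapsto u^2$ into $L^3_g$ via $H^1_g\hookrightarrow L^6_g$, combined with the $C^1$ regularity of $\Psi$ from Lemma~\ref{lem:e1}) is also adequate. As a minor aside unrelated to your proof, note that the displayed formula for $\tfrac12 G'_\varepsilon(u)[\varphi]$ in the paper just below Lemma~\ref{lem:e2} has a sign typo; the computation you give is consistent with equation \eqref{p} and with $\Theta'(u)[h]=\int_M(1-q\Psi(u))^2uh\,d\mu_g$.
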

For the proofs of these results we refer to \cite{DH}.

Now, we introduce the functionals   $I_\eps,J_\eps,G_\eps:{\rm H}^1_g(M)\rightarrow\rr$  
\begin{equation}\label{ieps}
I_\eps(u)=J_\eps(u)+{\omega^2\over2}G_\eps(u),\end{equation}
where
\begin{equation}\label{jieps}
J_\eps(u):={1\over\eps^3}\int\limits_M\left[{1\over2}\eps^2|\nabla_gu|^2+
{\lambda\over2}  u ^2-F\left(u \right) \right]d\mu_g\end{equation}
and
\begin{equation}\label{geps}
G_\eps(u):={1\over\eps^3}q\int\limits_M\Psi(u) u^2 d\mu_g.\end{equation}
Here $F(u):={1\over p} \left(u^+\right)^p,$ so that $F'(u)=f(u):=(u^+)^{p-1}.$
By Lemma \ref{lem:e2} we deduce that
\[
\frac{1}{2}G_{\varepsilon}'(u)[\varphi]=\frac{1}{\varepsilon^{3}}\int_{M}[q^{2}\Psi^{2}(u)-2q\Psi(u)]u\varphi d\mu_{g},
\]
 so
\[
I'_{\varepsilon}(u)\varphi=\frac{1}{\varepsilon^{3}}\int_{M}\varepsilon^{2}\nabla_{g}u\nabla_{g}\varphi+au\varphi-(u^{+})^{p-1}\varphi-\omega^{2}(1-q\Psi(u))^{2}u\varphi d\mu_{g}.
\]

Therefore, if $u$ is a critical point of the functional $I_\varepsilon$ we have

\begin{equation}\label{p}
-\varepsilon^{2}\Delta_{g}u+(a-\omega^2) u+\omega^2q\Psi(u)(2-q\Psi(u))u=(u^+)^{p-1}\ \   \text{ in }M.
\end{equation}
In particular, if $u\ne0$ by the maximum principle and by the regularity theory we have that $u>0$.
Thus the pair $(u,\Psi(u))$ is a solution of Problem (\ref{eq:P}).
Finally, the problem is reduced to find a solution to the single equation \eqref{p}.

\subsection{Setting of the problem}

In the following we denote by  $B_g(\xi,r)$   the geodesic ball in $M$ centered in $\xi$ with radius $r$
and by  $B (x,r)$ the ball in
$\mathbb{R}^{3}$ centered in $x$ with radius
$r$.    

 It is possible to define a system of coordinates on $M$ called normal coordinates. We denote by $g_{\xi}$   the Riemannian metric read in $B(0,r)\subset \rr^3$ through the normal coordinates defined by the exponential map
${\rm exp}_{\xi}$ at $\xi.$ We denote $|g_{\xi}(z)|:={\rm det}\left(g_{ij}(z)\right)$ and $\left(g^{ij}_{\xi}(z)\right)$
is the inverse matrix of $g_{\xi}(z).$
In particular, it holds
\begin{equation}\label{f2}
  g^{ij}_{\xi}(0)=\delta_{ij}\ \text{and}\ {\partial g^{ij}_{\xi}\over\partial z_k}(0)=0\ \hbox{for any $i,j,k.$}\end{equation}
Here $\delta_{ij}$ denotes the Kronecker symbol.

We denote by 
$$\|u\|^2_g:=\int\limits_M\( |\nabla_gu|^2+u^2\)d\mu_g\ \text{and}\ |u|^q_g:=\int\limits_M  |u|^q d\mu_g$$
the standard norms in the spaces
$H^1_g(M)$ and $L^q(M).$

Let $H_\eps$ be the Hilbert space
${\rm H}^1_g(M)$ equipped with the inner product
$$\left\langle u,v\right\rangle _\eps:={1\over\eps^3}\left(\eps^2\int\limits_M \nabla_gu\nabla_gvd\mu_g+\lambda\int\limits_M uvd\mu_g\right),$$
which induces the norm
$$\|u\|^2_\eps:={1\over\eps^3}\left(\eps^2\int\limits_M |\nabla_g|^2d\mu_g+\lambda\int\limits_M u^2d\mu_g\right).$$
Let $L^q_\eps$ be the Banach space
${\rm L}^q_g(M)$ equipped   the norm
$$ |u | _{q,\eps}:=\left({1\over\eps^3} \int\limits_M |u|^qd\mu_g\right)^{1/q} .$$

It is clear that for any $q\in[2,6)$  the embedding $H_\eps\hookrightarrow L^q_\eps$
is a continuous map.
It is not difficult to check that
\begin{equation}\label{r2}
| u   |_{q,\eps}\le c \left\|u\right\|_\eps ,\ \hbox{for any}\ u\in H_\eps,
\end{equation}
where the constant $c$ does not depend on $\eps.$

 In particular, the embedding $i_\eps:H_\eps\hookrightarrow L^p_\eps$ is a compact continuous map. The adjoint
 operator $i^*_\eps: L^{p'}_\eps\rightarrow H_\eps,$ $p':={p\over p-1},$ is a   continuous map such that
 $$u=i^*_\eps(v)\ \Leftrightarrow\ \left\langle i^*_\eps(v),\varphi\right\rangle_\eps={1\over\eps^3}\int\limits_M v\varphi,
 \ \varphi\in H_\eps\ \ \Leftrightarrow\
 -\eps^2\Delta_g u+u=v\ \text{on}\  M, \  u\in{\rm H}^1_g(M).  $$
  Moreover
  $$\left\|i^*_\eps(v)\right\|_\eps\le c  | v   |_{p',\eps} ,\ \hbox{for any}\ v\in L^{p'}_\eps,$$
where the constant $c$ does not depend on $\eps.$

We can rewrite problem (\ref{p}) in the equivalent way
\begin{equation}\label{p1}
u=i^*_\eps\left [f(u) +\omega^2 g(u)\right],\ u\in H_\eps,\end{equation}
where we set
\begin{equation}\label{go}
g(u):= \left(q^2\Psi^2(u) -2q\Psi(u)\right)u .\end{equation}

\subsection{An approximation for the solution}

 It is well known (see  \cite{GNN,K})  that there exists a unique positive spherically symmetric function $U\in{\rm H}^1 (\rr^N)$ such that
\begin{equation}\label{pl}
-\Delta U+\lambda U=U^{p-1}\ \hbox{in}\ \rr^N.\end{equation}
Moreover, the function $U$ and its derivatives are   exponentially decaying at infinity, namely
\begin{equation}\label{pll}
\lim\limits_{|x|\rightarrow\infty}U(|x|)|x|^{N-1\over2}e^{|x|}=c >0,\
 \lim\limits_{|x|\rightarrow\infty}U'(|x|)|x|^{N-1\over2}e^{|x|}=-c .
\end{equation}

Let $\chi_r$ be  a smooth cut-off function  such that $\chi_r(z)=1$ if $z\in B(0,r/2) ,$
$\chi_r(z)=0$ if $z\in \rr^3\setminus B(0,r ) ,$   $|\nabla \chi _r(z)|\le 2/r $ and
$|\nabla^2 \chi _r(z)|\le 2/r^2 $ where $r$ is the injectivity radius of $M.$
Fixed a point $\xi\in
M$ and $\eps>0$ let us define on $M$ the function
\begin{equation}\label{w1}
 W_{\eps,\xi}(x):= U_\eps\left({\rm exp}^{-1}_\xi(x)\right)\chi_r\left( {\rm exp}^{-1}_\xi(x) \right)
  \ \text{if}\ x\in B_g(\xi,r) ,\  W_{\eps,\xi}(x):=
0 \ \text{if}\ x\in M\setminus B_g(\xi,r) ,
\end{equation}
where we set $U_\eps(z):=U\left({z\over\eps}\right).$

We will look for a solution to (\ref{p1}) or equivalently to (\ref{p}) as $u_\eps:= W_{\eps,\xi}+\phi,$
where the rest term $\phi$ belongs to a suitable space which will be introduced in the following.

It is well known that every solution to the linear equation
$$-\Delta \psi+\lambda\psi=(p-1)U^{p-2}\psi\ \hbox{in}\ \rr^3$$
is a linear combination of the functions
$$\psi^i(z):={\partial U\over\partial z_i}(z),\ i=1,2,3.$$

Let us define on $M$ the functions
\begin{equation}\label{zi}
 Z^i_{\eps,\xi}(x):= \psi^i_\eps\left({\rm exp}^{-1}_\xi(x)\right)\chi_r\left( {\rm exp}^{-1}_\xi(x)\right)
 \ \text{if}\ x\in B_g(\xi,r) ,\  Z^i_{\eps,\xi}(x):=
0 \ \text{if}\ x\in M\setminus B_g(\xi,r) ,
\end{equation}
where we set $\psi^i_\eps(z):=\psi^i\left({z\over\eps}\right).$
Let us introduce the spaces
 $K_{\eps,\xi}:={\rm span}\left\{ Z^1_{\eps,\xi},Z^2_{\eps,\xi}, Z^3_{\eps,\xi}\right\}$
and\break
 $K^\perp_{\eps,\xi}:= \left\{\phi\in H_\eps\ :\ \left\langle \phi, Z^i_{\eps,\xi}\right\rangle_\eps=0,\ i=1,2,3\right\}.$
Finally, let
 $\Pi_{\eps,\xi}:H_\eps\rightarrow K_{\eps,\xi}$ and $ \Pi^\perp_{\eps,\xi}:H_\eps\rightarrow K^\perp_{\eps,\xi}$
be the orthogonal projections.

In order to solve problem (\ref{p1}) we will solve the couple of equations
\begin{eqnarray}\label{p2}
& &\Pi^\perp_{\eps,\xi}\left\{W_{\eps,\xi}+\phi-
i^*_\eps\left[f\left(W_{\eps,\xi}+\phi\right)+\omega^2g\left(W_{\eps,\xi}+\phi\right)\right]\right\}=0
\\ & &\nonumber\\ 
& &\Pi _{\eps,\xi}\left\{W_{\eps,\xi}+\phi-
i^*_\eps\left[f\left(W_{\eps,\xi}+\phi\right)+\omega^2g\left(W_{\eps,\xi}+\phi\right)\right]\right\}=0.\label{p3}\end{eqnarray}

\subsection{Scheme of the proof of Theorem \ref{main}}

The first step is to solve equation \eqref{p2}. More precisely,  if  $\eps  $ is small enough for any fixed $\xi \in M$, we will find a function
$\phi \in \Pi^\perp_{\eps,\xi}$ such that \eqref{p2} holds.

First of all, we define the linear operator $L_{\eps,\xi}:K^\perp_{\eps,\xi}\rightarrow K^\perp_{\eps,\xi}$
by
$$L_{\eps,\xi}(\phi):=\Pi _{\eps,\xi}^\perp\left\{\phi-i^*_\eps\left[f'\left(W_{\eps,\xi}\right) \phi\right]\right\}.$$

In Proposition 3.1 of \cite{MP} we proved the invertibility of $L_{\eps,\xi}.$

\begin{prop}\label{inv}
There exists $\eps_0>0$ and $c>0$ such that for any $\xi\in M$ and for any $\eps\in(0,\eps_0)$
$$\left\|L_{\eps,\xi} (\phi)\right\|_\eps \ge c\|\phi\|_\eps\ \hbox{ for any $\phi\in K^\perp_{\eps,\xi}$}.$$
\end{prop}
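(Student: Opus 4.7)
My plan is the standard contradiction argument, which is the same one used in \cite{MP}. Suppose the claim fails. Then there exist sequences $\eps_n\to 0$, $\xi_n\in M$, and $\phi_n\in K^\perp_{\eps_n,\xi_n}$ with $\|\phi_n\|_{\eps_n}=1$ and $\|L_{\eps_n,\xi_n}(\phi_n)\|_{\eps_n}\to 0$. By compactness of $M$, up to a subsequence, $\xi_n\to\xi_0\in M$. The definition of $L_{\eps,\xi}$ gives
\[
\phi_n-i^*_{\eps_n}\bigl[f'(W_{\eps_n,\xi_n})\phi_n\bigr]=\eta_n+\sum_{i=1}^3c_n^iZ^i_{\eps_n,\xi_n},
\]
where $\eta_n\in K^\perp_{\eps_n,\xi_n}$ satisfies $\|\eta_n\|_{\eps_n}\to 0$ and the coefficients $c_n^i$ are determined by the decomposition. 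By the characterization of $i^*_{\eps_n}$, this is equivalent to
\[
-\eps_n^2\Delta_g\phi_n+\lambda\phi_n=(p-1)W_{\eps_n,\xi_n}^{p-2}\phi_n+\tilde\eta_n+\sum_{i=1}^3c_n^i\bigl(-\eps_n^2\Delta_gZ^i_{\eps_n,\xi_n}+\lambda Z^i_{\eps_n,\xi_n}\bigr)
\]
in a weak sense, where $\tilde\eta_n$ corresponds to $\eta_n$.

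Next, I would rescale. Define $\tilde\phi_n(z):=\phi_n(\exp_{\xi_n}(\eps_n z))\chi_r(\eps_n z)$ for $z\in\rr^3$. The choice of $\langle\cdot,\cdot\rangle_{\eps_n}$ (with the $1/\eps_n^3$ factor) is precisely the one making the rescaling norm-preserving modulo the metric Jacobian $\sqrt{|g_{\xi_n}(\eps_n z)|}$, which converges to $1$ uniformly on compact sets by \eqref{f2}. Thus $\{\tilde\phi_n\}$ is bounded in $H^1(\rr^3)$; up to a subsequence, $\tilde\phi_n\rightharpoonup\phi_\infty$ weakly in $H^1(\rr^3)$ and strongly in $L^q_{\mathrm{loc}}(\rr^3)$ for $q\in[2,6)$. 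Using $g^{ij}_{\xi_n}(\eps_n z)\to\delta_{ij}$ and $W_{\eps_n,\xi_n}(\exp_{\xi_n}(\eps_n z))\to U(z)$ locally, passing to the limit in the rescaled equation shows that $\phi_\infty$ solves
\[
-\Delta\phi_\infty+\lambda\phi_\infty=(p-1)U^{p-2}\phi_\infty\quad\text{in }\rr^3,
\]
so $\phi_\infty\in\mathrm{span}\{\psi^1,\psi^2,\psi^3\}$ by the nondegeneracy of $U$. The orthogonality $\langle\phi_n,Z^i_{\eps_n,\xi_n}\rangle_{\eps_n}=0$ rescales to $\int_{\rr^3}(\nabla\tilde\phi_n\cdot\nabla\psi^i+\lambda\tilde\phi_n\psi^i)\,dz+o(1)=0$; passing to the limit gives $\langle\phi_\infty,\psi^i\rangle_{H^1(\rr^3)}=0$, forcing $\phi_\infty\equiv 0$.

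Finally, to derive a contradiction with $\|\phi_n\|_{\eps_n}=1$, I would test the equation against $\phi_n$ itself. Since $\phi_n\perp Z^i_{\eps_n,\xi_n}$,
\[
\|\phi_n\|_{\eps_n}^2-\frac{p-1}{\eps_n^3}\int_MW_{\eps_n,\xi_n}^{p-2}\phi_n^2\,d\mu_g=\langle\eta_n,\phi_n\rangle_{\eps_n}=o(1).
\]
After rescaling and using the uniform integrability provided by the exponential decay \eqref{pll} of $U$ together with the strong $L^q_{\mathrm{loc}}$ convergence $\tilde\phi_n\to\phi_\infty=0$, the integral on the left converges to $(p-1)\int_{\rr^3}U^{p-2}\phi_\infty^2\,dz=0$. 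Therefore $\|\phi_n\|_{\eps_n}^2\to 0$, contradicting $\|\phi_n\|_{\eps_n}=1$.

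The main obstacle is the bookkeeping in step two: one must verify that the rescaled Laplace--Beltrami operator converges to the Euclidean Laplacian in a sufficiently strong sense (so that the weak limit of $\tilde\phi_n$ really satisfies the Euclidean linearized equation), and that the orthogonality conditions with respect to the $\eps_n$-rescaled inner product on $M$ really pass to the standard $H^1(\rr^3)$-orthogonality with $\psi^i$; both rely crucially on \eqref{f2} and the cut-off properties of $\chi_r$. Once these technical points are in place, Steps 3--5 are routine, and the uniformity in $\xi$ follows from the fact that the argument depends on $\xi_n$ only through the choice of exponential chart, which yields uniform metric estimates thanks to compactness of $M$.
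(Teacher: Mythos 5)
The paper itself does not supply a proof here: it simply cites Proposition 3.1 of \cite{MP}, which is proved by exactly the blow-up contradiction argument you outline. So your overall strategy is the right one.

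There is, however, a genuine gap in your Step 2. After writing the decomposition
\[
\phi_n-i^*_{\eps_n}\bigl[f'(W_{\eps_n,\xi_n})\phi_n\bigr]=\eta_n+\sum_{i=1}^3c_n^iZ^i_{\eps_n,\xi_n},
\]
you assert that ``passing to the limit in the rescaled equation shows that $\phi_\infty$ solves $-\Delta\phi_\infty+\lambda\phi_\infty=(p-1)U^{p-2}\phi_\infty$.'' But the term $\sum_i c_n^iZ^i_{\eps_n,\xi_n}$ rescales to $\sum_i c_n^i\psi^i$ (modulo cut-off corrections), and these are \emph{not} negligible unless you first prove $c_n^i\to 0$; otherwise $\phi_\infty$ solves only an inhomogeneous version of the linearized equation, and the nondegeneracy argument that forces $\phi_\infty\in\mathrm{span}\{\psi^i\}$ does not apply directly. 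The standard fix is to test the decomposition against $Z^j_{\eps_n,\xi_n}$ itself. Since both $\phi_n$ and $\eta_n$ lie in $K^\perp_{\eps_n,\xi_n}$, this gives
\[
-\frac{1}{\eps_n^3}\int_M f'(W_{\eps_n,\xi_n})\phi_n Z^j_{\eps_n,\xi_n}\,d\mu_g=\sum_i c_n^i\,\bigl\langle Z^i_{\eps_n,\xi_n},Z^j_{\eps_n,\xi_n}\bigr\rangle_{\eps_n},
\]
and the left-hand side equals $-\langle Z^j_{\eps_n,\xi_n},\phi_n\rangle_{\eps_n}+O(\eps_n^2)=O(\eps_n^2)$ because each $Z^j_{\eps,\xi}$ solves, up to errors coming from the cut-off and the metric expansion \eqref{f2}, the linearized equation $-\eps^2\Delta_g Z^j+\lambda Z^j=(p-1)W_{\eps,\xi}^{p-2}Z^j$; since the Gram matrix $\bigl(\langle Z^i,Z^j\rangle_{\eps_n}\bigr)_{ij}$ converges to the nondegenerate matrix $\bigl(\langle\psi^i,\psi^j\rangle\bigr)_{ij}$, one concludes $c_n^i\to 0$. (Alternatively, one can test the decomposition only against test functions already projected onto $K^\perp_{\eps_n,\xi_n}$, and then use $\phi_\infty\perp\psi^i$ to remove the extra terms afterwards.) With that step supplied, Steps 3--5 are correct and the contradiction goes through as you describe.
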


Secondly, in Lemma 3.3 of \cite{MP}  we    estimated the error term   $R_{\eps,\xi} $ defined by
 \begin{equation}\label{re}
 R_{\eps,\xi}:=\Pi^\perp_{\eps,\xi}\left\{i^*_\eps\left[f \left(W_{\eps,\xi}\right)\right]-W_{\eps,\xi}\right\}.
 \end{equation}

\begin{prop}\label{re1}
There exists $\eps_0>0$ and $c>0$  such that for any $\xi\in M$ and for any $\eps\in(0,\eps_0)$ it holds
 $$  \left\|  R_{\eps,\xi}\right\|_\eps\le c\eps^2.$$
\end{prop}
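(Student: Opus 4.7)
\medskip

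\noindent\textbf{Proof plan for Proposition \ref{re1}.} Since $\Pi^\perp_{\eps,\xi}$ is an orthogonal projection, it is enough to estimate
$$\bigl\|i^*_\eps\bigl[f(W_{\eps,\xi})\bigr]-W_{\eps,\xi}\bigr\|_\eps.$$
Writing $u:=i^*_\eps[f(W_{\eps,\xi})]-W_{\eps,\xi}$, the definition of $i^*_\eps$ gives, for every test function $\varphi\in H_\eps$,
$$\langle u,\varphi\rangle_\eps=\frac{1}{\eps^3}\int_M\Bigl[-\eps^2\Delta_g W_{\eps,\xi}+\lambda W_{\eps,\xi}-f(W_{\eps,\xi})\Bigr]\varphi\, d\mu_g.$$
Taking the supremum over $\|\varphi\|_\eps=1$, the plan reduces to showing that the right-hand side is $O(\eps^2)$ uniformly in $\xi$.

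First I would split the integration into $B_g(\xi,r/2)$, the annulus $B_g(\xi,r)\setminus B_g(\xi,r/2)$, and the complement. On the annulus the cut-off $\chi_r$ is not identically $1$, but by the exponential decay \eqref{pll} of $U$ and its derivatives, every term involving $W_{\eps,\xi}$ or $f(W_{\eps,\xi})$ there is pointwise of order $e^{-cr/\eps}$; after the factor $\eps^{-3}$ this contributes $O(e^{-c/\eps})$, which is harmless. Off $B_g(\xi,r)$ the integrand is zero.

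The main contribution comes from $B_g(\xi,r/2)$, where $\chi_r\equiv 1$. I would read everything in normal coordinates at $\xi$ via $\exp_\xi$ and then perform the rescaling $z=\eps y$. In normal coordinates one has the expansions
$$g_\xi^{ij}(z)=\delta_{ij}+O(|z|^2),\qquad \sqrt{|g_\xi(z)|}=1+O(|z|^2),$$
thanks to \eqref{f2}. Consequently the Laplace--Beltrami operator reads
$$\Delta_g W_{\eps,\xi}(\exp_\xi z)=\Delta U_\eps(z)+\sum_{i,j}a_{ij}(z)\partial^2_{ij}U_\eps(z)+\sum_i b_i(z)\partial_i U_\eps(z),$$
with $a_{ij}(z),\,b_i(z)|z|=O(|z|^2)$. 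Since $U$ satisfies $-\Delta U+\lambda U=U^{p-1}$ in $\rr^3$, the leading term $-\eps^2\Delta U_\eps+\lambda U_\eps-f(U_\eps)$ vanishes identically; the remainder is the geometric correction
$$-\eps^2\Bigl[\sum_{i,j}a_{ij}(\eps y)\partial^2_{ij}U(y)+\eps\sum_i b_i(\eps y)\partial_i U(y)\Bigr],$$
which is pointwise bounded by $\eps^2|\eps y|^2\bigl(|D^2U(y)|+|y|^{-1}|\nabla U(y)|\bigr)$.

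Finally I would plug this into the duality bound, pick up the Jacobian $\eps^3$ from the change of variable (which cancels the $\eps^{-3}$ in front), apply Hölder's inequality together with the continuous embedding $H_\eps\hookrightarrow L^q_\eps$ \eqref{r2} to absorb $\varphi$, and use the exponential decay of $U$ and its derivatives to control $\int_{\rr^3}|y|^{2q'}(|D^2U|+|y|^{-1}|\nabla U|)^{q'}\,dy<\infty$. This yields $\|u\|_\eps\le c\eps^2$, as desired. The one genuinely delicate point is keeping track of the powers of $\eps$ through the rescaling and checking that the cut-off derivative terms $\eps^2\nabla\chi_r\cdot\nabla U_\eps$ and $\eps^2 U_\eps\Delta\chi_r$ are absorbed into the exponentially small remainder; once the cancellation of the $O(1)$ and $O(\eps)$ terms via \eqref{f2} is granted, the rest is a routine expansion.
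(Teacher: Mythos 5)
Your strategy — reduce to the norm of $i^*_\eps[f(W_{\eps,\xi})]-W_{\eps,\xi}$ via duality, read everything in normal coordinates, cancel the leading term using $-\Delta U+\lambda U=U^{p-1}$, and bound the geometric correction by the $O(|z|^2)$ expansion from \eqref{f2} combined with the exponential decay of $U$ and its derivatives — is exactly the argument behind Lemma~3.3 of \cite{MP}, which the paper cites without reproducing; so the approach matches.

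One slip in the $\eps$ bookkeeping: after setting $z=\eps y$, the remainder should read $-a_{ij}(\eps y)\partial^2_{ij}U(y)-\eps\,b_i(\eps y)\partial_i U(y)$ \emph{without} the extra $\eps^2$ out front, since $\partial^2_{ij}U_\eps(\eps y)=\eps^{-2}\partial^2_{ij}U(y)$ and $\partial_i U_\eps(\eps y)=\eps^{-1}\partial_i U(y)$ already absorb the $\eps^2$ coming from $-\eps^2\Delta_g$. With $a_{ij}(\eps y)=O(\eps^2|y|^2)$ and $b_i(\eps y)=O(\eps|y|)$, the correct pointwise bound is $c\,\eps^2\bigl(|y|^2|D^2U(y)|+|y|\,|\nabla U(y)|\bigr)$, not the $\eps^2|\eps y|^2(\cdots)=O(\eps^4)$ you wrote, which is inconsistent with — and would overshoot — your stated conclusion. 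Once this factor is corrected, plugging into the duality pairing (with the Jacobian $\eps^3$ cancelling the prefactor $\eps^{-3}$, Hölder in $L^{p'}\times L^p$, and \eqref{r2} to absorb $\|\varphi\|_\eps$) gives precisely $\|R_{\eps,\xi}\|_\eps\le c\eps^2$.
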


 Finally, we use a   contraction mapping argument   to solve equation \eqref{p2}. This is done in Section \ref{redux}

\begin{prop}\label{resto}
There exists $\eps_0>0$ and $c>0$ such that for any $\xi\in M$ and for any $\eps\in(0,\eps_0)$
there exists a unique $\phi_{ \eps,\xi}=\phi(\eps,\xi)$ which solves equation (\ref{p2}). Moreover
\begin{equation}\label{resto1}
  \left\|\phi_{ \eps,\xi}\right\|_\eps\le c\eps^2.\end{equation} Finally,
   $\xi\rightarrow \phi_{ \eps,\xi}$ is a $C^1-$map.
\end{prop}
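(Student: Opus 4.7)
The plan is a standard contraction-mapping argument on $K^\perp_{\eps,\xi}$. I would begin by splitting $f(W_{\eps,\xi}+\phi) = f(W_{\eps,\xi}) + f'(W_{\eps,\xi})\phi + N_{\eps,\xi}(\phi)$, with $N_{\eps,\xi}(\phi)$ denoting the superlinear remainder, so that equation \eqref{p2} rewrites as
$$L_{\eps,\xi}(\phi) = R_{\eps,\xi} + \Pi^\perp_{\eps,\xi}\bigl\{i^*_\eps[N_{\eps,\xi}(\phi) + \omega^2 g(W_{\eps,\xi}+\phi)]\bigr\}.$$
The uniform invertibility of $L_{\eps,\xi}$ given by Proposition \ref{inv} lets me recast this as the fixed-point problem $\phi = T_{\eps,\xi}(\phi)$, where
$$T_{\eps,\xi}(\phi) := L_{\eps,\xi}^{-1}\left(R_{\eps,\xi} + \Pi^\perp_{\eps,\xi}\bigl\{i^*_\eps[N_{\eps,\xi}(\phi) + \omega^2 g(W_{\eps,\xi}+\phi)]\bigr\}\right),$$
and I would look for the solution in the closed ball $B_{\eps,\xi} := \{\phi \in K^\perp_{\eps,\xi}:\|\phi\|_\eps \le C_0\eps^2\}$, with $C_0$ fixed large at the end.

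To close the argument it suffices to show that, for $\eps$ small, $T_{\eps,\xi}$ sends $B_{\eps,\xi}$ into itself and is a $\tfrac12$-contraction there. Proposition \ref{re1} supplies $\|R_{\eps,\xi}\|_\eps \le c\eps^2$, which sets the natural $\eps^2$ scale. For the superlinear remainder $N_{\eps,\xi}$, standard pointwise estimates (treating $p\ge 3$ and $2<p<3$ separately), combined with the continuity $\|i^*_\eps(v)\|_\eps \le c|v|_{p',\eps}$ and H\"older in $L^q_\eps$, yield both $\|i^*_\eps[N_{\eps,\xi}(\phi)]\|_\eps = o(\|\phi\|_\eps)$ and a Lipschitz constant $o(1)$ on $B_{\eps,\xi}$. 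For the Maxwell coupling I invoke the estimates of Appendix A together with the Lipschitz property of $\Psi$ from Lemma \ref{lem:e1} to bound both the size and the Lipschitz constant of $\phi \mapsto i^*_\eps[g(W_{\eps,\xi}+\phi)]$ by $o(1)$ as $\eps\to 0$. Taken together these estimates give the self-map and contraction properties, and Banach's theorem produces the unique $\phi_{\eps,\xi}\in B_{\eps,\xi}$ solving \eqref{p2} with $\|\phi_{\eps,\xi}\|_\eps \le c\eps^2$.

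For the $C^1$ dependence on $\xi$ I would apply the implicit function theorem to the map $F(\eps,\xi,\phi) := \Pi^\perp_{\eps,\xi}\{W_{\eps,\xi}+\phi - i^*_\eps[f(W_{\eps,\xi}+\phi)+\omega^2 g(W_{\eps,\xi}+\phi)]\}$: $C^1$ smoothness in $\xi$ follows from the $C^1$ dependence of $\mathrm{exp}_\xi$ (and hence of $W_{\eps,\xi}$, the $Z^i_{\eps,\xi}$, and the projections) on $\xi$, while $\partial_\phi F(\eps,\xi,\phi_{\eps,\xi}) = L_{\eps,\xi} + o(1)$ is invertible by Proposition \ref{inv}. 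I expect the main technical obstacle to be the nonlocal Maxwell term $g(W_{\eps,\xi}+\phi)$: since $|g(u)|\lesssim |u|$ pointwise, no naive argument makes this term small, and one has to exploit that $\Psi(W_{\eps,\xi})$ picks up an additional factor of $\eps^2$ after the concentration rescaling (because the second equation in \eqref{eq:P} has no $\eps^2$ in front of $\Delta_g$). This finer estimate is exactly the content of Appendix A, and once it is in hand the proof follows the template of \cite{MP}.
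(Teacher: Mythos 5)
Your proposal follows the paper's proof essentially step for step: the same splitting $f(W+\phi)=f(W)+f'(W)\phi+N_{\eps,\xi}(\phi)$, the same fixed-point operator $T_{\eps,\xi}=L_{\eps,\xi}^{-1}(N_{\eps,\xi}+S_{\eps,\xi}+R_{\eps,\xi})$ on the ball of radius $C_0\eps^2$ in $K^\perp_{\eps,\xi}$, and the same identification of the Maxwell term as the one genuinely new estimate (what the paper packages as Lemma~\ref{se1}, proved via Lemmas~\ref{lem:e3}, \ref{lem:e5}, \ref{lem:e7} in Appendix~A). Your remark that $\Psi(W_{\eps,\xi})$ gains a factor $\eps^2$ under the concentration rescaling, because the second equation of the system is not singularly perturbed, is exactly the heuristic behind those lemmas.

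One small gap in the $C^1$ part: the map $F(\eps,\xi,\phi)=\Pi^\perp_{\eps,\xi}\{W_{\eps,\xi}+\phi-i^*_\eps[\cdots]\}$ as you wrote it either takes values in the $\xi$-dependent subspace $K^\perp_{\eps,\xi}$ (so you cannot apply the implicit function theorem between fixed Banach spaces), or, if regarded as a map $M\times H_\eps\to H_\eps$, has $\partial_\phi F$ annihilating all of $K_{\eps,\xi}$ and hence not invertible. The paper fixes this by working with
\[
G(\xi,u):=\Pi^\perp_{\eps,\xi}\bigl\{W_{\eps,\xi}+\Pi^\perp_{\eps,\xi}u-i^*_\eps[f(W_{\eps,\xi}+\Pi^\perp_{\eps,\xi}u)+\omega^2 g(W_{\eps,\xi}+\Pi^\perp_{\eps,\xi}u)]\bigr\}+\Pi_{\eps,\xi}u,
\]
so that $G:M\times H_\eps\to H_\eps$ acts between fixed spaces, vanishes at $(\xi,\phi_{\eps,\xi})$, and has $\partial_u G$ bounded below on all of $H_\eps$ (the $\Pi_{\eps,\xi}u$ term handles the kernel directions, while $L_{\eps,\xi}$ plus small perturbations handles the orthogonal directions). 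You should add this correction term, and then spell out the bound showing the $g'$-contribution to $\partial_u G$ is $O(\eps^{3-3/p'})\ll 1$, which again rests on the Appendix-A estimates for $\Psi$ and $\Psi'$. With that change the proposal matches the paper's proof.
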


\medskip
The second step is to solve equation \eqref{p3}. More precisely, for $\eps$ small enough we will find the point $\xi$ in $M$ such that equation \eqref{p3} is satisfied.

Let us introduce the reduced energy  $\widetilde I_\eps:M\rightarrow\rr$    defined by
$$\widetilde I_\eps(\xi):=I_\eps\left(W_{\eps,\xi}  +\phi_{\eps,\xi}\right),$$
where the energy $I_\eps$ whose critical points are solution to problem  (\ref{p}) is defined in \eqref{ieps}.

First of all,  arguing   exactly as in Lemma 4.1 of \cite{MP} we get
\begin{prop}\label{ridotto}
  $\xi_\eps$ is a critical point of $\widetilde I_\eps  $ if and only if
the function $u_\eps=W_{\eps,\xi_\eps}  +\phi_{\eps,\xi_\eps}$ is a solution to problem (\ref{p}).
\end{prop}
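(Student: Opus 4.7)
The plan is to carry out the standard Lyapunov--Schmidt reduction argument, closely mimicking Lemma~4.1 of \cite{MP} with the minor modifications needed for the extra term $\omega^2 g(u)$. Using $\lambda=a-\omega^2$ and the identity $\omega^2 u-\omega^2(1-q\Psi(u))^2 u=-\omega^2 g(u)$, one checks that
\[
I'_\eps(u)[\varphi]=\left\langle u-i^*_\eps\bigl[f(u)+\omega^2 g(u)\bigr],\,\varphi\right\rangle_\eps,
\]
so $u$ solves (\ref{p}) iff $u=i^*_\eps[f(u)+\omega^2 g(u)]$. Since $\phi_{\eps,\xi}$ satisfies \eqref{p2} by construction, the residual
\[
R(\xi):=W_{\eps,\xi}+\phi_{\eps,\xi}-i^*_\eps\bigl[f(W_{\eps,\xi}+\phi_{\eps,\xi})+\omega^2 g(W_{\eps,\xi}+\phi_{\eps,\xi})\bigr]
\]
lies in $K_{\eps,\xi}$, and hence $R(\xi)=\sum_{i=1}^3 c^i(\xi)\,Z^i_{\eps,\xi}$ for coefficients $c^i(\xi)$. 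Thus $u_\eps=W_{\eps,\xi_\eps}+\phi_{\eps,\xi_\eps}$ solves (\ref{p}) if and only if $c^i(\xi_\eps)=0$ for $i=1,2,3$.

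Next, differentiating $\widetilde I_\eps(\xi)=I_\eps(W_{\eps,\xi}+\phi_{\eps,\xi})$ in normal coordinates at $\xi_\eps$ and using the Riesz representation above yields
\[
\partial_{\xi_k}\widetilde I_\eps(\xi)=\sum_{i=1}^3 c^i(\xi)\,M_{ki}(\xi),\qquad M_{ki}(\xi):=\left\langle Z^i_{\eps,\xi},\,\partial_{\xi_k}(W_{\eps,\xi}+\phi_{\eps,\xi})\right\rangle_\eps.
\]
So criticality of $\widetilde I_\eps$ at $\xi_\eps$ amounts to $M(\xi_\eps)\,(c^1,c^2,c^3)^T=0$, and the proposition reduces to showing that $M(\xi_\eps)$ is invertible for $\eps$ small. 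To obtain this I would use \eqref{w1}, \eqref{zi}, the normal-coordinate expansion \eqref{f2} and the decay \eqref{pll} to write $\partial_{\xi_k}W_{\eps,\xi}=-\eps^{-1}Z^k_{\eps,\xi}+r^k_{\eps,\xi}$ with $\|r^k_{\eps,\xi}\|_\eps=O(1)$; the rotational symmetry of $U$ together with the parity of $\psi^i\psi^j$ then gives $\langle Z^i_{\eps,\xi},Z^k_{\eps,\xi}\rangle_\eps=C\,\delta_{ki}+o(1)$ with $C:=\int_{\rr^3}(|\nabla\psi^1|^2+\lambda|\psi^1|^2)\,dz>0$. The contribution from $\partial_{\xi_k}\phi_{\eps,\xi}$ does not require any a priori bound on $\|\partial_\xi\phi_{\eps,\xi}\|_\eps$: differentiating the orthogonality identity $\langle Z^i_{\eps,\xi},\phi_{\eps,\xi}\rangle_\eps\equiv 0$ yields
\[
\langle Z^i_{\eps,\xi},\partial_{\xi_k}\phi_{\eps,\xi}\rangle_\eps=-\langle\partial_{\xi_k}Z^i_{\eps,\xi},\phi_{\eps,\xi}\rangle_\eps,
\]
which by Proposition \ref{resto} is $O(\eps^{-1})\cdot O(\eps^2)=O(\eps)$. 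Putting the pieces together, $\eps M_{ki}(\xi_\eps)=-C\,\delta_{ki}+o(1)$, so $M(\xi_\eps)$ is invertible for $\eps$ small.

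I expect the main technical point to be the bookkeeping in the normal-coordinate expansion of $\partial_{\xi_k}W_{\eps,\xi}$ and of the cross terms $\langle Z^i,Z^k\rangle_\eps$; the trick of differentiating $\langle Z^i,\phi\rangle_\eps\equiv 0$ avoids any direct estimate of $\|\partial_\xi\phi_{\eps,\xi}\|_\eps$, which would otherwise require differentiating the fixed-point equation \eqref{p2} and reinverting $L_{\eps,\xi}$. Both directions of the equivalence then follow at once: if $u_\eps$ solves (\ref{p}) then $I'_\eps(u_\eps)=0$, so in particular $\xi_\eps$ is critical for $\widetilde I_\eps$; conversely, if $\xi_\eps$ is critical for $\widetilde I_\eps$, invertibility of $M(\xi_\eps)$ forces $c^i(\xi_\eps)=0$ for every $i$, hence $R(\xi_\eps)=0$ and $u_\eps$ solves (\ref{p}).
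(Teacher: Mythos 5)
Your proposal is correct and carries out the same standard Lyapunov--Schmidt argument that the paper invokes by reference to Lemma 4.1 of \cite{MP}: write the residual $R(\xi)\in K_{\eps,\xi}$ as $\sum_i c^i(\xi)Z^i_{\eps,\xi}$, differentiate $\widetilde I_\eps$ to get $\partial_{\xi_k}\widetilde I_\eps=\sum_i c^i M_{ki}$, and show $\eps M$ is a small perturbation of a nonsingular diagonal matrix. The device of differentiating the orthogonality relation $\langle Z^i_{\eps,\xi},\phi_{\eps,\xi}\rangle_\eps\equiv 0$ to bound $\langle Z^i_{\eps,\xi},\partial_{\xi_k}\phi_{\eps,\xi}\rangle_\eps$ by $O(\eps)$ is exactly the standard trick used in that setting.
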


Thus, the problem is reduced to search for critical points of $\widetilde I_\eps  $ whose asymptotic expansion is given in Section \ref{red-exp}
and reads as follows.

\begin{prop}\label{espa}
It holds true that
\begin{equation}\label{e1}
  \widetilde I_\eps\left( \xi\right)
=c_1+c_2\eps^2 -c_3 S_g(\xi) \eps^2 +o\left(\eps^2\right)
,\end{equation}
 $C^1-$uniformly with respect to
$\xi $  as $\eps$ goes to zero.
 Here $S_g(\xi)$ is the scalar curvature of $M$ at $\xi $ and $c_i$'s are constants.
 \end{prop}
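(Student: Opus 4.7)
My plan is to expand $\widetilde{I}_\eps(\xi)=I_\eps(W_{\eps,\xi}+\phi_{\eps,\xi})$ in two stages: first to replace the full ansatz by the unperturbed bubble $W_{\eps,\xi}$ at an error that is $o(\eps^2)$, then to Taylor expand $I_\eps(W_{\eps,\xi})$ in normal coordinates at $\xi$, pulling out the scalar curvature contribution from the metric expansion.

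For the first stage, I would write $I_\eps(W_{\eps,\xi}+\phi_{\eps,\xi})=I_\eps(W_{\eps,\xi})+I_\eps'(W_{\eps,\xi})[\phi_{\eps,\xi}]+\tfrac12 I_\eps''(W_{\eps,\xi}+t\phi_{\eps,\xi})[\phi_{\eps,\xi},\phi_{\eps,\xi}]$ for some $t\in(0,1)$. The first-order term can be controlled using that $\phi_{\eps,\xi}\in K^\perp_{\eps,\xi}$ together with the identity $I_\eps'(W_{\eps,\xi})[\phi]=\langle R_{\eps,\xi}+\text{nonlinear corrections},\phi\rangle_\eps$, so by Propositions \ref{re1} and \ref{resto} it is bounded by $\|R_{\eps,\xi}\|_\eps\|\phi_{\eps,\xi}\|_\eps=O(\eps^4)$. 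The quadratic term is $O(\|\phi_{\eps,\xi}\|_\eps^2)=O(\eps^4)$. Both are $o(\eps^2)$. The $C^1$-uniformity in $\xi$ will follow by differentiating this relation, using that $\xi\mapsto\phi_{\eps,\xi}$ is $C^1$ with uniformly controlled derivative (as in \cite{MP}) and the same smallness estimates.

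For the second stage, write $I_\eps(W_{\eps,\xi})=J_\eps(W_{\eps,\xi})+\tfrac{\omega^2}{2}G_\eps(W_{\eps,\xi})$. The $J_\eps$ term is the standard single-equation energy, whose expansion in normal coordinates around $\xi$ is classical: after the change of variables $z=\eps y$ and using $\sqrt{|g_\xi(\eps y)|}=1-\tfrac{1}{6}R_{ij}(\xi)\eps^2 y_iy_j+O(\eps^3|y|^3)$ and $g^{ij}_\xi(\eps y)=\delta_{ij}+\tfrac{1}{3}R_{ikjl}(\xi)\eps^2 y_ky_l+O(\eps^3|y|^3)$, together with the radial symmetry of $U$ and the identities $R_{ii}=S_g(\xi)$ (trace of the Ricci tensor), one gets a constant $\eps$-independent leading contribution plus an $\eps^2$ correction of the form $\alpha-\beta S_g(\xi)$ for explicit $\alpha,\beta$ expressed as integrals of $U$, $U'$, and $U^p$. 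The cutoff $\chi_r$ only produces exponentially small corrections thanks to the decay estimate \eqref{pll}.

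The nonlocal term $\tfrac{\omega^2}{2}G_\eps(W_{\eps,\xi})=\tfrac{\omega^2 q}{2\eps^3}\int_M \Psi(W_{\eps,\xi})W_{\eps,\xi}^2\,d\mu_g$ requires separate care. Rescaling by $z=\eps y$ in the defining equation \eqref{eq:e1}, one sees that $\Psi(W_{\eps,\xi})(\exp_\xi(\eps y))$ is of order $\eps^2$ and, at leading order, solves a Poisson-like equation on $\rr^3$ whose solution depends only on $U$ (not on $\xi$). Hence $G_\eps(W_{\eps,\xi})=\gamma\eps^2+o(\eps^2)$ with $\gamma$ a universal constant and no scalar-curvature dependence at this order. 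This gives the $c_2\eps^2$ piece and absorbs any other $\xi$-independent $\eps^2$ contributions. Combining the two stages yields \eqref{e1}, with $c_3$ coming entirely from $J_\eps$.

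The main technical obstacle, in my view, is not the metric expansion (which is routine) but controlling the $\xi$-derivative of $G_\eps(W_{\eps,\xi})$ uniformly in the $C^1$ topology: one has to differentiate $\Psi(W_{\eps,\xi})$ with respect to $\xi$ via Lemma \ref{lem:e1}, rescale, and show that the resulting integrals still reduce to $\xi$-independent quantities at order $\eps^2$. Once this is in place, together with the $C^1$ dependence of $\phi_{\eps,\xi}$ on $\xi$, the $C^1$-uniform asymptotic expansion follows.
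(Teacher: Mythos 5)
Your plan is essentially the paper's: reduce $\widetilde I_\eps(\xi)$ to $I_\eps(W_{\eps,\xi})$ with $o(\eps^2)$ error, split $I_\eps = J_\eps + \tfrac{\omega^2}{2}G_\eps$, expand $J_\eps$ in normal coordinates to extract $S_g$ (this is Lemma~\ref{l5}, imported from~\cite{MP}), and show $G_\eps(W_{\eps,\xi}) = \beta\eps^2 + o(\eps^2)$ with no $\xi$-dependence at order $\eps^2$ (Lemma~\ref{l5bis}). You also correctly identify the chief technical obstacle as the $C^1$-uniformity of the nonlocal $G_\eps$ piece, which is indeed where the paper spends its effort.

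One wrinkle that would need to be addressed in your first stage: for $2<p<3$ the nonlinearity $f(s)=(s^+)^{p-1}$ is not twice differentiable at $s=0$, so $I_\eps$ is not $C^2$ and the second-order Taylor formula with remainder $\tfrac12 I_\eps''(W+t\phi)[\phi,\phi]$ does not exist as written. The paper sidesteps this by not Taylor-expanding; Lemma~\ref{fine4} follows the direct estimate scheme of Lemma~5.1 of~\cite{MP}, bounding $I_\eps(W_{\eps,\xi}+\phi_{\eps,\xi})-I_\eps(W_{\eps,\xi})$ through quantities such as $\|N_{\eps,\xi}(\phi)\|_\eps\le c\|\phi\|_\eps^{p-1}$ (Lemma~\ref{ne1}), which gives $O(\eps^{2(p-1)})=o(\eps^2)$ for all $p>2$. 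A second minor imprecision: when you say the linear term is bounded by $\|R_{\eps,\xi}\|_\eps\,\|\phi_{\eps,\xi}\|_\eps$, note that $R_{\eps,\xi}$ as defined in~\eqref{re} does not contain the $\omega^2 g(W_{\eps,\xi})$ piece; one also needs $\|S_{\eps,\xi}(0)\|_\eps\le c\eps^2$ (Lemma~\ref{se1}) to close the $O(\eps^4)$ bound. Neither issue changes the outcome, but both need to appear explicitly for the argument to be airtight, and the paper handles them through the three estimates~\eqref{new1}--\eqref{new3} in Lemma~\ref{fine4} (for the $C^0$ and $C^1$ statements respectively) rather than through a Taylor expansion.
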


Finally, we can prove  Theorem  \ref{main} by
showing that $ \widetilde I_\eps$ has a critical point in $M$.

\begin{proof}[Proof of Theorem \ref{main}]
If  $K$ is a $C^1$-stable critical set of     the scalar curvature of $M$ (see Definition \ref{c1stable}),
 by Proposition \ref{espa}, we deduce that if $\eps  $
is small enough the function $\widetilde I_\eps$ has a critical point $\xi_\eps$ such that $\xi_\eps\to\xi_0$ as $\eps$ goes to zero. The claim follows by
Proposition \ref{ridotto}.
\end{proof}

\section{The finite dimensional reduction}\label{redux}

This section is devoted to the proof of Proposition  \ref{resto}.

First, we remark that equation (\ref{p2}) is equivalent to
\begin{equation}\label{i1}
L_{\eps,\xi}(\phi)=N_{\eps,\xi}(\phi)+S_{\eps,\xi}(\phi)+R_{\eps,\xi},\end{equation}
where
\begin{equation}\label{ne}
 N_{\eps,\xi}(\phi):=\Pi^\perp_{\eps,\xi}\left\{
 i^*_\eps\left[f \left(W_{\eps,\xi}+\phi\right)-f \left(W_{\eps,\xi}\right)-f' \left(W_{\eps,\xi}\right)
 \phi\right] \right\},
 \end{equation}
\begin{equation}\label{se}
 S_{\eps,\xi}(\phi):=\omega^2\Pi^\perp_{\eps,\xi}\left\{i^*_\eps\left[q^2\Psi^2 \left(W_{\eps,\xi}+\phi\right)-
 2q \Psi  \left(W_{\eps,\xi}+\phi\right)\left(W_{\eps,\xi}+\phi\right)\right] \right\}
 \end{equation}
 and $R_{\eps,\xi}$ is defined in \eqref{re}.
In order to solve equation  (\ref{i1}), we need to find
a fixed point for the operator $T_{\eps,\xi}:K^\perp_{\eps,\xi}\rightarrow K^\perp_{\eps,\xi}$
defined by
$$T_{\eps,\xi}(\phi):=L_{\eps,\xi}^{-1}\left(N_{\eps,\xi}(\phi)+S_{\eps,\xi}(\phi)+R_{\eps,\xi}\right).$$
We are going to prove that $T_{\eps,\xi}$ is a contraction map on suitable ball of $H_\eps.$

In Proposition \ref{re1} we estimate the error term  $R_{\eps,\xi}$, while
in Proposition 3.5 of \cite{MP}, we   estimated the higher order term  $N_{\eps,\xi}(\phi)$.
\begin{lem}\label{ne1}
There exists $\eps_0>0,$  $c>0$  such that for any $\xi\in M,$   $\eps\in(0,\eps_0)$ and $r>0$ it holds true that
$$\|N_{\eps,\xi}(\phi)\|_\eps\le c \|\phi\|^2_\eps\ \text{if}\ p\ge3,\quad \|N_{\eps,\xi}(\phi)\|_\eps\le c \|\phi\|^{p-1}_\eps\ \text{if}\ 2<p<3$$
and
$$\|N_{\eps,\xi}(\phi_1)-N_{\eps,\xi}(\phi_2)\|_\eps\le \ell_\eps \|\phi_1-\phi_2\| _\eps $$
provided $\phi,\phi_1,\phi_2\in\left\{\phi\in H_\eps\ :\ \|\phi\|_\eps\le r\eps^2\right\}.$ Here $\ell_\eps\rightarrow0$ while $\eps\rightarrow0$
\end{lem}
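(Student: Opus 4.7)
The plan is to reduce the vector-valued estimate to a scalar pointwise inequality for $f(t)=(t^+)^{p-1}$ and then integrate. Since $\Pi^\perp_{\eps,\xi}$ is an orthogonal projection and the adjoint bound $\|i^*_\eps(v)\|_\eps \le c\,|v|_{p',\eps}$ holds with constant independent of $\eps$, it suffices to estimate
\[
\bigl|f(W_{\eps,\xi}+\phi)-f(W_{\eps,\xi})-f'(W_{\eps,\xi})\phi\bigr|_{p',\eps}
\]
by the right hand side of the claimed inequalities.

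For the pointwise estimate I would separate the two regimes according to the regularity of $f$ at zero. When $p\ge 3$, $f\in C^2$ away from $0$ with $|f''(t)|\le C|t|^{p-3}$, which gives
\[
|f(a+b)-f(a)-f'(a)b|\le C\bigl(|a|^{p-3}b^2+|b|^{p-1}\bigr);
\]
when $2<p<3$, $f'$ is only $(p-2)$-Hölder and one only obtains
\[
|f(a+b)-f(a)-f'(a)b|\le C|b|^{p-1}.
\]
Substituting $a=W_{\eps,\xi}$, $b=\phi$ and raising to the $p'$-th power, I would apply Hölder's inequality with conjugate exponents $(p-1)/(p-3)$ and $(p-1)/2$ to handle the mixed term. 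Using that $|W_{\eps,\xi}|_{p,\eps}$ is bounded uniformly in $\eps$ and $\xi$ (a standard change of variables $z=\eps y$ in normal coordinates reduces the integral to $\int_{\R^3}U^p$ up to a bounded factor) together with the embedding $|\phi|_{p,\eps}\le C\|\phi\|_\eps$ from \eqref{r2}, the two cases collapse to the claimed $\|\phi\|_\eps^{\min(2,p-1)}$.

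For the Lipschitz bound I would write, with $\phi_s:=\phi_2+s(\phi_1-\phi_2)$,
\[
f(W+\phi_1)-f(W+\phi_2)-f'(W)(\phi_1-\phi_2)=\int_0^1\bigl[f'(W+\phi_s)-f'(W)\bigr](\phi_1-\phi_2)\,ds,
\]
and invoke the analogous Taylor-type bounds for $f'$: $|f'(a+c)-f'(a)|\le C(|a|^{p-3}|c|+|c|^{p-2})$ when $p\ge 3$, and $|f'(a+c)-f'(a)|\le C|c|^{p-2}$ when $2<p<3$. After another Hölder estimate (and Minkowski in $s$), the factor multiplying $\|\phi_1-\phi_2\|_\eps$ is controlled by $\|\phi_j\|_\eps^{\min(1,p-2)}$. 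Under the restriction $\|\phi_j\|_\eps\le r\eps^2$ this yields $\ell_\eps=O(\eps^{\min(2,2(p-2))})$, which tends to zero as $\eps\to 0$.

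The subquadratic regime $2<p<3$ is where I expect the main obstacle, because there $f'$ is merely Hölder continuous: the Hölder exponents that emerge are more delicate, and confirming $\ell_\eps\to 0$ requires careful tracking that $p-2>0$ still produces a positive power of $\eps$ after the smallness assumption $\|\phi_j\|_\eps\le r\eps^2$ is used. Aside from this bookkeeping, the entire argument parallels Proposition~3.5 of \cite{MP}, since the presence of the extra term $S_{\eps,\xi}$ does not affect this purely nonlinear piece.
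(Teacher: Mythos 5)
Your proposal is correct and follows the same line of argument as Proposition 3.5 of \cite{MP}, which the paper cites for this lemma rather than reproving it: bound $\|N_{\eps,\xi}(\phi)\|_\eps$ by $c\,|f(W_{\eps,\xi}+\phi)-f(W_{\eps,\xi})-f'(W_{\eps,\xi})\phi|_{p',\eps}$, invoke the scalar Taylor/H\"older-continuity inequalities for $f$ and $f'$ split by $p\ge 3$ versus $2<p<3$, and close with H\"older together with the uniform bound on $|W_{\eps,\xi}|_{p,\eps}$ and the embedding $|\phi|_{p,\eps}\le c\|\phi\|_\eps$. The final bookkeeping $\ell_\eps=O(\eps^{\min(2,2(p-2))})\to 0$ is exactly what the smallness constraint $\|\phi_j\|_\eps\le r\eps^2$ is designed to produce, so there is no genuine obstacle in the $2<p<3$ regime.
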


It only remains to estimate the term  $S_{\eps,\xi}(\phi)$.

\begin{lem}\label{se1}
There exists $\eps_0>0,$  $c>0$  such that for any $\xi\in M,$   $\eps\in(0,\eps_0)$ and $r>0$ it holds true that
\begin{equation}\label{se11}\|S_{\eps,\xi}(\phi)\|_\eps\le c  \eps^2\end{equation}
and
\begin{equation}\label{se12}\|S_{\eps,\xi}(\phi_1)-S_{\eps,\xi}(\phi_2)\|_\eps\le \ell_\eps \|\phi_1-\phi_2\| _\eps \end{equation}
provided $\phi,\phi_1,\phi_2\in\left\{\phi\in H_\eps\ :\ \|\phi\|_\eps\le r\eps^2\right\}.$ 
Here $\ell_\eps\rightarrow0$ while $\eps\rightarrow0$

\end{lem}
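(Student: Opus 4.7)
The plan is to exploit two features of the nonlocal potential $\Psi$: the uniform bound $0<q\Psi(u)<1$ from \eqref{psipos}, and the smallness of $\Psi(u)$ in $L^6_g$ whenever $u=W_{\eps,\xi}+\phi$ is a profile concentrating at scale $\eps$. First, since $\Pi^\perp_{\eps,\xi}$ has norm at most $1$ and since the adjoint $i^*_\eps:L^{6/5}_\eps\to H_\eps$ is bounded uniformly in $\eps$ (the same computation that yields the $L^{p'}\to H_\eps$ bound works at the endpoint $p=6$ via the Sobolev embedding $H^1_g\hookrightarrow L^6_g$), I reduce \eqref{se11} to estimating $\|g(u)\|_{L^{6/5}_\eps}$, where $g(u):=q\Psi(u)(q\Psi(u)-2)u$. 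Using $|q\Psi(u)-2|\le 2$ and Hölder with exponents $6$ and $3/2$, this is bounded by $2q\,\|\Psi(u)\|_{L^6_g}\,\|u\|_{L^{3/2}_g}$.

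The key a priori estimate is $\|\Psi(u)\|_{L^6_g}\le c\eps^{5/2}$. Testing \eqref{eq:e1} against $\Psi(u)$, discarding the nonnegative term $q^2u^2\Psi^2$, and applying Hölder with exponents $6/5$ and $6$ together with the Sobolev embedding yields $\|\Psi(u)\|_{H^1_g}\le c\|u\|_{L^{12/5}_g}^2$. A change of variables through $\exp_\xi$ gives $\|W_{\eps,\xi}\|_{L^q_g}\le c\eps^{3/q}$ for every $q\ge 1$, so in particular $\|W_{\eps,\xi}\|_{L^{12/5}_g}\le c\eps^{5/4}$ and $\|W_{\eps,\xi}\|_{L^{3/2}_g}\le c\eps^{2}$. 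The standard inequalities $\|\phi\|_{L^2_g}\le c\eps^{3/2}\|\phi\|_\eps$ and $\|\phi\|_{L^6_g}\le c\eps^{1/2}\|\phi\|_\eps$, combined with interpolation, show that $\phi$ contributes only lower-order terms when $\|\phi\|_\eps\le r\eps^2$. Assembling everything,
$$
\|S_{\eps,\xi}(\phi)\|_\eps\le c\,\|g(u)\|_{L^{6/5}_\eps}=c\,\eps^{-5/2}\|g(u)\|_{L^{6/5}_g}\le c\,\eps^{-5/2}\cdot\eps^{5/2}\cdot\eps^{2}=c\,\eps^2,
$$
which proves \eqref{se11}.

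For the Lipschitz estimate \eqref{se12}, I use the $C^1$ regularity of $\Psi$ from Lemma \ref{lem:e1} to write
$$
S_{\eps,\xi}(\phi_1)-S_{\eps,\xi}(\phi_2)=\int_0^1 S'_{\eps,\xi}(\phi_t)[\phi_1-\phi_2]\,dt,\qquad\phi_t:=\phi_2+t(\phi_1-\phi_2),
$$
and differentiate $g(u_t)=q\Psi(u_t)(q\Psi(u_t)-2)u_t$. This produces three contributions: two of the form $V_{u_t}[\phi_1-\phi_2]\,u_t$ with scalar coefficients bounded by $1$ in absolute value, and one of the form $(q^2\Psi^2(u_t)-2q\Psi(u_t))(\phi_1-\phi_2)$ which already inherits the smallness of $\Psi$. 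Testing \eqref{eq:e2} for $V_u[h]$ against itself and using $|1-q\Psi|\le 1$ gives $\|V_{u_t}[h]\|_{H^1_g}\le c\|u_t\|_{L^3_g}\|h\|_{L^2_g}\le c\eps\cdot\eps^{3/2}\|h\|_\eps=c\eps^{5/2}\|h\|_\eps$. Repeating the Hölder argument of the previous step on each of the three contributions yields an extra factor $\eps^{\alpha}$ with $\alpha>0$ beyond $\|\phi_1-\phi_2\|_\eps$, so \eqref{se12} holds with $\ell_\eps=O(\eps^2)\to 0$.

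The main obstacle is essentially bookkeeping: one has to track the scaling $\|W_{\eps,\xi}\|_{L^q_g}\sim\eps^{3/q}$ through the nonlocal operator $\Psi$, which squares the $L^{12/5}$ contribution, and combine it with the $\eps^{-5/2}$ prefactor in the rescaled norm $\|\cdot\|_{L^{6/5}_\eps}$ so that the exponents balance to exactly $\eps^2$. The conceptual content is that $\Psi(W_{\eps,\xi})$ is two orders of magnitude smaller in $\eps$ than $W_{\eps,\xi}$ itself, which is precisely what makes the Maxwell coupling term a lower-order perturbation in the Lyapunov--Schmidt reduction.
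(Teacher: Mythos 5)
Your proof is correct, and it takes a genuinely cleaner route than the paper's. The paper estimates $\|S_{\eps,\xi}(\phi)\|_\eps$ through the $L^{p'}_\eps\to H_\eps$ boundedness of $i^*_\eps$ (with $p'=p/(p-1)$), splitting via H\"older with exponents $p/p'$ and $p/(p-2)$; this forces the authors to carry $p$-dependent powers of $\eps$ through the computation (e.g.\ $\eps^5/\eps^{3/p'}$). You instead use the endpoint duality $i^*_\eps:L^{6/5}_\eps\to H_\eps$ — which is indeed uniformly bounded, since $H_\eps\hookrightarrow L^6_\eps$ with a uniform constant via $\|u\|_{H^1_g}\le c\sqrt\eps\|u\|_\eps$ and the Sobolev embedding — and pair $\|\Psi(u)\|_{L^6_g}\le c\eps^{5/2}$ (which is exactly Lemma~\ref{lem:e3}) with $\|W_{\eps,\xi}+\phi\|_{L^{3/2}_g}\sim\eps^2$. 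The exponents then telescope to precisely $\eps^2$, independently of $p$. For~\eqref{se12} the paper decomposes $S_{\eps,\xi}(\phi_1)-S_{\eps,\xi}(\phi_2)$ by hand into four terms $I_1,\dots,I_4$, applying the mean value theorem at the level of $\Psi$ and $\Psi^2$ separately; you instead write the difference as $\int_0^1 S'_{\eps,\xi}(\phi_t)[\phi_1-\phi_2]\,dt$, differentiate $g$ directly, and estimate the three resulting products with the same endpoint H\"older, using $\|V_{u}[h]\|_{H^1_g}\le c\|u\|_{L^3_g}\|h\|_{L^2_g}\le c\eps^{5/2}\|h\|_\eps$ in place of Lemma~\ref{lem:e7}'s $L^6$--$L^6$--$L^{3/2}$ split. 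Both routes rest on the same three ingredients — the a priori $\Psi$ and $\Psi'$ bounds, the scaling $\|W_{\eps,\xi}\|_{L^q_g}\sim\eps^{3/q}$, and the smallness $\|\phi\|_\eps\le r\eps^2$ — so the conceptual content is identical; what your version buys is a $p$-independent and more compact bookkeeping. One minor imprecision: your claim $\ell_\eps=O(\eps^2)$ is slightly optimistic — the term $q\Psi(q\Psi-2)(\phi_1-\phi_2)$ yields only $O(\eps^{3/2})\|\phi_1-\phi_2\|_\eps$ by your own H\"older split — but $\ell_\eps\to0$ is all the lemma requires, so this does not affect the conclusion.
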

\begin{proof}
Let us prove \eqref{se11}.
By Remark 2.2 in \cite{MP} it follows that
$$\|S_{\eps,\xi}(\phi)\|_\eps\le c\left|\Psi^2\(W_{\eps,\xi}+\phi\)\(W_{\eps,\xi}+\phi\)\right|_{\eps,p'}+\left|\Psi\(W_{\eps,\xi}+\phi\)\(W_{\eps,\xi}+\phi\)\right|_{\eps,p'}.$$
By Lemma \ref{lem:e3} we have
\begin{align*}
&\left|\Psi\(W_{\eps,\xi}+\phi\)\(W_{\eps,\xi}+\phi\)\right|_{\eps,p'}\\&=\({1\over\eps^3}\int\limits_M \left|\Psi\(W_{\eps,\xi}+\phi\)\right|^{p'}\left|W_{\eps,\xi}+\phi \right|^{p'}\)^{1/p'}\\
&\le c{1\over\eps^{3/p'}}\(\int\limits_M \left|\Psi\(W_{\eps,\xi}+\phi\)\right|^{p}\)^{1/p}\(\left|W_{\eps,\xi}+\phi \right|^{p\over p-2}\)^{p-2\over p}\\
\\
&\le c{1\over\eps^{3/p'}}  \left\|\Psi\(W_{\eps,\xi}+\phi\)\right\|_{H^1_g} \left |W_{\eps,\xi}+\phi \right |_{{p\over p-2},g}\\
&\le c{1\over\eps^{3/p'}}  \(\eps^{5\over2}+\|\phi\| ^2_{H^1_g}\)  \(\eps^{3{p\over p-2}}+\|\phi\| _{H^1_g}\)\ \(\hbox{because $\|u\|_{H^1_g}\le \sqrt \eps\|u\| _\eps$}\)
\\ &\le c{1\over\eps^{3/p'}}  \(\eps^{5\over2}+\eps\|\phi\| ^2_\eps\)  \(\eps^{3{p\over p-2}}+\sqrt\eps\|\phi\| _\eps\)\ \(\hbox{because $\|\phi\|_\eps\le r\eps^2$}\)\\
&\le  c{\eps^5\over\eps^{3/p'}}.
\end{align*}
 By Lemma \ref{lem:e3} and the previous estimate we deduce the following
\begin{align*}
&\left|\Psi^2\(W_{\eps,\xi}+\phi\)\(W_{\eps,\xi}+\phi\)\right|_{\eps,p'}\le  c{\eps^2},
\end{align*}
and then \eqref{se11} follows.

Let us prove \eqref{se12}.
By Remark 2.2 in \cite{MP} it follows that
\begin{align*}
&\|S_{\eps,\xi}(\phi_1)-S_{\eps,\xi}(\phi_2)\|_\eps\\ &\le
c\left|\left[\Psi \(W_{\eps,\xi}+\phi_1\)-\Psi \(W_{\eps,\xi}+\phi_2\)\right] W_{\eps,\xi} \right|_{\eps,p'}
+c\left|\left[\Psi ^2\(W_{\eps,\xi}+\phi_1\)-\Psi ^2\(W_{\eps,\xi}+\phi_2\)\right] W_{\eps,\xi} \right|_{\eps,p'}\\
&+c\left|\Psi\(W_{\eps,\xi}+\phi_1\) \phi_1-\Psi\(W_{\eps,\xi}+\phi_2\) \phi_2\right|_{\eps,p'}
+c\left|\Psi\(W_{\eps,\xi}+\phi_1\) \phi_1-\Psi\(W_{\eps,\xi}+\phi_2\) \phi_2\right|_{\eps,p'}\\
&=:I_1+I_2+I_3+I_4. \end{align*}

By Remark \ref{remark:Weps} and Lemma \ref{lem:e7} we have for some $\theta\in(0,1)$:
\begin{align*}
&I_1^{p'}= {1\over\eps^3}\int\limits_M \left|\Psi' \(W_{\eps,\xi}+\theta\phi_1 +(1-\theta)\phi_2\)(\phi_1-\phi_2)\right|^{p'}\left|W_{\eps,\xi}  \right|^{p'} \\
&\le {1\over\eps^3}\(\int\limits_M \left|\Psi' \(W_{\eps,\xi}+\theta\phi_1 +(1-\theta)\phi_2\)(\phi_1-\phi_2)\right|^{p}\)^{p'/p}
\(\int\limits_M \left|W_{\eps,\xi}  \right|^{p'p\over p-p'}\)^{p-p'\over p} \\
&\le c { \eps ^{3{p-p'\over p} }\over\eps^3}\( \eps^2+\|\phi_1\|_{H^1_g}+\|\phi_2\|_{H_g^1}\)^{p'}\|\phi_1-\phi_2\|^{p'}_{H^1_g} \\
  & \(\hbox{because  $\|\phi_i\| _{H^1_g}\le \sqrt\eps\|\phi_i\| _{\eps}\le c\eps^{5/2} $}\)\\
&\le   c   \eps ^{\(2-{3\over p}\)p'} \(\sqrt \eps\)^{p'} \|\phi_1-\phi_2\|^{p'}_{\eps}.
\end{align*}
 By Lemma \ref{lem:e3} and the estimate of $I_1$ we have
\begin{align*}
&I_2^{p'}= {1\over\eps^3}\int\limits_M \left|\Psi  \(W_{\eps,\xi}+\theta\phi_1  \)+\Psi  \(W_{\eps,\xi}+\theta\phi_2  \) \right|^{p'}\left|\Psi  \(W_{\eps,\xi}+\theta\phi_1  \)-\Psi  \(W_{\eps,\xi}+\theta\phi_2  \) \right|^{p'} \left|W_{\eps,\xi}  \right|^{p'}\\
&\le c  \(  \eps^{3/2}+\|\phi_1\|^2_{H^1_g}+\|\phi_2\|^2_{H_g^1}\)^{p'}\eps ^{\(2-{3\over p}\)p'} \(\sqrt \eps\)^{p'} \|\phi_1-\phi_2\|^{p'}_{\eps}\\
  & \(\hbox{because  $\|\phi_i\| _{H^1_g}\le \sqrt\eps\|\phi_i\| _{\eps}\le c\eps^{5/2} $}\)\\
&\le c     \eps^{{3\over 2}p'}\eps ^{\(2-{3\over p}\)p'} \(\sqrt \eps\)^{p'} \|\phi_1-\phi_2\|^{p'}_{\eps},
\end{align*}

 By Lemma \ref{lem:e3} and Lemma \ref{lem:e7}   we have
\begin{align*}
&I_3^{p'}\le {1\over\eps^3}\int\limits_M \left|\Psi' \(W_{\eps,\xi}+\theta\phi_1 +(1-\theta)\phi_2\)(\phi_1-\phi_2)\right|^{p'}\left|\phi_1  \right|^{p'}
+{1\over\eps^3}\int\limits_M \left|\Psi  \(W_{\eps,\xi}+\theta\phi_ 2\) \right|^{p'}\left|\phi_1-\phi_2  \right|^{p'} \\
&\le c {1\over\eps^3}\(\int\limits_M \left|\Psi' \(W_{\eps,\xi}+\theta\phi_1 +(1-\theta)\phi_2\)(\phi_1-\phi_2)\right|^{p}\)^{p'/p}
\(\int\limits_M \left|\phi_1 \right|^{p'p\over p-p'}\)^{p-p'\over p} \\ &
+ c{1\over\eps^3}\(\int\limits_M \left|\phi_1-\phi_2  \right|^{p}\)^{p' /p}
\(\int\limits_M \left|\Psi  \(W_{\eps,\xi}+ \phi_2\) \right|^{p'p\over p-p'}\)^{p-p'\over p}
\\
&\le c {1\over\eps^3}\( \eps^2+\|\phi_1\|_{H^1_g}+\|\phi_2\|_{H_g^1}\)^{p'}\|\phi_1-\phi_2\|^{p'}_{H^1_g}\|\phi_1 \|^{p'}_{H^1_g}
+c {1\over\eps^3}\( \eps^{5/2}+\|\phi_2\|^2_{H^1_g} \)^{p'}\|\phi_1-\phi_2\|^{p'}_{H^1_g}
\\ & \(\hbox{because  $\|\phi_i\| _{H^1_g}\le \sqrt\eps\|\phi_i\| _{\eps}\le c\eps^{5/2} $}\)\\
&\le   c\eps^{{5\over 2}p'-3}\|\phi_1-\phi_2\|_\eps
\end{align*}
 and
 \begin{align*}
&I_4^{p'}\le {1\over\eps^3}\int\limits_M \left|\Psi^2 \(W_{\eps,\xi}+ \phi_1 \)\right|^{p'} \left|\phi_1-\phi_2 \right|^{p'}
\\ &+ {1\over\eps^3}\int\limits_M\left|\Psi  \(W_{\eps,\xi}+ \phi_1 \)+\Psi  \(W_{\eps,\xi}+ \phi_2 \) \right|^{p'} \left|\Psi ' \(W_{\eps,\xi}+\theta\phi_1 +(1-\theta)\phi_2\)(\phi_1-\phi_2)\right|^{p'}\left|\phi_2  \right|^{p'}
 \\ &\le c {1\over\eps^3} \( \eps^{3/2}+\|\phi_1\|^2_{H^1_g}\)^{p'} \( \eps^{5/2}+\|\phi_1\|^2_{H^1_g}\)^{p'} \|\phi_1-\phi_2 \|^{p'}_{H^1_g}
\\ &+ c {1\over\eps^3} \( \eps^{3/2}+\|\phi_1\|^2_{H^1_g}+\|\phi_2\|^2_{H^1_g}\)^{p'} \( \eps^{2}+\|\phi_1\| _{H^1_g}+\|\phi_2\| _{H^1_g}\)^{p'} \|\phi_2\| ^{p'}_{H^1_g}\|\phi_1-\phi_2 \|^{p'}_{H^1_g}
 \\
&  \(\hbox{because  $\|\phi_i\| _{H^1_g}\le \sqrt\eps\|\phi_i\| _{\eps}\le c\eps^{5/2} $}\)\\
&\le   c\eps^{{4}p'-3}\|\phi_1-\phi_2\|_\eps
\end{align*}
Collecting the estimates of $I_i$'s we get \eqref{se12}.

\end{proof}

\begin{proof}[Proof of Proposition \ref{resto} (completed)]
By Proposition \ref{inv}, we deduce
$$\left\|T_{\eps,\xi}(\phi)\right\|_\eps\le c\left(\left\|N_{\eps,\xi}(\phi)\right\|_\eps+\left\|S_{\eps,\xi}(\phi)\right\|_\eps+
\left\|R_{\eps,\xi} \right\|_\eps\right)$$
and
$$\left\|T_{\eps,\xi}(\phi_1)-T_{\eps,\xi}(\phi_2)\right\|_\eps\le c \left\|N_{\eps,\xi}(\phi_1)-N_{\eps,\xi}(\phi_2)
\right\|_\eps+c\left\|S_{\eps,\xi}(\phi_1)-S_{\eps,\xi}(\phi_2)
\right\|_\eps.$$
By Lemma \ref{ne1} and Lemma \ref{se1} together with Proposition \ref{re1}, we immediately deduce that
$T_{\eps,\xi}$ is a contraction in the ball
centered at $0$ with radius $c\eps^2$
in $K_{\eps,\xi}^\perp$ for a suitable constant $c.$ Then $T_{\eps,\xi}$ has a unique fixed point.

In order to prove that the map $\xi\rightarrow \phi_{\eps,\xi}$ is a $C^1-$map, we apply the Implicit Function Theorem
to the $C^1-$function $G:M\times H_\eps\rightarrow H_\eps$ defined by
$$G( \xi,u):=\Pi^\perp_{\eps,\xi}\left\{W_{\eps,\xi}+\Pi^\perp_{\eps,\xi}u-i^*_\eps
\left[f\left(W_{\eps,\xi}+\Pi^\perp_{\eps,\xi}u\right)+\omega^2g\left(W_{\eps,\xi}+\Pi^\perp_{\eps,\xi}u\right)\right]\right\}+\Pi _{\eps,\xi}u.$$
 Indeed, $G\left(\xi,\phi_{\eps,\xi}\right)=0$ and the linearized operator
  ${\partial G\over\partial u}\left(\xi,\phi_{\eps,\xi}\right):H_\eps\rightarrow H_\eps$ defined by
$${\partial G\over\partial u}\left(\xi,\phi_{\eps,\xi}\right)(u)=
\Pi^\perp_{\eps,\xi}\left\{ \Pi^\perp_{\eps,\xi}(u)-i^*_\eps
\left[f'\left(W_{\eps,\xi}+\phi_{\eps,\xi}\right) \Pi^\perp_{\eps,\xi}(u)+\omega^2
g'\left(W_{\eps,\xi}+\phi_{\eps,\xi}\right) \Pi^\perp_{\eps,\xi}(u)\right]\right\}+\Pi _{\eps,\xi}(u)$$
is invertible, provided $\eps$ is small enough.
For any $\phi$ with $\left\|\phi \right\|_\eps\le c\eps^2$ it holds true that
\begin{eqnarray*}
& &\left\|{\partial G\over\partial u}\left(\xi,\phi_{\eps,\xi}\right)(u)\right\|_\eps
\\ & & \ge c\left\|\Pi _{\eps,\xi}(u)\right\|_\eps+
c\left\|\Pi^\perp_{\eps,\xi}\left\{ \Pi^\perp_{\eps,\xi}(u)-i^*_\eps
\left[f'\left(W_{\eps,\xi}+\phi_{\eps,\xi}\right) \Pi^\perp_{\eps,\xi}(u)+\omega^2
g'\left(W_{\eps,\xi}+\phi_{\eps,\xi}\right) \Pi^\perp_{\eps,\xi}(u)\right]\right\}\right\|_\eps
\\ & & \ge c\left\|\Pi _{\eps,\xi}(u)\right\|_\eps+
c\left\|L_{\eps,\xi}\left( \Pi^\perp_{\eps,\xi}(u)\right)\right\|_\eps
\\ & &- c\left\|\Pi^\perp_{\eps,\xi}\left\{ i^*_\eps
\left[\left(f'\left(W_{\eps,\xi}+\phi_{\eps,\xi}\right)-f'\left(W_{\eps,\xi}\right)\right)  \Pi^\perp_{\eps,\xi}(u)\right]\right\}\right\|_\eps
\\ & &- c\left\|\Pi^\perp_{\eps,\xi}\left\{ i^*_\eps
\left[\omega^2 g'\left(W_{\eps,\xi}+\phi_{\eps,\xi}\right)  \Pi^\perp_{\eps,\xi}(u)\right]\right\}\right\|_\eps
\\ & & \ge c\left\|\Pi _{\eps,\xi}(u)\right\|_\eps+
c\left\|  \Pi^\perp_{\eps,\xi}(u) \right\|_\eps
- c\eps^{2\min\{p-2,1\}}\left\|  \Pi^\perp_{\eps,\xi}(u) \right\|_\eps
-c\eps^{{3}-{3\over p'}}\left\|  \Pi^\perp_{\eps,\xi}(u) \right\|_\eps\\ & & \ge c \left\|u \right\|_\eps.\end{eqnarray*}
Indeed, at page 246 of \cite{MP} we proved that
\begin{eqnarray*}
& &\left\|\Pi^\perp_{\eps,\xi}\left\{ i^*_\eps
\left[\left(f'\left(W_{\eps,\xi}+\phi_{\eps,\xi}\right)-f'\left(W_{\eps,\xi}\right)\right) \Pi^\perp_{\eps,\xi}(u)\right]\right\}\right\|_\eps
\\ & &\le c\left( \left  \|\phi\right\|^{p-2} _{  \eps}+\left  \|\phi\right\| _{  \eps}\right)\left\|  \Pi^\perp_{\eps,\xi}(u) \right\|_\eps.
\end{eqnarray*}

Moreover we have

\begin{eqnarray*}
& &\left\|\Pi^\perp_{\eps,\xi}\left\{ i^*_\eps
\left[\omega^2 g'\left(W_{\eps,\xi}+\phi_{\eps,\xi}\right)  \Pi^\perp_{\eps,\xi}(u)\right]\right\}\right\|_\eps
\\ & &\le c\left| \(W_{\eps,\xi}+\phi_{\eps,\xi}\)
\left(2q-2q^2\Psi \(W_{\eps,\xi}+\phi_{\eps,\xi}\)\right)\Psi' \(W_{\eps,\xi}+\phi_{\eps,\xi}\) \left[\Pi^\perp_{\eps,\xi}(u)\right]\right |_{\eps,p'}
\\ & &+ c\left| \left[2q\Psi\(W_{\eps,\xi}+\phi_{\eps,\xi}\)- q^2\Psi ^2\(W_{\eps,\xi}+\phi_{\eps,\xi}\)\right]  \Pi^\perp_{\eps,\xi}(u) \right |_{\eps,p'}
\\ & &=I_1+I_2,
\end{eqnarray*}
  by Lemma \ref{lem:e7} we get
\begin{eqnarray*}
& &I_1\le{1\over \eps^{3\over p'}}
\left | W_{\eps,\xi}+\phi_{\eps,\xi} \right |_{g,6}\left | \Psi' \(W_{\eps,\xi}+\phi_{\eps,\xi}\)\Pi^\perp_{\eps,\xi}(u)\right |_{g,6}
\left |2q-2q^2\Psi \(W_{\eps,\xi}+\phi_{\eps,\xi}\)\right |_{g,{3-p'\over3p'}}\\
& &\le c{1\over \eps^{3\over p'}}\(\sqrt\eps\( {1\over\eps^3}\int_MW_{\eps,\xi} ^6\)^{1\over6}+\|\phi\|_{H^1_g} \)\(\eps^2+\|\phi\|_{H^1_g} \)
\left \|\Pi^\perp_{\eps,\xi} u\right \|_{H^1_g}\\ & &\le  c{1\over \eps^{3\over p'}}  \eps^3
\left \|\Pi^\perp_{\eps,\xi} u\right \|_{\eps}
 \end{eqnarray*}
and by Lemma \ref{lem:e3} we get
\begin{eqnarray*}
& &I_2\le{1\over \eps^{3\over p'}}
\left |\Pi^\perp_{\eps,\xi} u\right |_{g,6}
\left |\Psi \(W_{\eps,\xi}+\phi_{\eps,\xi}\)\right |_{g,6}
\left |2q-q^2\Psi \(W_{\eps,\xi}+\phi_{\eps,\xi}\)\right |_{g,{3-p'\over3p'}}\\ & &\le c{1\over \eps^{3\over p'}}\(\eps^{5\over2}+\|\phi\|_\eps^2\)
\left \|\Pi^\perp_{\eps,\xi} u\right \|_{H^1_g}\le  c{1\over \eps^{3\over p'}} \eps^{{5\over2}}\sqrt\eps
\left \|\Pi^\perp_{\eps,\xi} u\right \|_{\eps}.
 \end{eqnarray*}

That concludes the proof.\end{proof}

\section{The reduced energy}\label{red-exp}

This section is devoted to the proof of Proposition \ref{espa}.

The first important result is the following one.

\begin{lem}\label{fine4}
It holds true that
\begin{equation}\label{fine41}
 \widetilde I_\eps(\xi)= I_\eps\left(W_{\eps,\xi}  +\phi_{\eps,\xi} \right)=I_\eps\left(W_{\eps,\xi}
 \right)+o\left(\eps^2\right)=
J_\eps\left(W_{\eps,\xi}
 \right)+\frac{\omega^2}{2}G_\eps\left(W_{\eps,\xi}
 \right)+o\left(\eps^2\right)
\end{equation}
uniformly with respect to
$\xi $  as $\eps$ goes to zero.

Moreover, setting $ \xi(y)={\rm exp}_\xi(y),$ $y\in B(0,r)$ it holds true that
\begin{eqnarray}\label{fine42}
& & \left({\partial\over\partial y_h} \widetilde I_\eps(\xi(y))
\right)_{|_{y=0}}=\left({\partial\over\partial y_h}I_\eps\left(W_{\eps,\xi(y)}  +\phi_{\eps,\xi(y)} \right)
\right)_{|_{y=0}}\nonumber\\ & &=\left({\partial\over\partial y_h}I_\eps\left(W_{\eps,\xi(y)}  \right)\right)_{|_{y=0}}+o\left(\eps^2\right)
\nonumber\\ & &=\left({\partial\over\partial y_h}J_\eps\left(W_{\eps,\xi(y)}  \right)\right)_{|_{y=0}}+
\frac{\omega^2}{2}\left({\partial\over\partial y_h}G_\eps\left(W_{\eps,\xi(y)}  \right)\right)_{|_{y=0}}+o\left(\eps^2\right)
\end{eqnarray}
uniformly with respect to
$\xi $  as $\eps$ goes to zero.
 \end{lem}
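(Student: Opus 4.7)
My plan is to treat \eqref{fine41} and \eqref{fine42} together: both reduce to showing that replacing $W_{\eps,\xi}+\phi_{\eps,\xi}$ by $W_{\eps,\xi}$ inside $I_\eps$ costs at most $o(\eps^2)$, at the level of values in the first case and at the level of $\xi$-derivatives in the second. The decomposition $I_\eps=J_\eps+(\omega^2/2)G_\eps$ in the last equalities of both identities is tautological from \eqref{ieps}. The inputs I will use throughout are $\|\phi_{\eps,\xi}\|_\eps\le c\eps^2$ from Proposition \ref{resto}, $\|R_{\eps,\xi}\|_\eps\le c\eps^2$ from Proposition \ref{re1}, and the bounds $\|N_{\eps,\xi}(\phi)\|_\eps,\ \|S_{\eps,\xi}(\phi)\|_\eps=O(\eps^2)$ from Lemmas \ref{ne1}--\ref{se1}; all of them are uniform in $\xi$, which also yields the claimed uniformity.

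For \eqref{fine41}, I would write a second order Taylor expansion,
$$I_\eps(W+\phi)-I_\eps(W)=I'_\eps(W)[\phi]+\int_0^1(1-t)\,I''_\eps(W+t\phi)[\phi,\phi]\,dt.$$
Writing $I'_\eps(u)[v]=\langle u-i^*_\eps(f(u)+\omega^2 g(u)),v\rangle_\eps$ and using $\phi\in K^\perp_{\eps,\xi}$ replaces the first factor by its $\Pi^\perp_{\eps,\xi}$-projection, which by \eqref{re} and \eqref{se} equals $-R_{\eps,\xi}-S_{\eps,\xi}(0)$. Cauchy--Schwarz then gives $|I'_\eps(W)[\phi]|\le(\|R_{\eps,\xi}\|_\eps+\|S_{\eps,\xi}(0)\|_\eps)\,\|\phi\|_\eps=O(\eps^4)$. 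A uniform bound on $I''_\eps$ along the segment from $W$ to $W+\phi$ (the pure-power piece is standard, the $g$-piece follows from Lemma \ref{lem:e1} together with the appendix estimates) yields $|I''_\eps(W+t\phi)[\phi,\phi]|\le C\|\phi\|^2_\eps=O(\eps^4)$, so both remainders are $o(\eps^2)$.

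For \eqref{fine42} I would differentiate the composition to obtain $\partial_{y_h}\widetilde I_\eps=I'_\eps(W+\phi)[\partial_{y_h}W+\partial_{y_h}\phi]$ at $y=0$ and subtract $I'_\eps(W)[\partial_{y_h}W]$, leaving two error pieces
$$A=\int_0^1 I''_\eps(W+t\phi)[\phi,\partial_{y_h}W]\,dt\quad\text{and}\quad B=I'_\eps(W+\phi)[\partial_{y_h}\phi].$$
The term $B$ is the easy one: equation \eqref{p2} exactly says that the Riesz representative $\nabla I_\eps(W+\phi)$ lies in $K_{\eps,\xi}$, with $H_\eps$-norm dominated by $\|R_{\eps,\xi}\|_\eps+\|N_{\eps,\xi}(\phi)\|_\eps+\|S_{\eps,\xi}(\phi)\|_\eps=O(\eps^2)$. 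Hence only $\Pi_{\eps,\xi}\partial_{y_h}\phi$ contributes, and differentiating the constraint $\langle\phi_{\eps,\xi(y)},Z^i_{\eps,\xi(y)}\rangle_\eps=0$ in $y$ gives $|\langle\partial_{y_h}\phi,Z^i_{\eps,\xi}\rangle_\eps|=|\langle\phi,\partial_{y_h}Z^i_{\eps,\xi(y)}|_{y=0}\rangle_\eps|=O(\eps^2\cdot\eps^{-1})=O(\eps)$, so $|B|=O(\eps^3)$.

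The main obstacle is the term $A$, because $\|\partial_{y_h}W_{\eps,\xi}\|_\eps\sim\eps^{-1}$ kills every naive estimate. My plan is to use, in the normal chart centred at $\xi$, the decomposition $\partial_{y_h}W_{\eps,\xi(y)}|_{y=0}=-\eps^{-1}Z^h_{\eps,\xi}+\tilde Z^h_{\eps,\xi}$, where $\tilde Z^h$ collects the derivative of the cutoff $\chi_r$ and the quadratic-and-higher corrections from the exponential map. By \eqref{f2} the linear-order metric terms vanish, so both $\|\tilde Z^h\|_\eps$ and the $L_{\eps,\xi}$-defect $\|L_{\eps,\xi}Z^h\|_\eps$ gain an extra power of $\eps$ compared to the naive scaling. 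Combined with the orthogonality $\langle Z^h,\phi\rangle_\eps=0$ — which kills the leading piece of $I''_\eps(W+t\phi)[\phi,Z^h]$ and leaves only terms expressible through $L_{\eps,\xi}Z^h$ plus nonlinear corrections — and with $\|\phi\|_\eps=O(\eps^2)$, both parts of $A$ come out $o(\eps^2)$. Quantifying these cancellations, which is essentially the same bookkeeping that will later isolate the scalar curvature in Proposition \ref{espa}, is the delicate step.
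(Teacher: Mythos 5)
Your strategy — a second-order Taylor expansion of $I_\eps$ around $W_{\eps,\xi}$, driven by the orthogonality $\phi_{\eps,\xi}\in K^\perp_{\eps,\xi}$ and the error bounds on $R_{\eps,\xi}$, $N_{\eps,\xi}$, $S_{\eps,\xi}$ — is genuinely different from the paper's. The paper instead splits $I_\eps=J_\eps+\frac{\omega^2}{2}G_\eps$, refers the $J_\eps$-comparisons to Lemma 5.1 of \cite{MP}, and proves only the three $G_\eps$-specific estimates (\ref{new1})--(\ref{new3}); these rest not on orthogonality cancellations but on the smallness of $\Psi$ and $\Psi'$ from Lemmas \ref{lem:e3}, \ref{lem:e7} and the weak $L^6$-convergence $\tilde v_\eps/\eps^2\rightharpoonup\gamma$ of Lemma \ref{lem:e5}. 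Your $C^0$ step (first-order term $O(\eps^4)$ by orthogonality, second-order $O(\eps^4)$ from a uniform bound on $I''_\eps$) is sound.

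Two steps in the $C^1$ part are not actually established. For $B$ you assert that (\ref{p2}) gives $\|\nabla I_\eps(W_{\eps,\xi}+\phi_{\eps,\xi})\|_\eps\le\|R_{\eps,\xi}\|_\eps+\|N_{\eps,\xi}(\phi)\|_\eps+\|S_{\eps,\xi}(\phi)\|_\eps$. But $R$, $N$, $S$ are by definition $\Pi^\perp_{\eps,\xi}$-projected, and (\ref{p2}) only says $\Pi^\perp_{\eps,\xi}\nabla I_\eps(W_{\eps,\xi}+\phi_{\eps,\xi})=0$; it gives no information about the $K_{\eps,\xi}$-component, which is precisely the quantity equation (\ref{p3}) is designed to kill. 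That this component is $O(\eps^2)$ is true, but it requires pairing $\nabla I_\eps(W_{\eps,\xi}+\phi_{\eps,\xi})$ against the $Z^i_{\eps,\xi}$ and using unprojected versions of the error estimates together with a direct H\"older bound on $\eps^{-3}\int_M f'(W_{\eps,\xi})\phi_{\eps,\xi}Z^i_{\eps,\xi}$; it is not a consequence of (\ref{p2}) and the cited Propositions as written. For $A$, which is the heart of the lemma, you only outline a plan, and the mechanism you invoke (orthogonality plus the near-kernel property of $Z^h$) fits only the $J_\eps$-piece. The $G_\eps$-piece of $A$ — which by the fundamental theorem of calculus is exactly (\ref{new2}) — has no $f'(W)\phi Z^h$ structure to cancel against; it is instead controlled directly by the $\eps^{3/2}$-smallness of $\Psi(W_{\eps,\xi})$ in $L^\infty$, the $\eps^2$-smallness of $\Psi'$, and the weak convergence $\tilde v_\eps/\eps^2\rightharpoonup\gamma$. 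Finally, the remark that the remaining bookkeeping is ``essentially the same as in Proposition \ref{espa}'' is off target: that proposition expands $J_\eps(W_{\eps,\xi})$ and $G_\eps(W_{\eps,\xi})$ themselves, whereas $A$ measures the $\phi$-correction, and those are different calculations.
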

\begin{proof}
We argue exactly as in Lemma 5.1 of \cite{MP}, once we prove the
the following estimates:
\begin{eqnarray}\label{new1}
& &G_\eps\left(W_{\eps,\xi}  +\phi_{\eps,\xi} \right)-G_\eps\left(W_{\eps,\xi}   \right)=o(\eps^2)
\end{eqnarray}
 \begin{eqnarray}\label{new2}
& &\left[G'_\eps\left(W_{\eps,\xi_0}  +\phi_{\eps, \xi_0}\right)-G'_\eps\left(W_{\eps,\xi_0}  \right)\right]
\left[\left({\partial\over\partial y_h} W_{\eps,\xi(y)}\right)_{|_{y=0}}\right]
=o(\eps^2) \end{eqnarray}
and
\begin{eqnarray}\label{new3}
& & G'_\eps\left(W_{\eps,\xi(y)}  +\phi_{\eps, \xi(y)}\right)
\left[ {\partial\over\partial y_h} \phi_{\eps,\xi(y)} \right] =o(\eps^2)
\end{eqnarray}

 Let us prove (\ref{new1}). We have (for some $\theta\in[0,1]$)
 \begin{eqnarray*}
& &G_\eps\left(W_{\eps,\xi}  +\phi_{\eps,\xi} \right)-G_\eps\left(W_{\eps,\xi}  \right)\\ & &={1\over\eps^3}\int\limits_M\left[
\Psi\left(W_{\eps,\xi}  +\phi_{\eps,\xi} \right)\left(W_{\eps,\xi}  +\phi_{\eps,\xi} \right)^2-\Psi\left(W_{\eps,\xi}  \right)\left(W_{\eps,\xi}   \right)^2\right]
\\ & &
={1\over\eps^3}\int\limits_M
\Psi'\left(W_{\eps,\xi} +\theta\phi_{\eps,\xi} \right)[\phi_{\eps,\xi}] \left(W_{\eps,\xi}   \right)^2+{1\over\eps^3}\int\limits_M \Psi\left(W_{\eps,\xi}  +\phi_{\eps,\xi} \right)
\left[2
 \phi_{\eps,\xi}   W_{\eps,\xi} +\phi^2 _{\eps,\xi}  \right]
\\ & &=:I_1+I_2
\end{eqnarray*}
with
\begin{eqnarray*}
& &I_1\le {1\over\eps^3}\(\int\limits_M \left[\Psi'\left(W_{\eps,\xi} +\theta\phi_{\eps,\xi} \)\right]^2\)^{1/2}\( {1\over\eps^3}\int\limits_M W_{\eps,\xi} ^4\)^{1/2}\eps^{3/2}
\\ & &\le c{1\over\eps^{3/2}}\left\|\Psi'\left(W_{\eps,\xi} +\theta\phi_{\eps,\xi} \)[\phi_{\eps,\xi}]\right\|_{H^1_g}
\quad\(\hbox{because of Lemma \ref{lem:e7}}\)\\
& &\le c{1\over\eps^{3/2}}\(\eps^2\|\phi\|_{H^1_g}+\|\phi\|_{H^1_g}^2\)
\quad \(\hbox{because  $\|\phi_{\eps,\xi}  \| _{H^1_g}\le \sqrt\eps\|\phi_{\eps,\xi} \| _{\eps}\le c\eps^{5/2} $}\)\\
& &=o(\eps^2)
\end{eqnarray*}
and
\begin{eqnarray*}
& &I_2\le {1\over\eps^3}\(\int\limits_M \left(\Psi \left(W_{\eps,\xi} + \phi_{\eps,\xi} \)[\phi_{\eps,\xi}]\right)^2\)^{1/2}\(  \int\limits_M \phi_{\eps,\xi} ^4\)^{1/2} \\ & &+
{1\over\eps^3}\(\int\limits_M \left[\Psi \left(W_{\eps,\xi} + \phi_{\eps,\xi} \)\right]^3\)^{1/3}\(  \int\limits_M \phi_{\eps,\xi} ^3\)^{1/3} \( {1\over\eps^3}\int\limits_M W_{\eps,\xi} ^3\)^{1/3}\eps
\\ & &\le c{1\over\eps^{3 }}\left\|\Psi \left(W_{\eps,\xi} + \phi_{\eps,\xi} \)\right\|_{H^1_g} \left\|  \phi_{\eps,\xi} \right\|^2_{H^1_g} +
c{1\over\eps^{2 }}\left\|\Psi \left(W_{\eps,\xi} + \phi_{\eps,\xi} \)\right\|_{H^1_g} \left\|  \phi_{\eps,\xi} \right\|_{H^1_g}
\\ & &\(\hbox{because of Lemma \ref{lem:e3} and the fact that $\|\phi_{\eps,\xi}  \| _{H^1_g}
\le \sqrt\eps\|\phi_{\eps,\xi} \| _{\eps}\le c\eps^{5/2} $}\)\\
& &\le c  \eps^{5+{5\over2}-3} +c \eps^{ {5\over2}+{5\over2}-2}=o(\eps^2).
\end{eqnarray*}
Then (\ref{new1}) follows.

 Let us prove (\ref{new2}). We have (for some $\theta\in[0,1]$)
\begin{eqnarray*}
& &\left[G'_\eps\left(W_{\eps,\xi_0}  +\phi_{\eps, \xi_0}\right)-G'_\eps\left(W_{\eps,\xi_0}  \right)\right]
\left[\left({\partial\over\partial y_h} W_{\eps,\xi(y)}\right)_{|_{y=0}}\right]
\\ & &\le
\left| {q\over2\eps^3 }\int\limits_M\left\{ \left[2\Psi \left(W_{\eps,\xi_0} + \phi_{\eps,\xi_0} \)-\Psi \left(W_{\eps,\xi_0}  \)\right]-\left[q\Psi ^2\left(W_{\eps,\xi_0} + \phi_{\eps,\xi_0} \)-q\Psi ^2\left(W_{\eps,\xi_0}  \)\right]\right\}W_{\eps,\xi_0} \left({\partial\over\partial y_h} W_{\eps,\xi(y)}\right)_{|_{y=0}}\right|\\ & &+
\left| {q\over2\eps^3 }\int\limits_M  \left[2\Psi \left(W_{\eps,\xi_0} + \phi_{\eps,\xi_0} \) - q\Psi ^2\left(W_{\eps,\xi_0} + \phi_{\eps,\xi_0} \) \right]\phi_{\eps,\xi_0} \left({\partial\over\partial y_h} W_{\eps,\xi(y)}\right)_{|_{y=0}}\right|
\\ & &\le
\left| {q\over2\eps^3 }\int\limits_M  2\Psi' \left(W_{\eps,\xi_0} +\theta \phi_{\eps,\xi_0} \)\(\phi_{\eps,\xi_0}\)   W_{\eps,\xi_0} \left({\partial\over\partial y_h} W_{\eps,\xi(y)}\right)_{|_{y=0}}\right|\\
 & &+
\left| {q^2\over\eps^3 }\int\limits_M  \Psi \left(W_{\eps,\xi_0} +\theta \phi_{\eps,\xi_0} \)\Psi' \left(W_{\eps,\xi_0} +\theta \phi_{\eps,\xi_0} \)\(\phi_{\eps,\xi_0}\)   W_{\eps,\xi_0} \left({\partial\over\partial y_h} W_{\eps,\xi(y)}\right)_{|_{y=0}}\right|\\
& &+ \left|{q \over  \eps^3 }\int\limits_M  \Psi  \left(W_{\eps,\xi_0}  \)       \phi_{\eps,\xi_0}  \left({\partial\over\partial y_h} W_{\eps,\xi(y)}\right)_{|_{y=0}}\right|\\
\\
& &+ \left|{q \over2 \eps^3 }\int\limits_M  \Psi '\left(W_{\eps,\xi_0}  \) \(\phi_{\eps,\xi_0}\)     \phi_{\eps,\xi_0}  \left({\partial\over\partial y_h} W_{\eps,\xi(y)}\right)_{|_{y=0}}\right|\\
  & &+\left| {q^2 \over2 \eps^3 }\int\limits_M  \Psi ^2\left(W_{\eps,\xi_0}  \)       \phi_{\eps,\xi_0}  \left({\partial\over\partial y_h} W_{\eps,\xi(y)}\right)_{|_{y=0}}\right|\\
  & &=:I_1+I_2+I_3+I_4+I_5.
  \end{eqnarray*}

By Lemma \ref{lem:e7} and the facts that $\left \| \phi_{\eps,\xi(y) } \right \|_{H^1_g}=O\(\eps^{5/2}\)$ and
Remark \ref{remark:Weps} we get
\begin{eqnarray*}
& & I_1\le c{1\over \eps  }\(\int\limits_M \left[\Psi' \left(W_{\eps,\xi_0} +\theta \phi_{\eps,\xi_0} \)\(\phi_{\eps,\xi_0}\) \right]^3\)^{1/3}
\({1\over \eps^3 }\int\limits_MW_{\eps,\xi_0}^3\)^{1/3}  \({1\over \eps^3 }\int\limits_M\left[ \left({\partial\over\partial y_h} W_{\eps,\xi(y)}\right)_{|_{y=0}}\right]^3\)^{1/3}  \\
& & \(\hbox{we use (6.3) of \cite{MP}}\)\\
& &\le c \eps^{7/2} \(\int\limits_M\left[  \sum\limits_{k=1}^3\left|{1\over \eps}{\partial U(z)\over\partial z_k} \chi(\eps z)+{\partial \chi(\eps z)\over\partial z_k} U(  z)
\right|\left| \delta_{hk}+\eps^2|z|^2
\right|\right]^3\)^{1/3 }=O\(\eps^{5/2}\)=o(\eps^2). \end{eqnarray*}
By the estimate of $I_1$ we get $I_2=o(\eps^2),$ because of Lemma \ref{lem:e3} and we also get $I_4=o(\eps^2)$,because $\|\phi_{\eps,\xi_0}\|_{H^1_g}=O\(\eps^{5/2}\)$.
Let us estimate $I_3.$ We use the definition of $\widetilde v_\eps$ given in Lemma \ref{lem:e5} and we get
\begin{eqnarray*}
& & I_3=\left|{1\over \eps ^3 } \int\limits_{B_g(\xi_0,R)} \Psi \left(W_{\eps,\xi_0} \)\(\phi_{\eps,\xi_0}\)\left({\partial\over\partial y_h} W_{\eps,\xi(y)}\right)_{|_{y=0}}\right| \\
& &\le   c  \int\limits_{B ( 0,R/\eps)}|\widetilde v_\eps(z)|\left|\widetilde \phi_{\eps,\xi_0}(z)\right|  |  \sum\limits_{k=1}^3\left|{1\over \eps}{\partial U(z)\over\partial z_k} \chi(\eps z)+{\partial \chi(\eps z)\over\partial z_k} U(  z)
\right|\left| \delta_{hk}+\eps^2|z|^2
\right|dz \\
 & &\le   c  \(\int\limits_{B ( 0,R/\eps)}\widetilde \phi_{\eps,\xi_0}^2(z)\)^{1/2}\(\eps^4
 \int\limits_{B ( 0,R/\eps)}{\widetilde v_\eps^2(z) \over\eps^4}  \left[  \sum\limits_{k=1}^3\left|{1\over \eps}{\partial U(z)\over\partial z_k} \chi(\eps z)+{\partial \chi(\eps z)\over\partial z_k} U(  z)
\right|\left| \delta_{hk}+\eps^2|z|^2
\right|\right]^2dz\)^{1/2}\\ & &\(\hbox{because ${\widetilde v_\eps^2(z) \over\eps^4}\to \gamma$ weakly in $L^6$ }\)\\ & &=o(\eps^2). \\
 \end{eqnarray*}
Here we used the fact that (see (6.3) of \cite{MP}) the function $\widetilde \phi_{\eps,\xi_0} (z):=  \phi_{\eps,\xi_0}\({\rm exp}_{\xi_0}(\eps z)\)= \phi_{\eps,\xi_0}(x)$
can be estimated as 
 \begin{eqnarray*}& &\|\phi_{\eps,\xi_0}\|^2_\eps\ge {1\over \eps ^3 } \int\limits_{B_g(\xi_0,R)}\(\eps^2|\nabla_g\phi_{\eps,\xi_0}|^2+\phi_{\eps,\xi_0}^2\)d\mu_g=
  \\ & &={1\over \eps ^3 } \int\limits_{B ( 0,R/\eps)}
 \( \sum\limits_{i,j=1}^3\eps^2g^{ij}(\eps z){\partial \widetilde \phi_{\eps,\xi_0}\over\partial z_i}{\partial \widetilde \phi_{\eps,\xi_0}\over\partial z_j}+\widetilde \phi_{\eps,\xi_0}^2\)|g(\eps z)|^{1/2}dz\\ & &\ge c\int\limits_{B ( 0,R/\eps)} \widetilde \phi_{\eps,\xi_0}^2(z)dz
 \end{eqnarray*}
which implies
$$\int\limits_{B ( 0,R/\eps)} \widetilde \phi_{\eps,\xi_0}^2(z)dz=O\(\|\phi_{\eps,\xi_0}\|^2_\eps\)=O\(\eps^4\).$$

By the estimate of $I_3$ we get $I_5=o(\eps^2),$ because of Lemma \ref{lem:e3}.

 Let us prove (\ref{new3}). We have (for some $\theta\in[0,1]$)
Arguing as in the proof of (5.10) of \cite{MP}, the proof of \eqref{new3} reduces
to the proof of the following estimate
\begin{eqnarray}\label{new4}
& &{1\over\eps^3} \int\limits_Mg\(W_{\eps,\xi (y)} +  \phi_{\eps,\xi(y) }\)\(W_{\eps,\xi(y) } +  \phi_{\eps,\xi(y) }\)Z^l_{\eps,\xi(y)} =o(\eps^2),\end{eqnarray}
where the functions $Z^l_{\eps,\xi(y)} $ are defined in \eqref{zi}.
First of all we point out that
\begin{eqnarray}\label{new5}
& &{1\over\eps^3}\int\limits_M   \Psi \(W_{\eps,\xi (y)} \)  W_{\eps,\xi(y) } Z^l_{\eps,\xi(y)}  =o(\eps^2).\end{eqnarray}
By Lemma \ref{lem:e5} we have that ${\displaystyle \left\{ \frac{1}{\varepsilon^{2}}\tilde{v}_{\varepsilon,\xi}\right\} _{n}}$
converges to $\gamma$ weakly in $L^{6}(\mathbb{R}^{3})$. So, arguing as in the proof of \eqref{eq13} and using Lemma \ref{lem:e5} we get
\begin{eqnarray*}
& &{1\over\eps^3}\int\limits_M   \Psi \(W_{\eps,\xi (y)} \)  W_{\eps,\xi(y) } Z^l_{\eps,\xi(y)}={1\over\eps^3}\int\limits_{B_g(\xi,R)}   \Psi \(W_{\eps,\xi (y)} \)  W_{\eps,\xi(y) } Z^l_{\eps,\xi(y)} \\ & &
=  \int\limits_{B(0,R/\eps)}   \tilde v(z) \chi^2(\eps z) U(z){\partial U\over\partial z_l}(z)dz=
{1\over2}\eps^2\int\limits_{B(0,R/\eps)} \frac{  \tilde v(z)}{\eps^2} \chi^2(\eps z)  {\partial  U^2\over\partial z_l} dz=
\\&&=\frac{1}{\eps^2}\int_{\mathbb{R}^3}\gamma  {\partial  U^2\over\partial z_l} dz+o(\eps^2)=o(\eps^2).
\end{eqnarray*}
because both $U$ and $\gamma$ are radially symmetric.

 Therefore, by \eqref{new5}, using the definition of $g$ in  \eqref{go}  we are lead to estimate
 (for some $\theta\in[0,1]$)
\begin{eqnarray*}
& &{1\over\eps^3}\left|\int\limits_Mg\(W_{\eps,\xi (y)} +  \phi_{\eps,\xi(y) }\)\(W_{\eps,\xi(y) } +  \phi_{\eps,\xi(y) }\)Z^l_{\eps,\xi(y)} +2q\int\limits_M   \Psi \(W_{\eps,\xi (y)} \)  W_{\eps,\xi(y) } Z^l_{\eps,\xi(y)}  \right|\nonumber\\
& &={1\over\eps^3}\left|\int\limits_M\left[q^2\Psi^2\(W_{\eps,\xi (y)} +  \phi_{\eps,\xi(y) }\)-2q \Psi \(W_{\eps,\xi (y)} +  \phi_{\eps,\xi(y) }\)\right]\(W_{\eps,\xi(y) } + \phi_{\eps,\xi(y) }\)Z^l_{\eps,\xi(y)} \right.\\ & &\left.
+2q\int\limits_M   \Psi \(W_{\eps,\xi (y)} \)  W_{\eps,\xi(y) } Z^l_{\eps,\xi(y)}  \right|\\
& &\le c{1\over\eps^3}\int\limits_M \left| \Psi ^2\(W_{\eps,\xi (y)} +  \phi_{\eps,\xi(y) }\) \(W_{\eps,\xi(y) } + \phi_{\eps,\xi(y) }\)Z^l_{\eps,\xi(y)} \right|\nonumber\\
& &+c{1\over\eps^3}\int\limits_M  \left|\Psi \(W_{\eps,\xi (y)} \)   \phi_{\eps,\xi(y) } Z^l_{\eps,\xi(y)} \right|\\ & &
+c{1\over\eps^3}\int\limits_M  \left|\Psi' \(W_{\eps,\xi (y)} + \theta \phi_{\eps,\xi(y) }\)\( \phi_{\eps,\xi(y) }\) \(W_{\eps,\xi(y) } + \phi_{\eps,\xi(y) }\)Z^l_{\eps,\xi(y)} \right|=I_1+I_2+I_3 ,\end{eqnarray*}
with

\begin{eqnarray*}
&&I_1\le {1\over\eps^3}
 \left |\Psi \(W_{\eps,\xi (y)} +   \phi_{\eps,\xi(y) } \)  \right |^2_{g,3}\(\left | W_{\eps,\xi(y) }\right |_{g,3} + \left | \phi_{\eps,\xi(y) } \right |_{g,3}\)
\left | Z^l_{\eps,\xi(y)} \right |_{g,3}\\ & &\(\hbox{we use Lemma \ref{lem:e3}}\)
\\ & &\le {1\over\eps^3}
 \left \|\Psi \(W_{\eps,\xi (y)} +   \phi_{\eps,\xi(y) } \)  \right \|^2_{H^1_g}\(\left | W_{\eps,\xi(y) }\right |_{g,3} + \left \| \phi_{\eps,\xi(y) } \right \|_{H^1_g}\)
\left | Z^l_{\eps,\xi(y)} \right |_{g,3}\\ & &\(\hbox{we use  that $\|\phi  \| _{H^1_g}\le \sqrt\eps\|\phi \| _{\eps}\le c\eps^{5/2} $, $\left | W_{\eps,\xi(y) }\right |_{g,3}=O(\eps)$
and  $\left | Z^l_{\eps,\xi(y)} \right |_{g,3}=O(\eps)$}\)\\
& &=o(\eps^2),\end{eqnarray*}
\begin{eqnarray*}
& &I_2\le {1\over\eps^3}
 \left\|\Psi \(W_{\eps,\xi (y)} \)  \right\|_{H^1_g}\left\| \phi_{\eps,\xi(y) }  \right\|_{H^1_g}
\left | Z^l_{\eps,\xi(y)} \right |_{g,3}\\ & &\(\hbox{we use Lemma \ref{lem:e3} and also that $\|\phi  \| _{H^1_g}\le \sqrt\eps\|\phi \| _{\eps}\le c\eps^{5/2} $ and  $\left | Z^l_{\eps,\xi(y)} \right |_{g,3}=O(\eps)$}\)\\
& &=o(\eps^2)\end{eqnarray*}
and
\begin{eqnarray*}
& &I_3 \le {1\over\eps^3}
 \left |\Psi' \(W_{\eps,\xi (y)} + \theta \phi_{\eps,\xi(y) }\)\( \phi_{\eps,\xi(y) }\) \right |_{g,3}\left | W_{\eps,\xi(y) } + \phi_{\eps,\xi(y) } \right |_{g,3}
\left | Z^l_{\eps,\xi(y)} \right |_{g,3}\\ & &\le {1\over\eps^3}
 \left \|\Psi' \(W_{\eps,\xi (y)} + \theta \phi_{\eps,\xi(y) }\)\( \phi_{\eps,\xi(y) }\) \right \|_{H^1_g}\(\left | W_{\eps,\xi(y) }\right |_{g,3} + \left \| \phi_{\eps,\xi(y) } \right \|_{H^1_g}\)
\left | Z^l_{\eps,\xi(y)} \right |_{g,3}\\ & &\(\hbox{because of Lemma \ref{lem:e7}}\)\\ & &\le {1\over\eps^3}
\(\eps^2+ \left \| \phi_{\eps,\xi(y) }  \right \|_{H^1_g}\)\left \| \phi_{\eps,\xi(y) }  \right \|_{H^1_g}\(\left | W_{\eps,\xi(y) }\right |_{g,3} + \left \| \phi_{\eps,\xi(y) } \right \|_{H^1_g}\)
\left | Z^l_{\eps,\xi(y)} \right |_{g,3}\\ & &\(\hbox{we use  that $\|\phi  \| _{H^1_g}\le \sqrt\eps\|\phi \| _{\eps}\le c\eps^{5/2} $, $\left | W_{\eps,\xi(y) }\right |_{g,3}=O(\eps)$
and  $\left | Z^l_{\eps,\xi(y)} \right |_{g,3}=O(\eps)$}\)\\
& &=o(\eps^2).\end{eqnarray*}

\end{proof}

\begin{lem}\label{l5}
It holds  true that
\begin{equation}\label{l53}
  J_\eps\left( W_{\eps,\xi}\right)
=C-\alpha{\eps^2\over6}  S_g(\xi) +o\left(\eps^2\right)
,\end{equation}
$C^1-$uniformly with respect to $\xi\in M$ as $\varepsilon$ goes to zero.
Here
$$C:={1\over2}\int\limits_{\rr^3}|\nabla U|^2dz-{\lambda\over 2}\int\limits_{\rr^3}U^2dz
-{1\over p}\int\limits_{\rr^3}U^pdz$$
and
$$\alpha:=\int\limits_{\rr^3}\left({  U'(|z|)\over|z|}\right)^2z_1^4dz.$$\end{lem}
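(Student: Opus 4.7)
The plan is a direct asymptotic expansion of $J_\eps(W_{\eps,\xi})$ in geodesic normal coordinates at $\xi$. First I would pass to the exponential chart and rescale $z=\eps y$ to write
$$J_\eps(W_{\eps,\xi}) = \int_{B(0,r/\eps)}\left[\tfrac{1}{2}g^{ij}_\xi(\eps y)\partial_i U(y)\partial_j U(y) + \tfrac{\lambda}{2}U(y)^2 - \tfrac{1}{p}U(y)^p\right]\sqrt{|g_\xi(\eps y)|}\,dy \ +\ \mathcal R_\eps,$$
where $\mathcal R_\eps$ gathers all contributions involving derivatives of the cutoff $\chi_r$. These are supported in $\{|y|\ge r/(2\eps)\}$ and, by the exponential decay \eqref{pll} of $U$ and $U'$, are $O(e^{-c/\eps})$, hence negligible with respect to any power of $\eps$.

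Next I would plug in the standard normal-coordinate expansions
$$\sqrt{|g_\xi(\eps y)|} = 1 - \tfrac{\eps^2}{6}R_{ij}(\xi)y^i y^j + O(\eps^3|y|^3),\qquad g^{ij}_\xi(\eps y) = \delta^{ij} + \tfrac{\eps^2}{3}R^{i}{}_{klj}(\xi)y^k y^l + O(\eps^3|y|^3),$$
whose order-$\eps$ terms vanish by \eqref{f2}. Extending the domain to $\rr^3$ (with exponentially small error), the zero-order part yields the Euclidean energy $\int_{\rr^3}[\tfrac{1}{2}|\nabla U|^2 + \tfrac{\lambda}{2}U^2 - \tfrac{1}{p}U^p]\,dy$, which I would rewrite as the constant $C$ appearing in the statement by testing equation \eqref{pl} against $U$ to flip the sign of the mass term.

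The order-$\eps^2$ correction has two sources: the volume factor $-\tfrac{1}{6}R_{ij}(\xi)y^iy^j$ multiplying the full integrand, and the Riemann-tensor correction of $g^{ij}_\xi$ multiplying $\partial_iU\,\partial_jU=(U'(|y|)/|y|)^2 y_iy_j$. Using the radial symmetry identities
$$\int y^iy^j h(|y|)dy = \tfrac{\delta^{ij}}{3}\!\int\!|y|^2 h(|y|)dy,\quad \int y^iy^jy^ky^l h(|y|)dy = \tfrac{1}{15}\bigl(\delta^{ij}\delta^{kl}+\delta^{ik}\delta^{jl}+\delta^{il}\delta^{jk}\bigr)\!\int\!|y|^4 h(|y|)dy,$$
every Riemann- and Ricci-tensor contraction collapses into $S_g(\xi)=R_{ii}(\xi)$ times a scalar radial integral of $U$, $U^2$, $U^p$ or $(U')^2$. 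Gathering the pieces and invoking the equation \eqref{pl} through a Pohozaev-type identity (multiplying by $|y|^2U$ and integrating by parts) eliminates the $U^2$ and $U^p$ contributions and leaves exactly $-\tfrac{\alpha}{6}\eps^2 S_g(\xi)$, with the quartic integral $\alpha=\int(U'(|z|)/|z|)^2 z_1^4\,dz$ arising naturally after the symmetrization.

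For the $C^1$-uniformity I would repeat the whole calculation for $\partial_{y_h}J_\eps(W_{\eps,\xi(y)})|_{y=0}$, where $\xi(y)=\exp_{\xi_0}(y)$. Since $\xi$ enters the expansion only through the smooth coefficients $g^{ij}_\xi$, $|g_\xi|^{1/2}$ and ultimately through $S_g(\xi)$, the derivatives commute with the asymptotic expansion and the same exponential tail estimates apply uniformly, producing $-\tfrac{\alpha}{6}\,\partial_{y_h}S_g(\xi(y))|_{y=0}\,\eps^2$ plus an $o(\eps^2)$ remainder uniform in $\xi$. The main obstacle is the algebraic step isolating the coefficient of $S_g(\xi)$: one has to show that after the Riemann/Ricci contractions combine with the Pohozaev identities for $U$, the only surviving radial integral is precisely $\alpha$ (and not an independent combination of $\int|y|^2 U^2$, $\int|y|^2 U^p$, etc.) with the coefficient $-1/6$.
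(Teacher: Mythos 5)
The paper does not actually prove this lemma: the proof is the single line ``See Lemma (4.2) of \cite{MP}.'' Your proposal is therefore a reconstruction of the cited argument, and its outline is essentially the correct one (pass to normal coordinates, rescale, discard exponentially small cut-off terms, Taylor-expand the metric, symmetrize, and close with Pohoza\"ev-type identities). A few details, however, are worth straightening out.

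First, the Riemann-tensor correction to $g^{ij}_\xi$ does not in fact contribute at order $\eps^2$. Writing $g^{ij}_\xi(\eps y)=\delta^{ij}-\tfrac{\eps^2}{3}R_{i k j l}(\xi)y^k y^l+O(\eps^3)$ and using that $\partial_iU\partial_jU=(U'(|y|)/|y|)^2 y_iy_j$ for the radial function $U$, one is led to $R_{i k j l}y^iy^jy^ky^l$, which vanishes \emph{pointwise} by the antisymmetry of the Riemann tensor in its first two indices; you do not even need the integrated symmetrization identity here. The entire $\eps^2$ correction therefore comes from the volume element $\sqrt{|g_\xi(\eps y)|}=1-\tfrac{\eps^2}{6}R_{kl}(\xi)y^ky^l+O(\eps^3)$ acting on the full integrand, giving
$$-\tfrac{\eps^2}{6}\tfrac{S_g(\xi)}{3}\int_{\rr^3}|y|^2\bigl[\tfrac12(U')^2+\tfrac{\lambda}{2}U^2-\tfrac1pU^p\bigr]\,dy.$$
Second, to reduce this to $-\tfrac{\alpha}{6}\eps^2 S_g(\xi)$ the relevant Pohoza\"ev identity is the one obtained by multiplying $-\Delta U+\lambda U=U^{p-1}$ by the \emph{weighted radial field} $|y|^2\,y\cdot\nabla U$ and integrating, which yields $\int|y|^2(U')^2=5\lambda\int|y|^2U^2-\tfrac{10}{p}\int|y|^2U^p$; multiplying by $|y|^2U$ as you suggest produces a different (and by itself insufficient) relation because it introduces the unweighted $\int U^2$. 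With the correct multiplier one finds $\tfrac13\int|y|^2 E=\tfrac15\int|y|^2(U')^2=\alpha$, giving exactly the stated coefficient.

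Finally, a caution about the constant $C$: the leading (order-zero) term is the Euclidean energy $\tfrac12\int|\nabla U|^2+\tfrac{\lambda}{2}\int U^2-\tfrac1p\int U^p$, and testing \eqref{pl} against $U$ gives $\int|\nabla U|^2+\lambda\int U^2=\int U^p$, which collapses the energy to $\bigl(\tfrac12-\tfrac1p\bigr)\int U^p$ but does not produce the expression with $-\tfrac{\lambda}{2}\int U^2$ as written in the statement. The sign of the $\lambda$-term in the displayed $C$ appears to be a misprint inherited from \cite{MP}; your identification of the leading term is correct, and the discrepancy is immaterial for the main theorem since only the coefficient of $S_g(\xi)$ matters.

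Apart from these adjustments the proposal is sound and matches the method of the cited reference; the $C^1$-uniformity argument by differentiating in $y$ and reusing the same expansions is also standard and correct.
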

 \begin{proof} See  Lemma (4.2) of \cite{MP}.\end{proof}

\begin{lem}\label{l5bis}
It holds true that
\[
G_{\varepsilon}(W_{\varepsilon,\xi}):=\frac{1}{\varepsilon^{3}}
\int_{B_{g}(\xi,r)}\Psi(W_{\varepsilon,\xi})W_{\varepsilon,\xi}^{2}d\mu_{g}=\beta\varepsilon^{2}+o(\varepsilon^{2})
\]
 $C^1-$uniformly with respect to $\xi\in M$ as $\varepsilon$ goes to zero.
Here
\[
\beta=\int_{\mathbb{R}^{3}}\gamma(z)U^{2}(z)dz=\frac 1q \int_{\mathbb{R}^{3}}|\nabla\gamma(z)|^{2}dz
\]
 with $\gamma\in D^{1,2}(\mathbb{R}^{3})$ such that $-\Delta\gamma=qU^{2}$.\end{lem}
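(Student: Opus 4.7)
\emph{Proof plan.} I would first establish the pointwise expansion $G_\varepsilon(W_{\varepsilon,\xi})=\beta\varepsilon^2+o(\varepsilon^2)$ by rescaling and weak-strong duality, and then upgrade to $C^1$-uniformity by the same radial-symmetry cancellation used in the proof of \eqref{new5}.

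First I pass to normal coordinates via $x=\exp_\xi(\varepsilon z)$; with the notation $\tilde v_{\varepsilon,\xi}(z):=\Psi(W_{\varepsilon,\xi})(\exp_\xi(\varepsilon z))$ from Lemma \ref{lem:e5}, the integral becomes
\[
G_\varepsilon(W_{\varepsilon,\xi})=\varepsilon^2\!\int_{\mathbb{R}^3}\frac{\tilde v_{\varepsilon,\xi}(z)}{\varepsilon^2}\,U^2(z)\,\chi^2(\varepsilon z)\,|g_\xi(\varepsilon z)|^{1/2}\,\mathbf{1}_{B(0,r/\varepsilon)}(z)\,dz.
\]
By Lemma \ref{lem:e5} the factor $\varepsilon^{-2}\tilde v_{\varepsilon,\xi}$ converges to $\gamma$ weakly in $L^6(\mathbb{R}^3)$, while the remaining factor converges to $U^2$ strongly in $L^{6/5}(\mathbb{R}^3)$ by dominated convergence (using \eqref{pll} for the exponential decay of $U$, together with $|g_\xi(0)|=1$ and $\chi(0)=1$). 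The weak-strong pairing then yields $G_\varepsilon(W_{\varepsilon,\xi})=\beta\varepsilon^2+o(\varepsilon^2)$, uniformly in $\xi\in M$ by compactness. The identity $\beta=q^{-1}\int|\nabla\gamma|^2$ follows by testing $-\Delta\gamma=qU^2$ against $\gamma$ in $D^{1,2}(\mathbb{R}^3)$.

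For the $C^1$-uniformity, parametrize $\xi(y)=\exp_{\xi_0}(y)$. The formula $\tfrac{1}{2}G'_\varepsilon(u)[\varphi]=\varepsilon^{-3}\int[q^2\Psi^2(u)-2q\Psi(u)]u\varphi\,d\mu_g$ gives
\[
\partial_{y_h}G_\varepsilon(W_{\varepsilon,\xi(y)})|_{y=0}=\frac{2}{\varepsilon^3}\!\int_M\bigl[q^2\Psi^2(W_{\varepsilon,\xi_0})-2q\Psi(W_{\varepsilon,\xi_0})\bigr]W_{\varepsilon,\xi_0}\,\partial_{y_h}W_{\varepsilon,\xi(y)}|_{y=0}\,d\mu_g.
\]
The quadratic-in-$\Psi$ piece is $O(\varepsilon^3)$ by direct rescaling, since $\Psi\sim\varepsilon^2$. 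For the linear-in-$\Psi$ piece I use the expansion of $\partial_{y_h}W$ from (6.3) of \cite{MP}, whose dominant term is proportional to $\varepsilon^{-1}Z^h_{\varepsilon,\xi_0}$; the leading contribution therefore reduces to $\varepsilon^{-1}$ times $\varepsilon^{-3}\int\Psi(W_{\varepsilon,\xi_0})W_{\varepsilon,\xi_0}Z^h_{\varepsilon,\xi_0}\,d\mu_g = \tfrac{\varepsilon^2}{2}\int(\tilde v_{\varepsilon,\xi_0}/\varepsilon^2)\chi^2|g|^{1/2}\partial_{z_h}(U^2)\,dz$. This vanishes at leading order by radial symmetry of $\gamma$ and $U^2$, and the remainder is $O(\varepsilon^4)$ once one combines the strong convergence $\tilde v_{\varepsilon,\xi_0}/\varepsilon^2=\gamma+O(\varepsilon^2)$ (which follows from elliptic regularity applied to the equation $-\Delta(\tilde v_{\varepsilon,\xi_0}/\varepsilon^2)+\varepsilon^2(1+q^2U^2\chi^2)(\tilde v_{\varepsilon,\xi_0}/\varepsilon^2)+\text{curv}=qU^2\chi^2$) with the evenness of the Ricci correction $|g_\xi(\varepsilon z)|^{1/2}-1=-\tfrac{\varepsilon^2}{6}R_{ij}z_iz_j+O(\varepsilon^3)$ against the odd test $\partial_{z_h}(U^2)$. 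Dividing by $\varepsilon$ then gives the required $o(\varepsilon^2)$ bound on the derivative, uniformly in $\xi_0$.

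The hard part is the upgrade from weak to strong convergence in the $C^1$ estimate: a naive weak-$L^6$ application yields only $o(\varepsilon)$, and one must invoke the strong rate for $\tilde v_{\varepsilon,\xi}/\varepsilon^2\to\gamma$ supplied by elliptic regularity, in concert with the radial cancellation, to reach $o(\varepsilon^2)$.
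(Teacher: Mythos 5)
Your outline correctly identifies the rescaling, the weak--strong pairing for the $C^0$ limit, and the need for a radial cancellation in the $C^1$ step, but two load-bearing claims are asserted rather than proved, and neither is how the paper actually closes the argument.

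First, for the $C^0$ statement you write that the convergence is ``uniform in $\xi\in M$ by compactness.'' That does not follow: Lemma \ref{lem:e5} gives weak $L^6$ convergence of $\varepsilon^{-2}\tilde v_{\varepsilon,\xi}$ to $\gamma$ only along subsequences, with no $\xi$-uniform rate, and compactness of $M$ by itself does not convert a family of weak limits into a uniform one. The paper devotes a separate argument to this point (the displays \eqref{eq:Geps}--\eqref{eq:Geps2}): it substitutes $qU^2=-\Delta\gamma$, integrates by parts twice to transfer the Laplacian onto $\varepsilon^{-2}\tilde v_{\varepsilon,\xi}$, and then invokes the equation \eqref{eq:e11-1} for $\tilde v_{\varepsilon,\xi}$ together with the bound $\|\tilde v_{\varepsilon,\xi}\|_{L^6}\le C\varepsilon^2$. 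That device is what produces an $O(\varepsilon)$ error term that is genuinely uniform in $\xi$; it is not a consequence of compactness.

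Second, and more seriously, your $C^1$ argument hinges on the claim $\tilde v_{\varepsilon,\xi}/\varepsilon^{2}=\gamma+O(\varepsilon^{2})$ in a strong norm, uniformly in $\xi$, deduced from ``elliptic regularity.'' This is exactly the delicate step, and you give no proof of it. The metric coefficients in \eqref{eq:e11-1} satisfy $|g_\xi(\varepsilon z)|^{1/2}g_\xi^{ij}(\varepsilon z)=\delta_{ij}+O(\varepsilon^2|z|^2)$, so the perturbation is \emph{not} uniformly small on $B(0,r/\varepsilon)$ but grows with $|z|$; and near $|z|\sim r/\varepsilon$ the cut-off $\chi_r$ drops from $1$ to $0$, so the comparison equation for $\gamma$ on $\mathbb{R}^3$ does not match the domain on which $\tilde v_{\varepsilon,\xi}$ lives. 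One would need a careful weighted estimate to make the asserted rate rigorous. The paper avoids needing any rate at all: in the $C^1$ step it again writes $\partial_{z_k}(U^2)=-q^{-1}\Delta(\partial_{z_k}\gamma)$, integrates by parts twice, inserts the equation for $\tilde v_{\varepsilon,\xi}$, estimates the boundary/cut-off contribution by the exponential decay of $\partial_{z_k}\gamma$, and finally kills the leading term $\int U^2\,\partial_{z_k}\gamma$ by radial symmetry. This yields $I_1(\varepsilon,\xi)=O(\varepsilon^3)$ and $I_2(\varepsilon,\xi)=O(\varepsilon^3)$ uniformly in $\xi$ without ever quantifying $\tilde v_{\varepsilon,\xi}/\varepsilon^2-\gamma$. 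You correctly flag that naive weak convergence gives only $o(\varepsilon)$ and that something extra is needed; but the ``something extra'' in the paper is the integration-by-parts transfer, not a strong convergence rate, and as written your proposal has a genuine gap at precisely this point.
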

\begin{proof} {\em Step 1:  the $C^0$-estimate.}

By the weak convergence of ${\displaystyle \left\{ \frac{1}{\varepsilon_{n}^{2}}v_{\varepsilon_{n},\xi}\right\} _{n}}$
in $L^{6}(\mathbb{R}^{3})$ we infer
\begin{eqnarray*}
\frac{G_{\varepsilon_{n}}(W_{\varepsilon_{n},\xi})}{\varepsilon_{n}^{2}} & =
& \int_{\mathbb{R}^{3}}\frac{\tilde{v}_{\varepsilon_{n},\xi}(z)}{\varepsilon_{n}^{2}}
\chi_{r}^{2}(\varepsilon_{n}|z|)U^{2}(z)|g_{\xi}(\varepsilon_{n}z)|^{1/2}\\
 & = & \int_{\mathbb{R}^{3}}\frac{v_{\varepsilon_{n},\xi}(z)}{\varepsilon_{n}^{2}}
 \chi_{r}(\varepsilon_{n}|z|)U^{2}(z)|g_{\xi}(\varepsilon_{n}z)|^{1/2}\rightarrow\int_{\mathbb{R}^{3}}\gamma(z)U^{2}(z)dz.
\end{eqnarray*}
We have to prove that the convergence is uniform with respect to $\xi\in M$.

By the expansions of $|g_\xi(\varepsilon z)|^{1/2}$ and $\chi(\varepsilon |z|)$, and by (\ref{eq:egamma-sol}) we have

\begin{eqnarray}
\nonumber
\frac{G_{\varepsilon}(W_{\varepsilon,\xi})}{\varepsilon^{2}}
& = & \int_{\mathbb{R}^{3}}\frac{\tilde{v}_{\varepsilon,\xi}(z)}{\varepsilon^{2}}
\chi_{r}^{2}(\varepsilon |z|)U^{2}(z)|g_{\xi}(\varepsilon z)|^{1/2}dz\\
\nonumber
 & = & \frac 1q \int_{\mathbb{R}^{3}}
 \frac{\tilde{v}_{\varepsilon,\xi}(z)}{\varepsilon^{2}}
\chi_{r}^{2}(\varepsilon |z|)qU^{2}(z)dz+O(\varepsilon^2)\\
\label{eq:Geps}
& = &- \frac 1q \int_{\mathbb{R}^{3}}
 \frac{\tilde{v}_{\varepsilon,\xi}(z)}{\varepsilon^{2}}
\chi_{r}^{2}(\varepsilon |z|)\Delta \gamma dz+O(\varepsilon^2)\\
\nonumber
& = &- \frac 1q \int_{\mathbb{R}^{3}}
 \left(\Delta\frac{\tilde{v}_{\varepsilon,\xi}(z)}{\varepsilon^{2}}\right)
\chi_{r}^{2}(\varepsilon |z|) \gamma dz+O(\varepsilon)\\
\nonumber
&=&- \frac 1q \int_{\mathbb{R}^{3}}
-\sum_{ij}\partial_{j}\left(|g_{\xi}(\varepsilon z)|^{1/2}g_{\xi}^{ij}(\varepsilon z)\partial_{i}
\frac{\tilde{v}_{\varepsilon,\xi}(z)}{\varepsilon^{2}}\right)
\chi_{r}^{2}(\varepsilon |z|) \gamma dz+O(\varepsilon)
\end{eqnarray}
uniformly with respect to $\xi$ as $\varepsilon$ goes to zero.

By (\ref{eq:e11-1}) and by the expansions of $|g_\xi(\varepsilon z)|^{1/2}$ and $\chi(\varepsilon |z|)$ we have

\begin{multline}
\label{eq:Geps2}
\left|\frac{G_{\varepsilon}(W_{\varepsilon,\xi})}{\varepsilon^{2}} -
\int_{\mathbb{R}^{3}} U^2\gamma dz
\right| \le O(\varepsilon)+
\left| \int_{\mathbb{R}^{3}} U^2(z)\gamma(z) [|g_{\xi}(\varepsilon z)|^{1/2}\chi_{r}^{2}(\varepsilon |z|)-1]dz
\right|\le \\
+\frac 1q \left|
\int_{\mathbb{R}^{3}}|g_{\xi}(\varepsilon z)|^{1/2}
\left[\chi_{r}^{2}(\varepsilon |z|)+q^{2}U^{2}(z)\chi_{r}^{4}(\varepsilon |z|)\right]\tilde{v}_{\varepsilon,\xi}(z)\gamma(z) dz
\right|\le \\
\le O(\varepsilon)+ \frac 1q\left|
\int_{\mathbb{R}^{3}}
\tilde{v}_{\varepsilon,\xi}(z)\gamma(z)
\right|\le O(\varepsilon)
\end{multline}
uniformly with respect to $\xi$ as $\varepsilon$ goes to zero.

\bigskip {\em Step 2:  the $C^1$-estimate.}

More precisely, if
  $\xi(y)=\exp_{\xi}(y)$ for $y\in B(0,r),$. we are going to prove that
\begin{equation}\label{eq13}
\left.\frac{\partial}{\partial y_{h}}G_{\varepsilon}(W_{\varepsilon,\xi(h)})\right|_{y=0}=o(\varepsilon^{2})\
\hbox{uniformly with respect to $\xi$ as $\varepsilon$ goes to $0$.}
\end{equation}
 We have that
\begin{eqnarray*}
\left.\frac{\partial}{\partial y_{h}}G_{\varepsilon}(W_{\varepsilon,\xi})\right|_{y=0} & = &
\left.\frac{\partial}{\partial y_{h}}\frac{1}{\varepsilon^{3}}\int_{M}\Psi(W_{\varepsilon,\xi(y)})
W_{\varepsilon,\xi(y)}^{2}\right|_{y=0}d\mu_{g}\\
 & = & \left.\frac{1}{\varepsilon^{3}}\int_{M}\Psi(W_{\varepsilon,\xi(y)})
 2W_{\varepsilon,\xi(h)}\left(\frac{\partial}{\partial y_{h}}W_{\varepsilon,\xi(h)}\right)\right|_{y=0}d\mu_{g}\\
 &  & +\frac{1}{\varepsilon^{3}}\int_{M}W_{\varepsilon,\xi(h)}^{2}\Psi'(W_{\varepsilon,\xi(y)})
 \left[\left.\frac{\partial}{\partial y_{h}}W_{\varepsilon,\xi(h)}\right|_{y=0}\right]d\mu_{g}
\end{eqnarray*}
We call $I_{1}(\varepsilon,\xi)$ and $I_{2}(\varepsilon,\xi)$ respectively
the first and the second addendum of the above equation.

We recall that (see Section 6 of \cite{MP}) that
\begin{eqnarray}
&&\left.\frac{\partial}{\partial y_{h}}W_{\varepsilon,\xi(h)}\right|_{y=0}  \!\!\!
=
\sum_{k=1}^{3}\left[\frac{1}{\varepsilon}\frac{\partial U(z)}{\partial z_{k}}\chi_{r}(\varepsilon|z|)+
U(z)\frac{\partial\chi_{r}(\varepsilon|z|)}{\partial z_{k}}\right]\label{mp1}
\frac{\partial}{\partial y_{h}}\mathcal{E}_{k}(0,\exp_{\xi}(\varepsilon z))\\
&&\frac{\partial}{\partial y_{h}}\mathcal{E}_{k}(0,\exp_{\xi}(\varepsilon z))  =
\delta_{hk}+O(\varepsilon^{2}|z|^{2})\label{mp2}\\
&&|g_\xi(\varepsilon z)|^{1/2}  = 1-\frac{\varepsilon^{2}}{4}\sum_{i,s,k=1}^{3}
\frac{\partial^{2}g_\xi^{ij}(0)}{\partial z_{s}\partial z_{k}}z_{s}z_{k}+O(\varepsilon^{3}|z|^3)\label{mp3}
\end{eqnarray}
 Using the normal coordinates and the previous estimates we get
\begin{multline*}
\frac{1}{\varepsilon^{2}}I_{1}(\varepsilon,\xi)
=\int_{\mathbb{R}^3}\frac{\tilde{v}_{\varepsilon,\xi}(z)}{\varepsilon^{2}}2U(z)\chi_{r}(\varepsilon|z|)
|g_\xi(\varepsilon z)|^{1/2}\times \\
\times
\left\{ \sum_{k=1}^{3}
\left[
\frac{1}{\varepsilon}\frac{\partial U(z)}{\partial z_{k}}\chi_{r}(\varepsilon|z|)+
U(z)\frac{\partial\chi_{r}(\varepsilon|z|)}{\partial z_{k}}
\right]
\frac{\partial}{\partial y_{h}}\mathcal{E}_{k}(0,\exp_{\xi}(\varepsilon z))\right\}dz.
\end{multline*}
By Lemma \ref{lem:e5} we have that ${\displaystyle \left\{ \frac{1}{\varepsilon^{2}}\tilde{v}_{\varepsilon,\xi}\right\} _{n}}$
converges to $\gamma$ weakly in $L^{6}(\mathbb{R}^{3})$, so we have
\[
I_{1}(\varepsilon,\xi)=2\varepsilon\int_{\mathbb{R}^{3}}\gamma U(z)U'(z)\frac{z_{h}}{|z|}dz+o(\varepsilon^{2}).
\]
Finally, we have that ${\displaystyle \int_{\mathbb{R}^{3}}\gamma(z)U(z)U'(z)\frac{z_{h}}{|z|}dz=0}$
because both $\gamma$ and $U$ are radially symmetric on $z$.

At this point we have to prove the uniform convergence of $I_{1}(\varepsilon,\xi)$ with respect to $\xi\in M$.
We remark that, by (\ref{eq:egamma-sol}) we have, for all $k=1,2,3$,
$\displaystyle-\Delta \frac{\partial}{\partial z_k}\gamma(z)=\frac{\partial}{\partial z_k}U^2(z)$. Thus, by
(\ref{mp1}), (\ref{mp2}), (\ref{mp3}), we get

\begin{eqnarray*}
\frac{1}{\varepsilon^{2}}I_{1}(\varepsilon,\xi)
&=&\frac1q \int_{\mathbb{R}^3}\frac{\tilde{v}_{\varepsilon,\xi}(z)}{\varepsilon^{2}}\chi_{r}(\varepsilon|z|)
\frac{q}{\varepsilon}\frac{\partial U^2(z)}{\partial z_{k}}dz+O(\varepsilon)\\
&=&-\frac1{\varepsilon q} \int_{\mathbb{R}^3}\frac{\tilde{v}_{\varepsilon,\xi}(z)}{\varepsilon^{2}}\chi_{r}(\varepsilon|z|)
\Delta\left( \frac{\partial \gamma(z)}{\partial z_{k}}\right)dz+O(\varepsilon)\\
&=&-\frac1{\varepsilon q} \int_{\mathbb{R}^3}\Delta\left(\frac{\tilde{v}_{\varepsilon,\xi}(z)}{\varepsilon^{2}}\right)
\chi_{r}(\varepsilon|z|)
 \frac{\partial \gamma(z)}{\partial z_{k}}dz\\
 &&+\frac1{q} \int_{\mathbb{R}^3}\nabla\left(\frac{\tilde{v}_{\varepsilon,\xi}(z)}{\varepsilon^{2}}\right)
\chi'_{r}(\varepsilon|z|)\frac{z}{|z|}
 \frac{\partial \gamma(z)}{\partial z_{k}}dz
 +O(\varepsilon)
\end{eqnarray*}
Now we have that
\begin{multline*}
\frac1{q} \left|\int_{\mathbb{R}^3}\nabla\left(\frac{\tilde{v}_{\varepsilon,\xi}(z)}{\varepsilon^{2}}\right)
\chi'_{r}(\varepsilon|z|)\frac{z}{|z|}
\frac{\partial \gamma(z)}{\partial z_{k}}dz\right|
\le \\
\left\|\frac{\tilde{v}_{\varepsilon,\xi}(z)}{\varepsilon^{2}}
\right\|_{D^{1,2}(B(0,r/\varepsilon))}
\left\| \frac{\partial \gamma(z)}{\partial z_{k}}dz
\right\|_{D^{1,2}(\mathbb{R}^3\smallsetminus B(0,r/2\varepsilon))}
\end{multline*}
and the last term vanish uniformly in $\xi$ when $\varepsilon$ goes to zero because
$\displaystyle \frac{\partial \gamma(z)}{\partial z_{k}}$ decays exponentially with respect to $|z|$.

Moreover, arguing as in (\ref{eq:Geps}) and in (\ref{eq:Geps2}) we obtain
\begin{multline*}
-\frac1{\varepsilon q} \int_{\mathbb{R}^3}\Delta\left(\frac{\tilde{v}_{\varepsilon,\xi}(z)}{\varepsilon^{2}}\right)
\chi_{r}(\varepsilon|z|)
 \frac{\partial \gamma(z)}{\partial z_{k}}dz\\
= \frac1{\varepsilon} \int_{\mathbb{R}^3}U^2(z)
 \frac{\partial \gamma(z)}{\partial z_{k}}dz+O(\varepsilon)
= -\frac1{\varepsilon} \int_{\mathbb{R}^3} \frac{\partial }{\partial z_{k}}\left[U^2(z)\right]
\gamma(z)dz+O(\varepsilon)
\end{multline*}
and the last integral is zero because both $U,\gamma$ and $\chi_r$ are radially symmetric.

By Equation (\ref{eq:e1}), Lemma \ref{lem:e1}, and by (\ref{mp1}), (\ref{mp2}), (\ref{mp3}), we have
\begin{eqnarray*}
I_{2}(\varepsilon,\xi) & = &
\frac{1}{q\varepsilon^{3}}\int_{M}
\left\{-\Delta_{g}\Psi(W_{\varepsilon,\xi})+
(1+qW_{\varepsilon,\xi}^{2})\Psi(W_{\varepsilon,\xi})\right\}
\Psi'(W_{\varepsilon,\xi(y)})
\left[\left.\frac{\partial}{\partial y_{h}}W_{\varepsilon,\xi(h)}\right|_{y=0}\right]d\mu_{g}\\
 & = & \frac{1}{q\varepsilon^{3}}\int_{M}\Psi(W_{\varepsilon,\xi})
 \left\{-\Delta_{g}\Psi'(W_{\varepsilon,\xi})[\cdot]
 +(1+qW_{\varepsilon,\xi}^{2})\Psi'(W_{\varepsilon,\xi(y)})\left[\cdot\right]\right\}
 d\mu_{g}\\
 & = & \frac{1}{q\varepsilon^{3}}\int_{M}\Psi(W_{\varepsilon,\xi})
 \left\{2qW_{\varepsilon,\xi}(1-q\Psi(W_{\varepsilon,\xi}))
 \left[\left.\frac{\partial}{\partial y_{h}}W_{\varepsilon,\xi(h)}\right|_{y=0}\right]\right\}
 d\mu_{g}=\\
 & = & 2\int_{\mathbb{R}^{3}}|g(\varepsilon z)|^{1/2}\tilde{v}_{\varepsilon,\xi}(z)U(z)\chi_{r}(\varepsilon|z|)
 (1-q\tilde{v}_{\varepsilon,\xi}(z))\times\\
 &  & \times\left\{ \sum_{k=1}^{3}
 \left[
 \frac{1}{\varepsilon}\frac{\partial U(z)}{\partial z_{k}}\chi_{r}(\varepsilon|z|)+
 U(z)\frac{\partial\chi_{r}(\varepsilon|z|)}{\partial z_{k}}
 \right]
 \frac{\partial}{\partial y_{h}}\mathcal{E}_{k}(0,\exp_{\xi}(\varepsilon z))\right\} dz.
\end{eqnarray*}
Again, by Lemma \ref{lem:e5} and by (\ref{mp1}), (\ref{mp2}), (\ref{mp3}), we have that
\[
I_{2}(\varepsilon,\xi)=2\varepsilon\int_{\mathbb{R}^{3}}\gamma U(z)U'(z)\frac{z_{h}}{|z|}dz+o(\varepsilon^{2})
=o(\varepsilon^{2}).
\]

At this point we have to prove the uniform convergence of $I_{2}(\varepsilon,\xi)$ with respect to $\xi\in M$.
By (\ref{mp1}), (\ref{mp2}) and (\ref{mp3}) we have that
\begin{multline*}
\frac{1}{\varepsilon^{2}}I_{2}(\varepsilon,\xi)
= 2\int_{\mathbb{R}^{3}}\frac{\tilde{v}_{\varepsilon,\xi}(z)}{\varepsilon^2}U(z)\chi_{r}(\varepsilon|z|)
 (1-q\tilde{v}_{\varepsilon,\xi}(z))\frac{1}{\varepsilon}\frac{\partial U(z)}{\partial z_{k}}dz+O(\varepsilon)\\
=\frac{1}{\varepsilon} \int_{\mathbb{R}^{3}}\frac{\tilde{v}_{\varepsilon,\xi}(z)}{\varepsilon^2}\chi_{r}(\varepsilon|z|)
 \frac{\partial U^2(z)}{\partial z_{k}}dz-
\frac{q}{\varepsilon}\int_{\mathbb{R}^{3}}\frac{\tilde{v}^2_{\varepsilon,\xi}(z)}{\varepsilon^2}
 \chi_{r}(\varepsilon|z|)
\frac{\partial U^2(z)}{\partial z_{k}}dz+O(\varepsilon)
 \end{multline*}
and the last integral vanishes when $\varepsilon$ goes to zero,
because $\|\tilde{v}_{\varepsilon,\xi}\|_{L^6}\le \varepsilon^2$ uniformly with respect to $\xi$.
Thus
\begin{equation*}
\frac{1}{\varepsilon^{2}}I_{2}(\varepsilon,\xi)=
\frac{1}{\varepsilon} \int_{\mathbb{R}^{3}}\frac{\tilde{v}_{\varepsilon,\xi}(z)}{\varepsilon^2}\chi_{r}(\varepsilon|z|)
 \frac{\partial U^2(z)}{\partial z_{k}}dz+o(1)
\end{equation*}
and we can conclude exactly as for $I_1(\varepsilon,\xi)$.
\end{proof}
\begin{rem}
For every $\varphi\in H_{g}^{1}$ we have
\begin{eqnarray*}
\|\varphi\|_{H_{g}^{1}}^{2} & = &
\int_{M}|\nabla_{g}\varphi|^{2}+\varphi^{2}d\mu_{g}\le\int_{M}|\nabla_{g}\varphi|^{2}
+\frac{1}{\varepsilon}\varphi^{2}d\mu_{g}=\\
 & = & \varepsilon\int_{M}\frac{1}{\varepsilon}|\nabla_{g}\varphi|^{2}
 +\frac{1}{\varepsilon^{2}}\varphi^{2}d\mu_{g}=\varepsilon\|\varphi\|_{\varepsilon}^{2}
\end{eqnarray*}
\end{rem}

\begin{proof}[Proof of Proposition \ref{espa}]
It follows from Lemma \ref{fine4}, Lemma \ref{l5} and Lemma \ref{l5bis}.\end{proof}

\appendix

\section{Some  key estimates}

\begin{rem}
\label{remark:Weps}The following limits hold uniformly with respect
to $q\in M$.
\[
\lim_{\varepsilon\rightarrow0}\frac{1}{\varepsilon^{3}}\left|W_{\varepsilon,q}\right|_{p,g}^{p}=|U|_{p}^{p},\ \ \
2\le p\le 2^*
\]
\[
\lim_{\varepsilon\rightarrow0}\frac{1}{\varepsilon}\left|\nabla_{g}W_{\varepsilon,q}\right|_{2,g}^{2}=|\nabla U|_{2}^{2}
\]
\end{rem}
\begin{lem}
\label{lem:e3}For any $\varphi\in H_{g}^{1}(M)$ and for all $\xi\in M$
it holds
\begin{eqnarray*}
\|\Psi(W_{\varepsilon,\xi}+\varphi)\|_{H_{g}^{1}} & \le & c_{1}\varepsilon^{5/2}(1+\|\varphi\|_{\varepsilon}^{2})\\
\|\Psi(W_{\varepsilon,\xi}+\varphi)\|_{H_{g}^{1}} & \le & c_{1}(\varepsilon^{5/2}+\|\varphi\|_{H_{g}^{1}}^{2})\\
\|\Psi(W_{\varepsilon,\xi}+\varphi)\|_{L^{\infty}} & \le & c_{2}(\varepsilon^{3/2}+\|\varphi\|_{H_{g}^{1}}^{2})
\end{eqnarray*}
where $c_{1}$ and $c_{2}$ are constants non depending on $\xi$
and $\varepsilon$.\end{lem}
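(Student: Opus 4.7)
The plan is to test the defining equation \eqref{eq:e1} for $\Psi(u)$ against $\Psi(u)$ itself (with $u:=W_{\eps,\xi}+\varphi$), read off an $H^1_g$-bound in terms of $|u|_{12/5,g}^2$, and split the $L^{12/5}$-content of $u$ into its two contributions. The $L^\infty$-bound is obtained by applying elliptic regularity to the same equation, combined with the a priori bound \eqref{psipos}.

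More precisely, for the first two inequalities I would multiply \eqref{eq:e1} by $\Psi(u)$, integrate, and drop the non-negative term $q^2\int u^2\Psi(u)^2$, to get
$$\|\Psi(u)\|_{H^1_g}^2\le q\int_M u^2\Psi(u)\,d\mu_g.$$
H\"older with exponents $6/5$ and $6$ combined with the Sobolev embedding $H^1_g(M)\hookrightarrow L^6(M)$ bounds the right-hand side by $c\,|u|_{12/5,g}^2\,\|\Psi(u)\|_{H^1_g}$, whence $\|\Psi(u)\|_{H^1_g}\le c\,|u|_{12/5,g}^2$. Splitting $u=W_{\eps,\xi}+\varphi$ and invoking Remark \ref{remark:Weps} with $p=12/5$ gives $|W_{\eps,\xi}|_{12/5,g}^2=O(\eps^{5/2})$ uniformly in $\xi$. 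For the $\varphi$-contribution I have two options: the scaling identity $|\varphi|_{12/5,g}^{12/5}=\eps^3|\varphi|_{12/5,\eps}^{12/5}$ together with \eqref{r2} produces $|\varphi|_{12/5,g}^2\le c\,\eps^{5/2}\|\varphi\|_\eps^2$, which yields the first stated bound; the plain Sobolev embedding $H^1_g\hookrightarrow L^{12/5}$ gives $|\varphi|_{12/5,g}^2\le c\,\|\varphi\|_{H^1_g}^2$, which yields the second stated bound.

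For the $L^\infty$-bound I would rewrite \eqref{eq:e1} as $-\Delta_g\Psi(u)+\Psi(u)=qu^2(1-q\Psi(u))$ and use \eqref{psipos} to control the right-hand side pointwise by $qu^2$. Standard $L^2$-elliptic regularity for $-\Delta_g+1$ on the compact manifold $M$ then gives $\|\Psi(u)\|_{W^{2,2}_g}\le c\,|u|_{4,g}^2$, and since $W^{2,2}_g(M)\hookrightarrow L^\infty(M)$ in dimension three I obtain $\|\Psi(u)\|_{L^\infty}\le c\,|u|_{4,g}^2$. Splitting $u$ once more, Remark \ref{remark:Weps} with $p=4$ gives $|W_{\eps,\xi}|_{4,g}^2=O(\eps^{3/2})$ uniformly in $\xi$, while the Sobolev embedding $H^1_g\hookrightarrow L^4$ on $\varphi$ delivers $|\varphi|_{4,g}^2\le c\,\|\varphi\|_{H^1_g}^2$, which together give the third bound.

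The main obstacle is purely one of bookkeeping: recovering the precise power $\eps^{5/2}$ in the first bound requires passing from $|\cdot|_{12/5,g}$ through the scaled norm $|\cdot|_{12/5,\eps}$ to $\|\cdot\|_\eps$ via \eqref{r2}, whereas the second bound, phrased in terms of the genuine $H^1_g$-norm of $\varphi$, follows from the ordinary Sobolev embedding without any rescaling. Uniformity of all constants in $\xi\in M$ is automatic from compactness of $M$.
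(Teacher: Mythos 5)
Your proposal follows the paper's own proof essentially line by line: test \eqref{eq:e1} against $\Psi(u)$, drop the nonnegative $q^2\int u^2\Psi(u)^2$ term, apply H\"older with exponents $6$ and $6/5$ together with $H^1_g\hookrightarrow L^6$, split $u=W_{\eps,\xi}+\varphi$ and invoke Remark \ref{remark:Weps} for the $W$-part, with the two alternative treatments of the $\varphi$-part via \eqref{r2} or the ordinary Sobolev embedding, and then use $L^2$-elliptic regularity plus \eqref{psipos} and $H^2_g\hookrightarrow L^\infty$ for the sup-norm bound. The only cosmetic difference is that you drop the $\|v\|_{H^1_g}$ term in the $H^2$-estimate directly, whereas the paper carries it along and absorbs it, but the argument is the same.
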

\begin{proof}
To simplify the notations we set $v=\Psi(W_{\varepsilon,\xi}+\varphi)$.
By (\ref{eq:e1}) we have
\begin{eqnarray*}
\|v\|_{H_{q}^{1}}^{2} & \le & \int_{M}|\nabla_{g}v|^{2}+v^{2}+q^{2}(W_{\varepsilon,\xi}+\varphi)^{2}v^{2}=q\int(W_{\varepsilon,\xi}+\varphi)^{2}v\le\\
 & \le & \left(\int_{M}v^{6}\right)^{1/6}\left(\int_{M}(W_{\varepsilon,\xi}+\varphi)^{12/5}\right)^{5/6}\le c\|v\|_{H_{g}^{1}}\|W_{\varepsilon,\xi}+\varphi\|_{L_{g}^{12/5}}^{2}\le\\
 & \le & c\|v\|_{H_{g}^{1}}\left(\|W_{\varepsilon,\xi}\|_{L_{g}^{12/5}}^{2}+\|\varphi\|_{L_{g}^{12/5}}^{2}\right)
\end{eqnarray*}
We recall (see Remark \ref{remark:Weps}) that
\begin{equation}
\lim_{\varepsilon\rightarrow0}\frac{1}{\varepsilon^{3}}\left|W_{\varepsilon,\xi}\right|_{t}^{t}=|U|_{t}^{t}\text{ uniformly w.r.t. }\xi\in M.\label{eq:e5}
\end{equation}
Then we have

\begin{equation}
\|v\|_{H_{g}^{1}}\le  c(\varepsilon^{5/2}+|\varphi|_{12/5,g}^{2})
\le  c(\varepsilon^{5/2}+\|\varphi\|_{H^1_g}^{2}).\label{eq:e6bis}
\end{equation}
Also,
\begin{equation}
\|v\|_{H_{g}^{1}}\le c\varepsilon^{5/2}(1+|\varphi|_{12/5,\varepsilon}^{2})\le
c\varepsilon^{5/2}\left(1+\|\varphi\|_{\varepsilon}^{2}\right).\label{eq:e6}
\end{equation}

By (\ref{eq:e1}) and by standard regularity theory (see \cite[Th. 8.8]{GT}),
we have that $v\in H_{g}^{2}$, and that $\|v\|_{H_{g}^{2}}\le\|v\|_{H_{g}^{1}}+\|(W_{\varepsilon,\xi}+\varphi)^{2}(1-qv)\|_{L_{g}^{2}}$.
By Sobolev embedding and by (\ref{psipos}) and (\ref{eq:e6}) we get
\begin{eqnarray}
\|v\|_{L^{\infty}} & \le & c\|v\|_{H_{g}^{2}}\le
c\left\{ \|v\|_{H_{g}^{1}}+\|(W_{\varepsilon,\xi}+\varphi)^{2}(1-qv)\|_{L^{2}}\right\} \le\nonumber \\
 & \le & c\left\{ \|v\|_{H_{g}^{1}}+\|W_{\varepsilon,\xi}\|_{L^{4}}^{2}+\|\varphi\|_{L^{4}}^{2}\right\} \le\nonumber \\
 & \le & c\left\{ \varepsilon^{3/2}+\|\varphi\|_{H_{g}^{1}}^{2}\right\} \label{eq:e7}
\end{eqnarray}
\end{proof}
\begin{lem}
\label{lem:e7}For any $\xi\in M$ and $h,k\in H_{g}^{1}$
it holds
\[
\|\Psi'(W_{\varepsilon,\xi}+k)[h]\|_{H_{g}^{1}}\le c\left\{ \varepsilon^{2}\|h\|_{H_{g}^{1}}+\|h\|_{H_{g}^{1}}\|k\|_{H_{g}^{1}}\right\}
\]
where the constant $c$ does not depend on $\xi$ and $\varepsilon$.\end{lem}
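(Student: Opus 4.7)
The plan is to exploit the linear elliptic equation satisfied by $V:=\Psi'(W_{\varepsilon,\xi}+k)[h]$ given in Lemma \ref{lem:e1}, namely
\[
-\Delta_g V + V + q^{2}(W_{\varepsilon,\xi}+k)^{2}V \;=\; 2q(W_{\varepsilon,\xi}+k)\bigl(1-q\Psi(W_{\varepsilon,\xi}+k)\bigr)h.
\]
Testing against $V$ itself and discarding the nonnegative term $q^{2}(W_{\varepsilon,\xi}+k)^{2}V^{2}$ on the left, combined with the $L^{\infty}$-bound $0\le 1-q\Psi(\cdot)\le 1$ from \eqref{psipos}, immediately gives
\[
\|V\|_{H^{1}_{g}}^{2}\;\le\; c\int_{M}|W_{\varepsilon,\xi}+k|\,|h|\,|V|\,d\mu_{g}.
\]
So the whole proof reduces to bounding this cubic integral by $c\bigl(\varepsilon^{2}\|h\|_{H^{1}_{g}}+\|k\|_{H^{1}_{g}}\|h\|_{H^{1}_{g}}\bigr)\|V\|_{H^{1}_{g}}$, after which dividing through by $\|V\|_{H^{1}_{g}}$ yields the claim.

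To do this I would split the integral into the $W_{\varepsilon,\xi}$-piece and the $k$-piece, and apply H\"older with different exponent triples to each in order to harvest the concentration gain $\varepsilon^{2}$ from $W_{\varepsilon,\xi}$ without paying anything extra for $k$. For the first piece, H\"older with exponents $(3/2,6,6)$ together with the Sobolev embedding $H^{1}_{g}\hookrightarrow L^{6}_{g}$ produces
\[
\int_{M}|W_{\varepsilon,\xi}|\,|h|\,|V|\,d\mu_g\;\le\;|W_{\varepsilon,\xi}|_{3/2,g}\,|h|_{6,g}\,|V|_{6,g}\;\le\;c\,|W_{\varepsilon,\xi}|_{3/2,g}\,\|h\|_{H^{1}_{g}}\|V\|_{H^{1}_{g}}.
\]
For the second piece the triple $(6,2,3)$, together with $H^{1}_{g}\hookrightarrow L^{6}_{g}$ and $H^{1}_{g}\hookrightarrow L^{3}_{g}$, yields
\[
\int_{M}|k|\,|h|\,|V|\,d\mu_g\;\le\;|k|_{6,g}\,|h|_{2,g}\,|V|_{3,g}\;\le\;c\,\|k\|_{H^{1}_{g}}\|h\|_{H^{1}_{g}}\|V\|_{H^{1}_{g}}.
\]

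The only point requiring a little care is the scaling estimate $|W_{\varepsilon,\xi}|_{3/2,g}=O(\varepsilon^{2})$, which is technically outside the range $p\ge 2$ stated in Remark \ref{remark:Weps}. However, it follows in one line from the change of variables $y=\varepsilon z$ in normal coordinates: writing
\[
|W_{\varepsilon,\xi}|_{3/2,g}^{3/2}\;=\;\varepsilon^{3}\!\int_{B(0,r/\varepsilon)}\bigl|U(z)\chi_{r}(\varepsilon z)\bigr|^{3/2}|g_{\xi}(\varepsilon z)|^{1/2}\,dz,
\]
and using that $U\in L^{3/2}(\mathbb{R}^{3})$ by the exponential decay \eqref{pll} together with the uniform bound on $|g_{\xi}|^{1/2}$, we get $|W_{\varepsilon,\xi}|_{3/2,g}^{3/2}=O(\varepsilon^{3})$ uniformly in $\xi\in M$, as required. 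Once this lemma-within-a-lemma is noted, the remainder of the argument is a routine energy estimate.
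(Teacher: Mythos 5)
Your proof is essentially identical to the paper's: both test the linearized equation \eqref{eq:e2} for $V=\Psi'(W_{\varepsilon,\xi}+k)[h]$ against $V$, discard the nonnegative term $q^{2}(W_{\varepsilon,\xi}+k)^{2}V^{2}$, bound $1-q\Psi$ by $1$, split the resulting cubic integral into a $W_{\varepsilon,\xi}$-part and a $k$-part, and close via H\"older and the scaling bound $\|W_{\varepsilon,\xi}\|_{L_{g}^{3/2}}=O(\varepsilon^{2})$. The only (inessential) deviation is the H\"older triple for the $k$-part ($(6,2,3)$ rather than the paper's $(3,3,3)$), and your explicit one-line verification of the $L^{3/2}$ scaling is a helpful addition, since the paper uses that bound without comment and it indeed falls outside the range stated in Remark~\ref{remark:Weps}.
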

\begin{proof}
We have, by (\ref{psipos}) and by (\ref{eq:e2})
\begin{eqnarray*}
\|\Psi'(W_{\varepsilon,\xi}+k)[h]\|_{H_{g}^{1}}^{2} & = &
2q\int_{M}(W_{\varepsilon,\xi}+k)(1-q\Psi(W_{\varepsilon,\xi}+k))h\Psi'(W_{\varepsilon,\xi}+k)[h]-\\
 &  & -q^{2}\int_{M}(W_{\varepsilon,\xi}+k)^{2}(\Psi'(W_{\varepsilon,\xi}+k)[h])^{2}\le\\
 & \le & \int_{M}W_{\varepsilon,\xi}|h|\left|\Psi'(W_{\varepsilon,\xi}+k)[h]\right|+
 \int_{M}|k||h|\left|\Psi'(W_{\varepsilon,\xi}+k)[h]\right|
\end{eqnarray*}
We call each integral term respectively $I_{1},I_{2,}$
and we estimate each term separately. We have
\[
I_{1}\le\|\Psi'(W_{\varepsilon,\xi}+k)[h]\|_{L_{g}^{6}}\|h\|_{L_{g}^{6}}\|W_{\varepsilon,\xi}\|_{L_{g}^{3/2}}\le\varepsilon^{2}\|\Psi'\|_{H_{g}^{1}}\|h\|_{H_{g}^{1}}
\]
\[
I_{2}\le\|k\|_{L_{g}^{3}}\|h\|_{L_{g}^{3}}\|\Psi'(W_{\varepsilon,\xi}+k)[h]\|_{L_{g}^{3}}\le\|k\|_{H_{g}^{1}}\|h\|_{H_{g}^{1}}\|\Psi'\|_{H_{g}^{1}}
\]
that is our claim.\end{proof}
\begin{lem}
\label{lem:e5}Let us consider the functions
\[
\tilde{v}_{\varepsilon,\xi}(z)=\left\{ \begin{array}{cl}
\Psi(W_{\varepsilon,\xi})\left(\exp_{\xi}(\varepsilon z)\right) & \text{ for }z\in B(0,r/\varepsilon)\\
\\
0 & \text{ for }z\in\mathbb{R}^{3}\smallsetminus B(0,r/\varepsilon)
\end{array}\right.
\]
Then there exists a constant $c>0$ such that
\[
\|\tilde{v}_{\varepsilon,\xi}(z)\|_{L^{6}(\mathbb{R}^{3})}\le c\varepsilon^{2}.
\]
Furthermore, up to subsequences, ${\displaystyle \left\{ \frac{1}{\varepsilon^{2}}\tilde{v}_{\varepsilon,\xi}\right\} _{\varepsilon}}$
converges weakly in $L^{6}(\mathbb{R}^{3})$ as $\varepsilon$ goes
to $0$ to a function $\gamma\in D^{1,2}(\mathbb{R}^{3})$.
The function $\gamma$ solves, in a weak
sense, the equation
\begin{equation}
-\Delta\gamma=qU^{2}\text{ in }\mathbb{R}^{3}\label{eq:egamma-sol}
\end{equation}
\end{lem}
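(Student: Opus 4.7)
The plan is to proceed in four steps: establish the global $L^6$ bound on $\tilde v_{\varepsilon,\xi}$; derive the rescaled PDE satisfied by $w_\varepsilon:=\tilde v_{\varepsilon,\xi}/\varepsilon^2$ on $B(0,r/\varepsilon)$; extract a weak $L^6$ limit; and pass to the limit against compactly supported test functions, deducing the limiting equation and the $D^{1,2}$ regularity.

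For the $L^6$ bound, I would apply Lemma \ref{lem:e3} with $\varphi=0$ to get $\|\Psi(W_{\varepsilon,\xi})\|_{H^1_g}\le c\varepsilon^{5/2}$, use the Sobolev embedding $H^1_g(M)\hookrightarrow L^6_g(M)$ on the compact $3$-manifold $M$ to obtain $|\Psi(W_{\varepsilon,\xi})|_{6,g}\le C\varepsilon^{5/2}$, and then change variables $x=\exp_\xi(\varepsilon z)$, whose Jacobian is $\varepsilon^3|g_\xi(\varepsilon z)|^{1/2}$, to convert this into $\|\tilde v_{\varepsilon,\xi}\|_{L^6(\mathbb R^3)}\le c\varepsilon^2$. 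This gives the first assertion and shows that $\{w_\varepsilon\}$ is bounded in $L^6(\mathbb R^3)$, so by reflexivity a subsequence converges weakly to some $\gamma\in L^6(\mathbb R^3)$.

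To identify $\gamma$, I would write the Benci--Fortunato equation \eqref{eq:e1} in normal coordinates at $\xi$, substitute $y=\varepsilon z$, and rescale; a routine computation shows that on $B(0,r/\varepsilon)$ the function $w_\varepsilon$ satisfies the divergence-form equation
\[
-\frac{1}{|g_\xi(\varepsilon z)|^{1/2}}\partial_i\!\left(|g_\xi(\varepsilon z)|^{1/2}g_\xi^{ij}(\varepsilon z)\partial_j w_\varepsilon\right)+\varepsilon^2\!\left[1+q^2U^2\chi_r^2(\varepsilon z)\right]w_\varepsilon=qU^2\chi_r^2(\varepsilon z).
\]
Fixing $\varphi\in C_c^\infty(\mathbb R^3)$ and taking $\varepsilon$ so small that $\mathrm{supp}\,\varphi\subset B(0,r/(2\varepsilon))$, I would test against $\varphi$ and move all derivatives onto $\varphi$ by integration by parts; there are no boundary contributions since $\varphi$ has compact support strictly inside $B(0,r/\varepsilon)$ and $\chi_r(\varepsilon z)\equiv1$ there. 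By \eqref{f2} the coefficients $|g_\xi(\varepsilon z)|^{1/2}g_\xi^{ij}(\varepsilon z)$ tend to $\delta_{ij}$ in $C^1_{\mathrm{loc}}$, so the transported second-order operator applied to $\varphi$ converges uniformly on $\mathrm{supp}\,\varphi$ to $\Delta\varphi$. Combining weak $L^6$-convergence of $w_\varepsilon$ with the uniform convergence of these smooth coefficients, and using that the $\varepsilon^2$-term clearly vanishes, yields in the limit $-\int\gamma\,\Delta\varphi=q\int U^2\varphi$, i.e. $-\Delta\gamma=qU^2$ in $\mathcal D'(\mathbb R^3)$. Since $qU^2\in L^{6/5}(\mathbb R^3)$, standard elliptic theory (uniqueness of the $D^{1,2}$ solution and vanishing of $L^6$ harmonic functions) then forces $\gamma\in D^{1,2}(\mathbb R^3)$.

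The main obstacle I anticipate is the careful bookkeeping in the change of coordinates: passing from a $g$-intrinsic equation on $M$ to a Euclidean PDE in $\mathbb R^3$ with $\varepsilon$-dependent coefficients, and extracting a clean limiting equation valid on all of $\mathbb R^3$ even though $w_\varepsilon$ is defined only on $B(0,r/\varepsilon)$ and the extension by zero need not sit in $D^{1,2}(\mathbb R^3)$. Compact support of the test functions, the expansion \eqref{f2} of the metric at $\xi$, and the exponential decay of $U$ are precisely what render all these corrections negligible on any fixed compact set.
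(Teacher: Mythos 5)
Your proposal is correct, but it takes a genuinely different route from the paper's in two places. First, for the $L^6$ bound: you deduce it from Lemma~\ref{lem:e3} (the $H^1_g$ estimate $\|\Psi(W_{\varepsilon,\xi})\|_{H^1_g}\le c\varepsilon^{5/2}$) followed by Sobolev embedding and the change of variables $x=\exp_\xi(\varepsilon z)$, whereas the paper obtains it directly by an energy estimate on the rescaled PDE (testing \eqref{eq:e11-1} against $\tilde v_{\varepsilon,\xi}$ itself, which yields $\|\tilde v_{\varepsilon,\xi}\|_{D^{1,2}(B(0,r/\varepsilon))}\le C\varepsilon^2$ and hence the $L^6$ bound via Sobolev in $\mathbb R^3$). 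Second, and more substantively, for the identification of the limit: you pass to the limit in the \emph{very weak} (double integration by parts) formulation, using only the weak $L^6$-convergence of $w_\varepsilon=\tilde v_{\varepsilon,\xi}/\varepsilon^2$ together with $C^1_{\rm loc}$-convergence of the coefficients $|g_\xi(\varepsilon z)|^{1/2}g_\xi^{ij}(\varepsilon z)\to\delta_{ij}$, and then recover $\gamma\in D^{1,2}(\mathbb R^3)$ a posteriori by uniqueness of the $D^{1,2}$-solution of $-\Delta w=qU^2$ and a Liouville argument for $L^6$-harmonic functions. The paper instead keeps one derivative on each factor, which requires the extra ingredients it builds: the auxiliary cutoff function $v_{\varepsilon,\xi}=\tilde v_{\varepsilon,\xi}\chi_r(\varepsilon|z|)$, the $L^2$ bound $\|\tilde v_{\varepsilon,\xi}\|_{L^2}\le C\varepsilon$, and the resulting bound $\|v_{\varepsilon,\xi}/\varepsilon^2\|_{D^{1,2}(\mathbb R^3)}\le c$, from which it extracts a weak $D^{1,2}$ limit and obtains $\gamma\in D^{1,2}$ directly. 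Your route is slightly more elementary for the purposes of this lemma alone (no cutoff auxiliary function, no $L^2$ estimate); the paper's route has the advantage of establishing the $D^{1,2}$ bound on $\tilde v_{\varepsilon,\xi}/\varepsilon^2$ explicitly, which is reused in the $C^1$-estimate of Lemma~\ref{l5bis}.
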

\begin{proof}
By definition of $\tilde{v}_{\varepsilon,\xi}(z)$ and by (\ref{eq:e1})
we have, for all $z\in B(0,r/\varepsilon)$,
\begin{multline}
-\sum_{ij}\partial_{j}\left(|g_{\xi}(\varepsilon z)|^{1/2}g_{\xi}^{ij}(\varepsilon z)
\partial_{i}\tilde{v}_{\varepsilon,\xi}(z)\right)=\\
=\varepsilon^{2}|g_{\xi}(\varepsilon z)|^{1/2}\left\{ qU^{2}(z)\chi_{r}^{2}(\varepsilon |z|)-\left[1+q^{2}U^{2}(z)
\chi_{r}^{2}(\varepsilon |z|)\right]\tilde{v}_{\varepsilon,\xi}(z)\right\} \label{eq:e11-1}
\end{multline}
By (\ref{eq:e11-1}), and remarking that $\tilde{v}_{\varepsilon,\xi}(z)\ge0$
we have
\begin{multline}
\|\tilde{v}_{\varepsilon,\xi}(z)\|_{D^{1,2}\left(B(0,r/\varepsilon)\right)}^{2}
\le C\int_{B(0,r/\varepsilon)}|g_{\xi}(\varepsilon z)|^{1/2}g_{\xi}^{ij}(\varepsilon z)
\partial_{i}\tilde{v}_{\varepsilon,\xi}(z)\partial_{j}\tilde{v}_{\varepsilon,\xi}(z)dz=\\
=C\varepsilon^{2}\int_{B(0,r/\varepsilon)}|g_{\xi}(\varepsilon z)|^{1/2}
\left\{ qU^{2}(z)\chi_{r}^{2}(\varepsilon |z|)\tilde{v}_{\varepsilon,\xi}(z)-
\left[1+q^{2}U^{2}(z)\chi_{r}^{2}(\varepsilon |z|)\right]\tilde{v}_{\varepsilon,\xi}^2(z)\right\} dz\le\\
\le C\varepsilon^{2}\int_{B(0,r/\varepsilon)}|g_{\xi}(\varepsilon z)|^{1/2}
qU^{2}(z)\chi_{r}^{2}(\varepsilon |z|)\tilde{v}_{\varepsilon,\xi}(z)dz\le\\
\le C\varepsilon^{2}q
\|\tilde{v}_{\varepsilon,\xi}(z)\|_{L^{6}\left(B(0,r/\varepsilon)\right)}
\|U\|_{L^{12/5}}^{2}
\le C\varepsilon^{2}\|\tilde{v}_{\varepsilon,\xi}(z)\|_{D^{1,2}\left(B(0,r/\varepsilon)\right)}\label{eq:e12}
\end{multline}
Thus we have
\begin{equation}
\|\tilde{v}_{\varepsilon,\xi}(z)\|_{D^{1,2}\left(B(0,r/\varepsilon)\right)}
\le C\varepsilon^{2}\text{ and }\|\tilde{v}_{\varepsilon,\xi}(z)\|_{L^{6}(\mathbb{R}^{3})}
\le C\varepsilon^{2}.\label{eq:e13}
\end{equation}
By (\ref{eq:e13}), if $\varepsilon_{n}$ is a sequence which goes
to zero, the sequence $\left\{ \frac{1}{\varepsilon_{n}^{2}}\tilde{v}_{\varepsilon_{n},\xi}\right\} _{n}$
is bounded in $L^{6}(\mathbb{R}^{3})$. Then, up to subsequence,
$\left\{ \frac{1}{\varepsilon_{n}^{2}}\tilde{v}_{\varepsilon_{n},\xi}\right\} _{n}$
converges to some $\tilde{\gamma}\in L^{6}(\mathbb{R}^{3})$ weakly
in $L^{6}(\mathbb{R}^{3})$.
We have also that $\|\tilde{v}_{\varepsilon,\xi}\|_{L^{2}(\mathbb{R}^{3})}\le C\varepsilon$.
In fact, by Holder inequality
\[
\int_{B(0,r/\varepsilon)}\tilde{v}_{\varepsilon,\xi}^{2}
\le\left(\int_{B(0,r/\varepsilon)}\tilde{v}_{\varepsilon,\xi}^{6}\right)^{1/3}
\left(\int_{B(0,r/\varepsilon)}1\right)^{2/3}
\le C\varepsilon^{4}\left(\frac{r^{3}}{\varepsilon^{3}}\right)^{2/3}\le C\varepsilon^{2}.
\]

Moreover, by (\ref{eq:e11-1}), for any $\varphi\in C_{0}^{\infty}(\mathbb{R}^{3})$,
it holds
\begin{multline}
\int_{\text{supp }\varphi}\sum_{ij}| g_{\xi}(\varepsilon z) |^{1/2}g_{\xi}^{ij}(\varepsilon z)
\partial_{i}\frac{\tilde{v}_{\varepsilon,\xi}(z)}{\varepsilon_{n}^{2}}\partial_{j}\varphi(z)dz=\\
\int_{\text{supp }\varphi}
\left\{ qU^{2}(z)\chi_{r}^{2}(\varepsilon |z|)-
\left[1+q^{2}U^{2}(z)\chi_{r}^{2}(\varepsilon |z|)\right]
\tilde{v}_{\varepsilon,\xi}(z)\right\} |g_{\xi}(\varepsilon z)|^{1/2}\varphi(z)dz.\label{eq:213bis}
\end{multline}
Consider now the functions
\[
v_{\varepsilon,\xi}(z):=\Psi(W_{\varepsilon,\xi})\left(\exp_{\xi}(\varepsilon z)\right)
\chi_{r}(\varepsilon |z|)=\tilde{v}_{\varepsilon,\xi}(z)\chi_{r}(\varepsilon |z|)\text{ for }z\in\mathbb{R}^{3}.
\]
We have that
\begin{eqnarray*}
\|v_{\varepsilon,\xi}(z)\|_{D^{1,2}(\mathbb{R}^{3})}^{2} & = &
\int|\nabla v_{\varepsilon,\xi}|^{2}dz\le2\int\chi_{r}^{2}(\varepsilon |z|)|\nabla\tilde{v}_{\varepsilon,\xi}(z)|^{2}
+\varepsilon^{2}|\chi'_{r}(\varepsilon |z|)|^{2}\tilde{v}_{\varepsilon,\xi}(z)^{2}dz\le\\
 & \le & c\left(\|\tilde{v}_{\varepsilon,\xi}(z)\|_{D^{1,2}\left(B(0,r/\varepsilon)\right)}^{2}
 +\varepsilon^{2}\|\tilde{v}_{\varepsilon,\xi}(z)\|_{L^{2}(\mathbb{R}^{3})}^{2}\right)\le c\varepsilon^{4}.
\end{eqnarray*}
Thus the sequence $\left\{ \frac{1}{\varepsilon_{n}^{2}}v_{\varepsilon_{n},\xi}\right\} _{n}$
converges to some $\gamma\in D^{1,2}(\mathbb{R}^{3})$ weakly in $D^{1,2}(\mathbb{R}^{3})$
and in $L^{6}(\mathbb{R}^{3})$.

For any compact set $K\subset\mathbb{R}^{3}$ eventually $v_{\varepsilon_{n},\xi}\equiv\tilde{v}_{\varepsilon_{n},\xi}$
on $K$. So it is easy to see that $\tilde{\gamma}=\gamma$.

We recall the Taylor expansions
\begin{eqnarray}
|g_{\xi}(\varepsilon z)|^{1/2}=1+O(\varepsilon^{2}|z|^2), & \text{ and } &
g_{\xi}^{ij}(\varepsilon z)=\delta_{ij}+O(\varepsilon^{2}|z|^2),\label{eq:e14}
\end{eqnarray}
so, by (\ref{eq:e14}), and by the weak convergence of
$\left\{ \frac{1}{\varepsilon_{n}^{2}}v_{\varepsilon_{n},\xi}\right\} _{n}$
in $D^{1,2}(\mathbb{R}^{3})$, for any $\varphi\in C_{0}^{\infty}(\mathbb{R}^{3})$
we get
\begin{multline}
\int_{\text{supp }\varphi}\sum_{ij}|g_{\xi}(\varepsilon_{n}z)|^{1/2}
g_{\xi}^{ij}(\varepsilon_{n}z)
\partial_{i}\frac{\tilde{v}_{\varepsilon_{n},\xi}(z)}{\varepsilon_{n}^{2}}\partial_{j}\varphi(z)dz\\
=\int_{\text{supp }\varphi}\sum_{ij}|g_{\xi}(\varepsilon_{n}z)|^{1/2}g_{\xi}^{ij}(\varepsilon_{n}z)
\partial_{i}\frac{v_{\varepsilon_{n},\xi}(z)}{\varepsilon_{n}^{2}}\partial_{j}\varphi(z)dz\\
\rightarrow\int_{\mathbb{R}^{3}}\sum_{i}\partial_{i}\gamma(z)\partial_{i}\varphi(z)dz
\text{ as }n\rightarrow\infty.\label{eq:e14bis}
\end{multline}
Thus by (\ref{eq:213bis}) and by (\ref{eq:e14bis}) and because
$\left\{ \frac{1}{\varepsilon_{n}^{2}}\tilde{v}_{\varepsilon_{n},\xi}\right\} _{n}$
converges to $\gamma$ weakly in $L^{6}(\mathbb{R}^{3})$ we get
\[
\int_{\mathbb{R}^{3}}\sum_{i}\partial_{i}\gamma(z)\partial_{i}\varphi(z)dz=
q\int_{\mathbb{R}^{3}}U^{2}(z)\varphi(z)dz\text{ for all }\varphi\in C_{0}^{\infty}(\mathbb{R}^{3}).
\]
Thus, up to subsequences,
${\displaystyle \left\{ \frac{1}{\varepsilon_{n}^{2}}\tilde{v}_{\varepsilon_{n},\xi}\right\} _{n}}$
converges to $\gamma$, weakly in $L^{6}(\mathbb{R}^{3})$
and the function $\gamma\in D^{1,2}(\mathbb{R}^{3})$ is
a weak solution of $-\Delta\gamma=qU^{2}$ in $\mathbb{R}^{3}$.\end{proof}
\begin{rem}
We remark that $\gamma$ is positive radially symmetric and decays exponentially
at infinity with its first derivative because it solves
$\displaystyle -\Delta\gamma=qU^2$ in $\mathbb{R}^3$.
\end{rem}

\end{document}